\documentclass[runningheads]{amsart}

\usepackage{amsmath,amssymb,amsthm}
\usepackage{xspace}
\usepackage[disable]{todonotes}
\usepackage{virginialake}
\usepackage[colorlinks]{hyperref}
\usepackage[capitalise]{cleveref}
\usetikzlibrary{arrows}

\begin{document}

\renewcommand{\phi}{\varphi}
\renewcommand{\emptyset}{\varnothing}
\renewcommand{\epsilon}{\varepsilon}
\newcommand{\defname}[1]{\textbf{{#1}}}
\newcommand{\df}{:=}
\newcommand{\bnf}{::=}
\newcommand{\Pow}{\mathcal P}
\newcommand{\pow}[1]{\Pow(#1)}

\newcommand{\IH}{\mathit{IH}}

\newcommand{\Nat}{\mathbb{N}}
\newcommand{\Int}{\mathbb{Z}}
\newcommand{\Cal}[1]{\mathcal{#1}}
\newcommand{\SF}[1]{\mathsf{#1}}
\newcommand{\resp}{\textit{resp.}\xspace}
\newcommand{\aka}{\textit{aka}\xspace}
\newcommand{\viz}{\textit{viz.}\xspace}
\newcommand{\ie}{\textit{i.e.}\xspace}
\newcommand{\cf}{\textit{cf.}\xspace}
\newcommand{\wrt}{\textit{wrt}\xspace}
\newcommand{\rk}[1]{\SF{rk}(#1)}

\newcommand{\mgt}[1]{\textcolor{magenta}{#1}}
\newcommand{\cyn}[1]{\textcolor{cyan}{#1}}
\newcommand{\olv}[1]{\textcolor{olive}{#1}}
\newcommand{\orange}[1]{{\color{orange}#1}}
\newcommand{\purple}[1]{{\color{purple}#1}}

\newcommand{\anupam}[1]{\todo{Anu: #1}}
\newcommand{\abhishek}[1]{\todo{Abhi: #1}}
\newcommand{\note}[1]{\mgt{#1}}


\newtheorem{theorem}{Theorem}
\newtheorem{lemma}[theorem]{Lemma}
\newtheorem{proposition}[theorem]{Proposition}
\newtheorem{observation}[theorem]{Observation}
\crefformat{observation}{Observation~#2#1#3}

\newtheorem{corollary}[theorem]{Corollary}
\newtheorem{fact}[theorem]{Fact}

\theoremstyle{definition}
\newtheorem{definition}[theorem]{Definition}
\newtheorem{example}[theorem]{Example}
\newtheorem{remark}[theorem]{Remark}
\newtheorem{convention}[theorem]{Convention}

\newcommand{\Alphabet}{\mathcal{A}}
\newcommand{\Var}{\mathcal{V}}

\newcommand{\proves}{\vdash}

\newcommand{\Lang}{\mathcal L}
\newcommand{\lang}[1]{\Lang(#1)}

\newcommand{\wLang}{\Lang}
\newcommand{\wlang}[1]{\wLang(#1)}

\newcommand{\FV}{\mathrm{FV}}
\newcommand{\fv}[1]{\FV(#1)}

\newcommand{\fint}[1]{\lceil #1 \rceil}

\newcommand{\FL}{\mathrm{FL}}
\newcommand{\fl}[1]{\FL(#1)}
\newcommand{\eqfl}{=_\FL}
\newcommand{\lefl}{<_\FL}
\newcommand{\leqfl}{\leq_\FL}
\newcommand{\geqfl}{\geq_\FL}

\newcommand{\subform}{\sqsubseteq}
\newcommand{\supform}{\sqsupseteq}
\newcommand{\dle}{\prec}
\newcommand{\dleq}{\preceq}
\newcommand{\dge}{\succ}
\newcommand{\dgeq}{\succeq}

\newcommand{\struct}{\mathfrak L}
\newcommand{\interp}[2]{#2^{#1}}

\newcommand{\es}[1]{\mathcal{#1}}
\newcommand{\ines}[1]{\mathcal{#1}^\geq}

\newcommand{\esEX}[3]{\es {#1}_{#2}(#3)}
\newcommand{\inesEX}[3]{\ines {#1}_{#2}(#3)}

\newcommand{\canes}[1]{\mathcal E_{#1}}
\newcommand{\meetes}[1]{\mathcal M_{#1}}
\newcommand{\meetvar}[2]{X_{#1 \cap #2}}

\newcommand{\rleq}{\lesssim}
\newcommand{\rgeq}{\gtrsim}
\newcommand{\req}{\approx}


\newcommand{\seqar}{\rightarrow}

\newcommand{\id}{\mathsf{id}}

\newcommand{\K}{\mathsf{k}}
\newcommand{\kk}[1]{\K_{#1}}
\newcommand{\wk}{\rr{\mathsf{w}}}
\newcommand{\cntr}{\mathsf{c}}
\newcommand{\cut}{\mathsf{cut}}

\newcommand{\lr}[1]{#1\text{-}l}
\newcommand{\rr}[1]{#1\text{-}r}
\newcommand{\func}{\SF{func}}

\newcommand{\KA}{\mathsf{KA}}
\newcommand{\lhKA}{\ell\KA}
\newcommand{\HKA}{\mathsf{HKA}}

\newcommand{\RLA}{\ensuremath{\mathsf{RLA}}\xspace}
\newcommand{\RLAhat}{\widehat\RLA}

\newcommand{\nRLA}{\nu\RLA}
\newcommand{\nRLAhat}{\nu\RLAhat}

\newcommand{\LRLAhat}{\mathsf L\RLAhat}
\newcommand{\CRLA}{\ensuremath{\mathsf{CRLA}}\xspace}
\newcommand{\LRAind}{\ensuremath{\mathsf{LRA}^{\SF{ind}}}\xspace}

\newcommand{\nLRLAhat}{\nu\LRLAhat}
\newcommand{\nCRLA}{\nu\CRLA}

\newcommand{\infrule}{\mathsf{r}}


\newcommand{\eloise}{$\exists$l\"{o}ise\xspace}
\newcommand{\abelard}{$\forall$belard\xspace}
\newcommand{\prover}{\mathbf{P}}
\newcommand{\denier}{\mathbf{D}}

\title[A proof theory of right-linear ($\omega$-)grammars]{A proof theory of right-linear ($\omega$-)grammars \\ via cyclic proofs}

\author{Anupam Das \and Abhishek De}

\begin{abstract}
Right-linear (or left-linear) grammars are a well-known class of context-free grammars computing just the regular languages. 
They may naturally be written as expressions with (least) fixed points but with products restricted to letters as left arguments, giving an alternative to the syntax of regular expressions.
In this work we investigate the resulting logical theory of this syntax.
Namely we propose a theory of \emph{right-linear algebras} ($\RLA$) over of this syntax and a cyclic proof system $\CRLA$ for reasoning about them.

We show that $\CRLA$ is sound and complete for the intended model of regular languages.
From here we recover the same completeness result for $\RLA$ by extracting inductive invariants from cyclic proofs.
%
Finally we extend $\CRLA$ by \emph{greatest} fixed points, $\nCRLA$, naturally modelled by languages of $\omega$-words thanks to right-linearity. 
We show a similar soundness and completeness result of (the guarded fragment of) $\nCRLA$ for the model of $\omega$-regular languages, this time requiring game theoretic techniques to handle interleaving of fixed points.
\end{abstract}

\maketitle    

\section{Introduction}
{Regular expressions} are the prototypical syntax for describing regular languages.
\emph{Theories} of regular expressions, such as Kleene algebras ($\KA$) e.g.~\cite{Conway71book,Kleene56,Kozen94:completeness-ka}, give rise to alternative interpretations, not least \emph{relational} models, cf.~\cite{Tarski41:calc-of-rels}, which are used in dynamic logics (e.g.~\cite{sep-logic-dynamic}) and for reasoning about program correctness \cite{Koz97:kat,Koz00:hoare-kat,HoaMolStrWeh09:cka}.
The completeness of $\KA$ for the (equational) theory of regular languages, due to Kozen \cite{Kozen94:completeness-ka} and Krob \cite{Krob91:ka-completeness} independently, is a celebrated result that has led to several extensions and refinements, e.g.~\cite{KozSmi97:kat-completeness,KozSil12:left-handed-completeness,CraLauStr15:omega-regular-algebras,KozSil20:left-handed-completeness}.
More recently the proof theory of $\KA$ has been studied via `hypersequential' calculi \cite{DasPou17:hka}, and this has led to novel proofs for completeness results \cite{DasDouPou18:lka-completeness}.

Another well-known description for regular expressions is via right-linear (or left-linear) grammars, widely studied in e.g.\ parsing~\cite{JK75,parsingbook} and compilers~\cite{compilerbook}.
These are context-free grammars in which each production has RHS either $\epsilon$ or $X$ or $aX $, for $a$ a letter from an alphabet $\Alphabet$ and $X$ a non-terminal.\footnote{In fact right-linear grammars can intuitively be construed as nondeterminstic finite automata.}
Context-free grammars may naturally be written using expressions with \emph{least fixed points}, $\mu$, e.g.~\cite{Leiss92:ka-with-rec,EsikLeiss05:alg-comp-semirings,GraHenKoz13:inf-ax-cf-langs,Leiss16:matrices-over-mu-cont-chom-alg}.
For right-linear grammars, instead of arbitrary products $e\cdot f$, we require the left argument $e$ to be atomic.
In this work we investigate these resulting `right-linear $\mu$-expressions' as a starting point for logical systems reasoning about regular languages.

The lack of general products renders the resulting algebras more general (i.e.\ less structured): while Kleene algebras (and friends) are axiomatised over an underlying (idempotent) semiring with identity, our \emph{right-linear algebras} ($\RLA$) have no multiplicative structure at all. 
Instead we construe each letter $a$ as a homomorphism on the underlying (bounded) semilattice $(0,+)$.
Ultimately this has the effect that $\RLA$ carries more models than $\KA$ (and friends): formally, every (even left-handed) Kleene algebra is a right-linear algebra, but the converse does not hold.
For instance the class of $\omega$-languages $\pow {\Alphabet^\omega}$ forms a right-linear algebra in the expected way, as $\omega$-words are closed under concatenation with letters from the left. 
However it is not hard to see that these homomorphisms $a\in \Alphabet$ are not compatible with any multiplicative structure $(1,\cdot)$ on $\pow {\Alphabet^\omega}$.

More pertinent to the present work, the lack of general products significantly simplifies the resulting proof systems.
Complete cut-free systems for regular expressions, such as $\HKA$ from \cite{DasPou17:hka}, can only be obtained\footnote{Note that there have been several conjectures and failed attempts to design cut-free systems based on `intuitionistic' sequents akin to the Lambek calculus. E.g.\ Jipsen proposed such a system in \cite{Jip04:semirings-res-kls}, for which it has been shown that cut is \emph{not} admissible \cite{Busz06:action-logic-eq-theories}. Wurm has also proposed a cut-free system in \cite{Wurm14:ka-reglang-substructlog}, but counterexamples have been found, e.g.\ \cite[Appendix A]{DasPou17:hka-hal-version}.} through complex `hypersequents' of the form $\Gamma \seqar X$, where $\Gamma$ is a list of expressions (interpreted as a product) and $X$ is a set of such lists (interpreted as a set of products).\footnote{In \cite{DasPou17:hka} the authors argue for the necessity of hypersequential structure by consideration of proof search for the theorems $a^* \leq (aa)^* + a(aa)^*$ and $(a+b)^* \leq a^*(ba^*)^*$.}
In the absence of products, it suffices to consider `co-intuitionistic' sequents of the form $e\seqar \Gamma$, where $e$ is an expression and $\Gamma $ is a set of expressions.
At least one advantage here is the simplification of proof search, with (co-)intuitionistic sequents more amenable to automated reasoning than the bespoke hypersequents of $\HKA$.
At the same time the resulting systems seem to admit a much smoother metalogical treatment, as demonstrated by the development we carry out in this work.

\subsection*{Contribution}
We propose a theory $\RLA$ of expressions built up over $0,+,a,\mu$, where $\mu$ is a binder for writing least fixed points.
$\RLA$ naturally has language and relational models like Kleene algebra, but also enjoys `asymmetric' models like the aforementioned $\omega$-languages.
We design a sequent calculus based on this syntax using co-intuitionistic sequents, and define a notion of \emph{cyclic proof} to model leastness of fixed points, giving the system $\CRLA$.
Our first main contribution is the adequacy of this system for regular languages: $\CRLA \proves e=f$ just if the regular languages computed by $e$ and $f$, $\lang e $ and $\lang f$ respectively, are the same.

More interestingly perhaps, we can use this completeness result to obtain the same for $\RLA$, our second main contribution: $\RLA \proves e=f \iff \lang e = \lang f$. Thus the regular languages are essentially the \emph{free} right-linear algebra.
The difficulty here, a recurring leitmotif of cyclic proof theory, is the extraction of \emph{inductive invariants} from cyclic proofs. 
Our method is inspired by previous analogous results for \emph{left-handed} Kleene algebras ($\lhKA$), namely \cite{KozSil12:left-handed-completeness,KozSil20:left-handed-completeness,DasDouPou18:lka-completeness}.
We first carry out some `bootstrapping' in $\RLA$ for solving systems of right-linear systems of equations, similarly to \cite{KozSil12:left-handed-completeness,KozSil20:left-handed-completeness}.\footnote{In \cite{KozSil12:left-handed-completeness,KozSil20:left-handed-completeness}, Kozen and Silva rather call these systems \emph{left-linear}, but we follow the terminology from formal language theory.}
The invariants we extract from $\CRLA$ compute intersections of languages wrt $\lang\cdot$, like in \cite{DasDouPou18:lka-completeness}, but the relative simplicity of our system $\CRLA$ means the corresponding logical properties and intermediate steps are established much more succinctly.

Finally, leveraging the aforementioned fact that $\RLA$ admits $\omega$-languages as a natural model, we develop an extensions $\nCRLA$ of  $\CRLA$ by \emph{greatest} fixed points, $\nu$, for which $\lang \cdot$ extends to a model of $\omega$-regular languages.
Note that this contrasts with approaches via regular expressions, where various `$\omega$-regular algebras' are, formally speaking, \emph{not} (even left-handed) Kleene algebras \cite{Wagner76:omega-regular-algebras,Cohen00:omega,Struth12:left-omega-algebras,CraLauStr15:omega-regular-algebras}.
Our third main contribution is the soundness and completeness of $\nCRLA$ for the model of $\omega$-regular languages.
Compared to $\CRLA$, the difficulty for metalogical resoning here is to control the interleavings of $\mu$ and $\nu$, both in the soundness argument and for controlling proof search in the completeness argument.
To this end we employ \emph{game theoretic} techniques for characterising word membership, and also for controlling proof search. 
Here priority of fixed points is resolved according to a \emph{parity} condition.\footnote{In fact our `$\mu\nu$-expressions' can, in a sense, be construed as a sort of non-deterministic Rabin $\omega$-automaton model.}
Note that, due to an asymmetry between the join and meet structure of language models with respect to lattice homomorphisms, completeness fails in general (e.g.\ since $\lang \cdot \not\models a\top = \top$); instead we prove it for the \emph{guarded} fragment, which already exhausts all $\omega$-regular languages.

All of our main results are summarised in \cref{fig:summary}.

\begin{figure}[t]
    \centering
    \begin{tikzpicture}
    \draw [fill=teal!20,teal!20] (-2,-2) rectangle (2.5,3.5);
    \draw [fill=orange!20,orange!20] (2.5,-2) rectangle (8,3.5);
        \node (A) at (0,3) {$\RLA$};
        \node (B) at (0,1) {$\CRLA$};
        \node (C) at (5,2) {regular languages};
        \node (D) at (0,-1) {$\nCRLA$};
        \node (E) at (5,-1) {$\omega$-regular languages};
        \draw[->] (A) to[bend left] node[below] {\textcolor{magenta}{\scriptsize\textbf{\cref{thm:rla-soundness}}}} (C);
        \draw[->,gray,dotted] (B) to[bend left] node[below] {\textcolor{gray}{\scriptsize\textbf{\cref{thm:crla-soundness}}}} (C);
        \draw[<-] (B) to[bend right] node[above] {\textcolor{magenta}{\scriptsize\textbf{\cref{thm:crla-completeness}}}} (C);
        \draw[->] (D) to[bend left] node[below] {\textcolor{magenta}{\scriptsize\textbf{\cref{thm:ncrla-soundness}}}} (E);
        \draw[<-] (D) to[bend right] node[above] {\textcolor{magenta}{\scriptsize\textbf{\cref{thm:ncrla-guarded-complete}}}} (E);
        \draw[dashed] (-2,0.2) -- (8,0.2);
        \draw[right hook->] (C) to (E);
        \draw[->] (B) to (A);
        \node[left] at (0,2) {\textcolor{magenta}{\scriptsize\textbf{\cref{lemma:clla-to-rla}}}};
        \draw[right hook->] (B) to (D);
    \node[above right] at (-2,-2) {\textcolor{teal}{\small \textit{Proof systems}}};
    \node[above left] at (8,-2) {\textcolor{orange}{\small \textit{Language models}}};
    \end{tikzpicture}
    \caption{Summary of our main contributions. Each arrow $\rightarrow$ denotes an inclusion of equational theories, over right-linear expressions. The gray arrow marked \cref{thm:crla-soundness} is consequence of the black solid ones.
    }
    \label{fig:summary}
\end{figure}
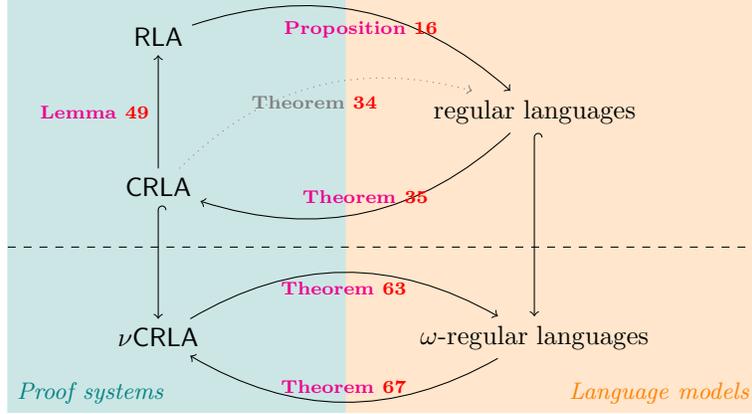

\subsection*{Other related work}
Our syntax of right-linear $\mu$-expressions is essentially the same as Milner's expressions for regular behaviours in the seminal work \cite{Milner84} (modulo inclusion of $1$).
That work gives an axiomatisation for a class of `regular behaviours' over guarded expressions, but this crucially differs from our $\RLA$: in \cite{Milner84} the letters $a\in \Alphabet$ need not satisfy the distributivity law $a(e+f)=ae +af$ (i.e.\ letters are not necessarily $+$-homomorphisms), and so Milner's class of regular behaviours does not form a right-linear algebra in the sense of this work.

The right-linear equational systems we consider when `bootstrapping' $\RLA$ have been widely studied in the context of (left-handed) Kleene algebras, e.g.~\cite{Kozen94:completeness-ka,KozSil12:left-handed-completeness,KozSil20:left-handed-completeness}.
$\RLA$ essentially characterises the bounded semilattices admitting least solutions to such equations, as opposed to $\lhKA$ which characterises the idempotent semirings with this property.

The relationship between $\mu$-expressions and equational systems has been studied in the more general setting of context-free grammars, giving rise to \emph{Park $\mu$-semirings} and \emph{Chomsky algebras}, e.g.\ \cite{Leiss92:ka-with-rec,EsikLeiss05:alg-comp-semirings,GraHenKoz13:inf-ax-cf-langs,Leiss16:matrices-over-mu-cont-chom-alg}.
In this sense our approach of using fixed points to notate right-linear grammars is natural,  
amounting to a restriction of the established syntax for notating context-free grammars.

\todo{should we say something about cyclic proofs?}

Recently \cite{HazKup22:transfin-HKA} has already proposed an adaptation of $\HKA$ reasoning with $\omega$-regular (and certain transfinite) expressions. 
Our system $\nCRLA$ may be seen as an alternative to that system, again operating with structurally simpler lines.

\anupam{commented bullet points from intro discussion below}

    





\section{A theory of right-linear expressions}
In this section we introduce a logical theory of right-linear grammars.
Throughout this work, let us fix a set $\Alphabet$ (aka an \defname{alphabet}) of \defname{letters} $a,b$ etc., and a set $\Var$ of \defname{variables} $X,Y $ etc.

\subsection{A syntax for right-linear grammars}
Recall that a context-free grammar is \defname{right-linear} if each production has RHS (WLoG) either $\epsilon$ or $Y$ or $a Y$ for $Y$ a non-terminal and $a \in \Alphabet$.
It is well-known that right-linear grammars generate \emph{just} the regular languages:

\begin{remark}
[Right-linear grammars as NFAs]
\label{rem:rlgs-as-nfas}
A right-linear grammar is morally just a non-deterministic finite automaton (NFA) by construing non-terminals as states and:
\begin{itemize}
    \item For each production $X\to \epsilon$, $X$ is a final state.
    \item Each production $X\to Y$ is an $\epsilon$-transition from $X$ to $Y$.
    \item Each production $X\to aY$ is an $a$-transition from $X$ to $Y$.
\end{itemize}
In this way we may freely speak of `runs' of a right-linear grammar and use other automaton-theoretic terminology accordingly.
\abhishek{mention start state}
\anupam{do we need to? initialising at a state $X$ just gives the language of that non-terminal.}
\end{remark}

We shall employ a standard naming convention for context-free grammars using `$\mu$-expressions', cf.~\cite{Leiss92:ka-with-rec,EsikLeiss05:alg-comp-semirings,GraHenKoz13:inf-ax-cf-langs,Leiss16:matrices-over-mu-cont-chom-alg}, for which right-linearity corresponds to an expected restriction on products.

\defname{(Right-linear) ($\mu$-)expressions}, written $e,f,\dots$, are generated by:
\[
e,f,\dots \quad ::= \quad 
0 \quad \mid \quad 
1 \quad \mid \quad 
 X \quad \mid \quad
e + f \quad \mid \quad a\cdot e \quad \mid \quad \mu X e
\]
for $a\in\Alphabet$ and $X \in \Var$. We usually just write $ae$ instead of $a\cdot e$. 
We sometimes say ($\mu$-)\defname{formula} instead of `($\mu$)-expression', and write $\subform$ for the subformula relation.

The set of \defname{free variables} of an expression $e$, written $\fv e$, is defined by:
\begin{itemize}
    \item $\fv 0 \df \emptyset$
    \item $\fv 1 \df \emptyset$
    \item $\fv X \df \{X\}$
    \item $\fv {e+f} \df \fv e \cup \fv f$
    \item $\fv {ae} \df \fv e$
    \item $\fv {\mu X e} \df \fv e \setminus \{X\}$
\end{itemize}
We say $e $ is \defname{closed} if $\fv e = \emptyset$ (otherwise it is \defname{open}).

The intended semantics of right-linear expressions is given by {(regular) languages}.
\begin{definition}
    [Regular language model $\lang \cdot$]
    Let us temporarily expand the syntax of expressions to include each language $A\subseteq \Alphabet^*$ as a constant symbol.
We interpret each closed expression (of this expanded language) as a subset of $ \Alphabet^*$ as follows:
\begin{itemize}
    \item $\lang 0 \df \emptyset$
    \item $\lang 1 \df \{\epsilon\}$
    \item $\lang A \df A$
    \item $\lang {e+f} \df \lang e \cup \lang f$
    \item $\lang {ae} \df \{a \sigma : \sigma \in \lang e\}$
    \item $\lang {\mu X e(X)} \df \bigcap \{ A \supseteq \lang {e(A)} \} $
\end{itemize}
\end{definition}

\anupam{comment about monotonicity of operators commented here, as we have formal proof later. Use this in conference version.}

\begin{example}
[Empty and universal languages]
\label{ex:empty-univ-as-rlexps}
    In the semantics above, note that the empty language is also computed by $\mu X X $, as well as $0$, and so in a sense $0$ is syntactic sugar. 
    Including it explicitly in our syntax will nonetheless be helpful when defining structures/models of our syntax.
    Dually the universal language is given by $\mu X \left(1 + \sum\limits_{a \in \Alphabet} a X\right)$ in $\lang\cdot$.
\end{example}

\begin{example}
    [Right-linear grammars]
    \label{ex:rlgram-even-odd-to-rlexps}
    As mentioned previously, our expressions can be seen as a naming convention for right-linear grammars, in particular by fixing an order for resolving non-terminals.
    We shall see a self-contained proof that our expressions compute just the regular languages shortly, but let us first see an example.
    Consider the following (right-linear) grammar:
    \[
    \begin{array}{r@{\quad \rightarrow \quad }l}
    E &  \epsilon \quad \mid \quad aO \\
    O & aE
    \end{array}
    \]
    Clearly $E$ generates just the even length words and $O$ just the odd length words (over singleton alphabet $\{a\}$).
    We can write the solutions of these grammars (wrt.~$\lang\cdot$) as expressions in two ways, left or right below, depending on whether we choose to resolve $E$ or $O$ first:
    \[
    \begin{array}{r@{\ = \ }l}
         O'(X) & aX \\
         E & \mu X (1 + a O'(X)) \\
         O & O'(E)
    \end{array}
    \qquad
    \begin{array}{r@{\ = \ }l}
         E'(X) & 1 + aX\\
         O & \mu X aE'(X) \\
         E & E'(O) 
    \end{array}
    \]
\end{example}

\subsection{Regularity and Fischer-Ladner}
As our expressions are designed to be a syntax for right-linear grammars, we duly have the following result:
\begin{proposition}\label{prop:rl-langs-are-just-reg-langs}
    $L\subseteq \Alphabet^*$ is regular $\iff$ $L = \lang e$ for some $\mu$-expression $e$.
\end{proposition}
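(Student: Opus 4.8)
The plan is to prove both implications, each exploiting the correspondence between right-linear grammars and NFAs recorded in \cref{rem:rlgs-as-nfas}.

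For the direction $(\Leftarrow)$, I would show that $\lang e$ is regular by induction on the structure of $e$, strengthening the statement to open expressions: for every $e$ with $\fv e \subseteq \{X_1,\dots,X_n\}$ and all regular languages $A_1,\dots,A_n$, the language $\lang{e[A_1/X_1,\dots,A_n/X_n]}$ is regular. The cases $0$, $1$ and (substituted) variables are immediate, and $e+f$ and $ae$ follow from closure of the regular languages under union and under left-concatenation by a single letter. The only real work is the $\mu$ case. Here I would exploit right-linearity: because products only ever prepend letters to subexpressions, the semantic operator $A \mapsto \lang{e[A/X, \vec B/\vec Y]}$ can be put in \emph{Arden normal form} $A \mapsto C \cup D\cdot A$ for regular $C,D$ (depending on $\vec B$). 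This is established by an auxiliary induction on $e$, with $C=\varnothing,\ D=\{\epsilon\}$ for $e=X$, with $C=aC',\ D=aD'$ for $e=ae'$, componentwise union for $e+f$, and nested $\mu$'s handled by solving the inner fixed point first and re-expressing its solution in the same normal form. Arden's lemma then gives $\lang{\mu X e} = D^{*}\cdot C$, which is regular. It is cleanest to phrase this as solving a right-linear system of equations over regular languages, exactly the `bootstrapping' step invoked later.

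For the direction $(\Rightarrow)$, given a regular $L$ I would fix an NFA recognising it and read off, via \cref{rem:rlgs-as-nfas}, a right-linear grammar with one non-terminal $X_q$ per state $q$: a production $X_q \to a X_{q'}$ for each $a$-transition $q \to q'$, a production $X_q \to X_{q'}$ for each $\epsilon$-transition, and $X_q \to \epsilon$ whenever $q$ is final. Reading this as a system of equations over the variables $\{X_q\}$, I would then eliminate non-terminals one at a time, each elimination replacing a variable by a $\mu$-binding of its defining equation and substituting into the remaining equations, precisely as illustrated in \cref{ex:rlgram-even-odd-to-rlexps}. Taking the variable for the initial state last yields a closed $\mu$-expression $e$ with $\lang e = L$.

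The main obstacle is concentrated in the fixed-point handling, and is twofold. In $(\Leftarrow)$, one must verify that the least-prefixed-point semantics $\bigcap\{A \supseteq \lang{e(A)}\}$ really coincides with the least solution $D^{*}C$ furnished by Arden's lemma, i.e.\ that the intersection defining $\lang{\mu X e}$ computes exactly the least fixed point of the right-linear operator; this rests on monotonicity of the operators, which right-linearity guarantees. In $(\Rightarrow)$, the delicate point is that the bare-variable productions $X_q \to X_{q'}$ (the $\epsilon$-transitions) introduce \emph{unguarded} recursion, so one must check that variable elimination stays sound in their presence --- e.g.\ that a degenerate equation such as $X = X + \cdots$ is resolved to the correct least value $\varnothing$ rather than $\Alphabet^{*}$ by the $\mu$ semantics. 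Both points are exactly where right-linearity, as opposed to arbitrary products, makes the argument go through.
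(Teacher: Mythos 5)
Your proposal is correct, but it routes both implications differently from the paper, essentially swapping the roles of syntax and automata. For $(\Rightarrow)$ the paper never touches NFAs: it translates \emph{regular expressions} directly into right-linear expressions via a product operation $e\bullet g$ defined by induction on the regular expression $e$, with the key clause $e^*\bullet g \df \mu X(g + e\bullet X)$ (see \eqref{eq:prod-regexp-rlexp-to-rlexp}), and then interprets $e$ as $e\bullet 1$; this is shorter than your NFA-to-grammar-to-elimination pipeline and, as the paper remarks, stays in the one-variable fragment. Your route instead needs two supporting facts you correctly identify but which the paper's version avoids: that the language generated by a right-linear grammar from a non-terminal is the least solution of the associated equation system, and that sequential variable elimination preserves least solutions (semantic Beki\'c, cf.~\cref{lem:bekic} and \cref{ex:rlgram-even-odd-to-rlexps}); in exchange it connects the syntax directly to automata. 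For $(\Leftarrow)$ the situation is reversed: the paper does \emph{not} do a structural induction with Arden's lemma, but simply builds a single right-linear grammar whose non-terminals are indexed by the Fischer--Ladner closure $\fl e$ (see \eqref{eq:rlexp-to-rlgram}), with e.g.\ $X_{\mu X f(X)} \to X_{f(\mu X f(X))}$, which is finite precisely by \cref{prop:fl-props}; this makes the grammar (equivalently NFA) construction one line, at the cost of deferring the verification that its least solutions coincide with $\lang e$ to the later axiomatic development. Your Arden-based induction carries out that semantic verification up front and is more self-contained, but note that, as literally stated with a single distinguished variable $X$, your normal-form lemma cannot even be formulated for the body of a nested binder $\mu Y e'(Y,X)$; you need the multi-variable normal form $\lang{e(\vec A)} = C \cup \bigcup_i D_i\cdot A_i$ (equivalently, the systems-of-equations phrasing you mention at the end), after which the argument goes through.
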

One half of this result is readily obtained by an interpretation of regular expressions within our right-linear syntax:
\begin{proof}
    [Proof of $\implies$ direction of \cref{prop:rl-langs-are-just-reg-langs}]
    We give a translation from regular expressions to right-linear expressions preserving the language computed.
    For this we first define right-linear expressions $e\bullet g$, computing the product of $e$ and $g$, where $e$ is a regular expression and $g$ is a right-linear expression.
    We proceed by induction on the structure of $e$:
\begin{equation}
\label{eq:prod-regexp-rlexp-to-rlexp}
    \begin{array}{r@{\ \df \ }l}
        0 \bullet g & 0 \\
         1 \bullet g & g \\
         a \bullet g & ag \\
         (e+f)\bullet g & e\bullet g + f\bullet g \\
         (ef)\bullet g & e \bullet (f\bullet g)  \\
         e^* \bullet g & \mu X (g + e \bullet X )
    \end{array}
\end{equation}
    %
From here we readily interpret each regular expression $e$ simply as $e\bullet 1$.
\end{proof}

\begin{remark}
    [One-variable fragment]
    Note that \eqref{eq:prod-regexp-rlexp-to-rlexp} remains in the `one-variable' fragment of right-linear expressions: we need only one (bound) variable, here $X$. 
\end{remark}

For the other direction, we first recall a standard generalisation of `subformula' or `subexpression' when working with fixed points:

\begin{definition}[Fisher-Ladner]
\label{def:fl}
The \defname{Fischer-Ladner ($\FL$) closure} of an expression $e$, written $\fl e$, is the smallest set of expressions such that:

\begin{itemize}
    \item $e\in\fl e$;
    \item if $f+g \in \fl e$ then $f\in\fl e$ and $g\in\fl e$;
    \item if $af \in \fl e$ then $f\in \fl e$; and
    \item if $\mu Xf(X) \in \fl e$ then $f(\mu X f(X)) \in \fl e$.
\end{itemize}
For a set of expressions $E$ we write $\fl E \df \bigcup\limits_{e\in E}\fl e$.
We write $e \leqfl f$ if $e \in \fl f$, 
$e \lefl f$ if $e\leqfl f \not\leqfl e$ and $e  \eqfl f$ if $e\leqfl f \leqfl e$.
A \defname{trace} is a sequence $e_0 \geqfl e_1 \geqfl \cdots$.
\end{definition}

We have the following well-known properties of the $\FL$ closure:
\begin{proposition}
[Properties of $\FL$]
\label{prop:fl-props}
    We have the following:
    \begin{enumerate}
        \item\label{item:fl-finite} $\fl e$ is finite, and in fact has size linear in that of $e$.
        \item\label{item:fl-preorder} $\leqfl$ is a preorder and $\lefl$ is well-founded.
        \item\label{item:traces-have-least-inf-occ-elem} Every trace has a smallest infinitely occurring element, under $\subform$. 
        Unless the trace is eventually stable, the smallest element is a fixed point formula.
    \end{enumerate}
\end{proposition}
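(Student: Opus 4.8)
# Proof Proposal for Proposition (Properties of FL)

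The plan is to establish the three claims largely independently, proceeding by structural induction where possible. For \eqref{item:fl-finite}, I would prove the stronger statement that $\fl e \subseteq \{f : f \subform g \text{ for some } g \subform e\}$, i.e.\ every FL-expression is a subformula of a substitution instance built from subformulae of $e$. Concretely, one shows by induction on the structure of $e$ that $|\fl e|$ is bounded by the number of distinct subformula-occurrences in $e$: the only non-obvious clause is $\mu X f(X)$, where unfolding produces $f(\mu X f(X))$; the key observation is that the new subformulae appearing in $f(\mu X f(X))$ are exactly the subformulae of $f$ with $X$ replaced by $\mu X f(X)$, and these are in bijection with the subformulae of $f$ itself (since $\mu X f(X)$ is already counted). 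Thus no genuinely new ``shapes'' are introduced by unfolding, giving linearity. I would make this precise by defining the FL-closure as a closure under the listed rules and arguing the fixpoint of the closure operator is reached after finitely many steps, with the cardinality bound tracked throughout.

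For \eqref{item:fl-preorder}, reflexivity of $\leqfl$ is the first clause of \cref{def:fl} and transitivity follows since $\fl{\cdot}$ is a closure operator: if $e \in \fl f$ and $f \in \fl g$, then $\fl e \subseteq \fl f \subseteq \fl g$, so $e \in \fl g$. Well-foundedness of $\lefl$ is then immediate from \eqref{item:fl-finite}: since $\fl e$ is finite and $\lefl$ is an irreflexive suborder of the preorder $\leqfl$ restricted to this finite set, there can be no infinite strictly descending chain.

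The substantive part is \eqref{item:traces-have-least-inf-occ-elem}, which I expect to be the main obstacle. Given a trace $e_0 \geqfl e_1 \geqfl \cdots$, all terms after some point lie in the finite set $\fl{e_0}$, so only finitely many expressions occur infinitely often; call this set $I$. I would argue that $I$ is \emph{linearly ordered} by $\subform$, using the crucial interaction between $\geqfl$-steps and the subformula relation: a $\geqfl$-step is either a genuine subformula step ($f \supform g$, decreasing size) or a $\mu$-unfolding $\mu X f(X) \geqfl f(\mu X f(X))$, and the latter \emph{introduces} the fixed point as a subformula of its successor. The plan is to show that the expressions occurring infinitely often are pairwise comparable under $\subform$, so $I$ has a $\subform$-least element; the key lemma is that once a trace returns infinitely often to two distinct expressions, the smaller of the two (under $\subform$) must ``guard'' the recurrence via an unfolding, forcing it to be a $\mu$-formula unless the trace is eventually constant. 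The delicate point is the final dichotomy: I would argue that if the trace is \emph{not} eventually stable, then the $\subform$-least infinitely-occurring expression cannot be a variable, sum, or product leaf (those strictly decrease size along $\geqfl$ and so cannot recur without being reintroduced), and the only rule that reintroduces a strictly smaller expression infinitely often is $\mu$-unfolding, whose principal formula is a fixed point. This forces the least infinitely occurring element to be of the form $\mu X f(X)$, completing the argument.
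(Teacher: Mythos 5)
Your treatment of items \eqref{item:fl-finite} and \eqref{item:fl-preorder} is essentially the paper's own argument: the bijection you describe for the $\mu$-clause is exactly the identity $\fl{\mu X e} = \{f[\mu X e/X] : f \in \fl e\}$ that the paper's proof sketch records, and well-foundedness from finiteness is fine. (One slip in wording: the set you display in \eqref{item:fl-finite}, $\{f : f \subform g \text{ for some } g \subform e\}$, is literally just the set of subformulas of $e$, and $\fl e$ is \emph{not} contained in it -- unfoldings are not subformulas; what you mean, and what your bijection argument actually supports, is that every element of $\fl e$ is a \emph{substitution instance} of a subformula of $e$.)

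The genuine gap is in item \eqref{item:traces-have-least-inf-occ-elem}: your key lemma, that the set $I$ of infinitely occurring expressions is linearly ordered by $\subform$, is \textbf{false}. Take $e = \mu X(aX+bX)$ and the trace that repeatedly unfolds $e$, alternately descending into the left and right summand:
$e \geqfl ae+be \geqfl ae \geqfl e \geqfl ae+be \geqfl be \geqfl e \geqfl \cdots$.
Here all four of $e$, $ae+be$, $ae$, $be$ occur infinitely often, but $ae$ and $be$ are $\subform$-incomparable, so $I$ is not a chain. The proposition's conclusion still holds ($e$ itself is a subformula of all the others, and is a fixed point formula), but your route to it -- pairwise comparability, with the dichotomy derived from it -- collapses. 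What actually has to be proved is that the size-minimal element $g$ of $I$ is (i) a fixed point formula, and (ii) a subformula of \emph{every} element of $I$. Part (i) is recoverable from your sketch (every FL-step from a non-fixed-point formula strictly decreases size, so minimality plus non-stability forces $g$ to be a fixed point formula), but (ii) needs a separate global argument that your proposal does not supply. For instance: once the trace moves only inside $I$, follow a segment of the trace back to an occurrence of $g$ and consider the \emph{last} position at which $g$ fails to be a subformula of the current expression; the step there cannot be a subformula step (that would propagate $g$ backwards), so it is an unfolding $\sigma Y k(Y) \mapsto k(\sigma Y k(Y))$, and a case analysis of where $g$ can sit inside $k(\sigma Y k(Y))$ contradicts either $g \not\subform \sigma Y k(Y)$ or size-minimality of $g$ in $I$. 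The paper instead gestures at a reduction, via $\leqfl\ \subseteq (\subform \cup \supform)^*$, to a general combinatorial property of trees; either way, some argument of this global kind is needed exactly where your proposal asserts linearity.
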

\begin{proof}
    [Proof idea]
    For \ref{item:fl-finite}, note that we have $\fl{\mu X e}= \{f[\mu X e /X] : f \in\fl{e} \}$.
    \ref{item:fl-preorder} is immediate from definitions. For \ref{item:traces-have-least-inf-occ-elem} note that $\leqfl \subseteq (\subform \cup \supform)^*$, whence the property reduces to a more general property about trees. 
\end{proof}



Now we can show that the image of $\lang \cdot$ indeed contains only regular languages:

\begin{proof}
    [Proof of $\impliedby$ direction of \cref{prop:rl-langs-are-just-reg-langs}]
\anupam{NFA presentation commented above.}
To show that each expression $e$ computes a regular language, we construct a right-linear grammar with non-terminals $X_f$ for each $f \in \fl e $ by:
\begin{equation}
\label{eq:rlexp-to-rlgram}
    \begin{array}{r@{\quad \rightarrow \quad }l}
        X_0 & \\
         X_1 & \epsilon \\
         X_{f+g} & X_f \quad \mid \quad X_g \\
         X_{af} & a X_f \\
         X_{\mu X f(X)} & X_{f(\mu X f(X))} 
    \end{array}
    \qedhere
\end{equation}
\end{proof}

The grammar in \eqref{eq:rlexp-to-rlgram} above is essentially what we later call the \emph{canonical system of equations} for right-linear expressions. 
Indeed the fact the least solutions of this grammar coincides with $e$ is not particular to our intended semantics: it will be a consequence of our upcoming axiomatisation.

\subsection{Axiomatisation and models}
We now turn to axiomatising an (in)equational theory of right-linear expressions.
We consider equations between expressions, $e=f$, and write $e \leq f $ for $ e+f = f$.

\begin{definition}
[Axiomatisation $\RLA$]
\label{def:axioms-rla}
    Write $\RLA$ for the following axiomatisation of right-linear expressions:
    \begin{enumerate}
    \item\label{item:join-semilattice} $(0,+)$ forms a bounded (join-)semilattice:
    \[
    \begin{array}{r@{\ = \ }l}
        e + 0 & e \\
         e + f & f + e \\
         e + (f + g) & (e + f ) + g
    \\
         e + e & e
    \end{array}
    \]
    \item\label{item:semilattice-homomorphism} Each $a\cdot $, for $a \in \Alphabet$, is a (bounded) semilattice homomorphism:
    \[
     \begin{array}{r@{\ = \ }l}
     a0 & 0 \\
     a(e+f) & ae + af 
    \end{array}
    \]
    \item $\mu X e(X)$ is a least pre-fixed point of the operator $X \mapsto e(X)$:
    \begin{itemize}
        \item\label{item:prefix-mu} (Prefix) $e(\mu X e(X)) \leq \mu X e(X)$.
        \item\label{item:induction-mu} (Induction) $e (f) \leq f \implies \mu X e(X) \leq f$.
    \end{itemize}
\end{enumerate}
\end{definition}

Note that, immediately from these axioms, we inherit natural order-theoretic axioms from the defined inequality $\leq$:
\begin{itemize}
    \item (Reflexivity) $e\leq e$
    \item (Antisymmetry) $e \leq f$ \& $f \leq e$ $\implies$ $e=f$
    \item (Transitivity) $e\leq f$ \& $f\leq g$ $\implies$ $e \leq g$

    \smallskip

    \item (Boundedness) $0\leq e$
    \item ($+$-intro) $e_i \leq e_0 + e_1$, for $i=0,1$.
    \item ($+$-elim) $e_0 \leq f$ \& $e_1 \leq f$ $\implies$ $e_0 + e_1 \leq f$.

    \smallskip

    \item (Monotonicity) $e\leq f$ $\implies$ $ae \leq af$.
\end{itemize}

\begin{remark}
    [1]
    \label{rem:1-is-just-a-thing}
    Note that we have no axiom involving $1$.
    While $\lang\cdot$ interprets it as the empty word, it is actually consistent with $\RLA$ to interpret it as an arbitrary element.
    This is in contrast to theories of regular expressions like Kleene algebras, where the product operation induces a bona fide multiplicative structure for which $1$ is the identity.
    In right-linear algebras, $1$ is just a `thing'.
\end{remark}

\begin{remark}
    [$0$]
    \label{ex:alt-defs-0}
    Recalling \cref{ex:empty-univ-as-rlexps}, $\RLA$ indeed proves $ \mu X X = 0$.
    The $\geq$ direction is immediate from Boundedness, and for the $\leq $ direction we have:
    \begin{itemize}
        \item $0\leq 0$ by Reflexivity (or Boundedness).
        \item Thus $\mu XX \le 0$ by the Induction axiom. 
    \end{itemize}
    We can carry out similar reasoning to prove $\mu X (aX) = 0$, relying on the fact that $a$ is a \emph{bounded} semilattice homomorphism, $a0=0$.
\end{remark}

The axioms of $\RLA$ suffice to establish monotonicity of all the basic operations:
\begin{proposition}[Functoriality]
\label{prop:functoriality-rla}
$\RLA + \vec f \leq \vec g \proves  e(\vec f) \leq e(\vec g)$
\end{proposition}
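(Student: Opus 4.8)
The plan is to proceed by structural induction on the expression $e$, keeping the vector $\vec X$ of substituted variables fixed and the hypotheses $\vec f \leq \vec g$ available throughout. Concretely, we prove by induction on $e$ that $\RLA + \vec f \leq \vec g \proves e(\vec f) \leq e(\vec g)$. The base cases and the $+$ and $a\cdot$ cases fall out of the derived order-theoretic axioms listed after \cref{def:axioms-rla}; the only delicate case is the fixed-point binder $\mu$, where we combine the inductive hypothesis with both the Prefix and Induction axioms.

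For the base cases: if $e$ is $0$, $1$, or a variable $Y$ not occurring in $\vec X$, then $e(\vec f)$ and $e(\vec g)$ are syntactically identical and we conclude by Reflexivity; if $e = X_i$, then $e(\vec f) = f_i \leq g_i = e(\vec g)$ is precisely one of the hypotheses. For $e = e_1 + e_2$, the inductive hypotheses give $e_j(\vec f) \leq e_j(\vec g)$ for $j = 1,2$, and monotonicity of $+$ (derivable from $+$-intro, $+$-elim and Transitivity) yields $e_1(\vec f) + e_2(\vec f) \leq e_1(\vec g) + e_2(\vec g)$. For $e = a\cdot e_1$, the inductive hypothesis $e_1(\vec f) \leq e_1(\vec g)$ together with the Monotonicity axiom for letters gives $a\cdot e_1(\vec f) \leq a\cdot e_1(\vec g)$.

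The main work is the case $e = \mu Y\, e_1(Y, \vec X)$, where without loss of generality $Y$ is distinct from the variables in $\vec X$. Writing $h \df \mu Y\, e_1(Y, \vec g)$, it suffices by the Induction axiom to show that $h$ is a pre-fixed point of the operator $Y \mapsto e_1(Y, \vec f)$, i.e.\ that $e_1(h, \vec f) \leq h$. Here the inductive hypothesis is applied to the body $e_1$, now regarded as an expression in the extended vector of variables $(Y, \vec X)$, under the substitutions $(h, \vec f)$ and $(h, \vec g)$: since $h \leq h$ by Reflexivity and $\vec f \leq \vec g$ by hypothesis, we obtain $e_1(h, \vec f) \leq e_1(h, \vec g)$. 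Combining this with the Prefix axiom $e_1(h, \vec g) \leq h$ via Transitivity gives $e_1(h, \vec f) \leq h$, and Induction then delivers $\mu Y\, e_1(Y, \vec f) \leq h = \mu Y\, e_1(Y, \vec g)$, as required.

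The step I would flag as the crux is the bookkeeping in this $\mu$ case: the inductive hypothesis must be invoked for the body $e_1$ with respect to the \emph{extended} variable vector $(Y, \vec X)$, feeding the same expression $h$ into the $Y$-slot on both sides while varying only the $\vec X$-slots. This is exactly what reduces monotonicity of the binder to monotonicity of its body together with the two fixed-point axioms, and it is the one place where phrasing functoriality for an arbitrary vector $\vec f \leq \vec g$ (rather than a single variable) is essential to make the induction go through.
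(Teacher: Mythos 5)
Your proposal is correct and follows essentially the same route as the paper's own proof: structural induction on $e$, with the base and $+$/$a\cdot$ cases handled by the derived order-theoretic principles, and the $\mu$ case resolved by applying the inductive hypothesis to the body over the extended variable vector (with $h \df \mu Y\, e_1(Y,\vec g)$ fed into both sides), then chaining with the Prefix axiom and concluding via the Induction axiom. The bookkeeping point you flag as the crux is exactly the step the paper's proof performs in its displayed chain of inequalities.
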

\begin{proof}
By induction on the structure of $e(\vec X)$, working in $\RLA$:
\begin{itemize}
    \item If $e(\vec X)$ is $0$ then we have $0\leq 0$ by reflexivity.
    \item If $e(\vec X)$ is $1$ then we have $1\leq 1$ by reflexivity.
    \item If $e(\vec X)$ is $X_i$ then we have $f_i \leq g_i$ by assumption.
    \item If $e(\vec X)$ is $e_0(\vec X) + e_1 (\vec X)$ then we have $e_0(\vec f) \leq e_0 (\vec g) $ and $e_1 (\vec f) \leq e_1 (\vec g)$ by inductive hypothesis, and so indeed $e(\vec f) \leq e(\vec g)$ by $+$-intro and $+$-elim. 
    \item If $e(\vec X)$ is $ae'(\vec X)$ then $e'(\vec f) \leq e'(\vec g)$ by inductive hypothesis, and so $e(\vec f)\leq e(\vec g)$ by monotonicity of $a\cdot$.
    \item If $e(\vec X)$ is $\mu X e'(X,\vec X)$ then by inductive hypothesis we have $e'(f,\vec f) \leq e'(g, \vec g)$ whenever $f\leq g$, so:
    \[
    \begin{array}{r@{\ \leq\ }ll}
    \mu X e(X, \vec g) & \mu X e(X, \vec g) & \text{by reflexivity of $\leq$}\\
         e'(\mu X e(X,\vec g),\vec f) & e'(\mu X e(X,\vec g), \vec g) & \text{by inductive hypothesis} \\
         & \mu X e' (X,\vec g) & \text{by Prefix} \\
         \mu X e'(X,\vec f) & \mu X e'(X,\vec g) & \text{by Induction} \qedhere
    \end{array}
    \]
\end{itemize}
\end{proof}

\begin{example}
    [Postfix]
    \label{ex:postfix-mu}
    By a standard argument mimicking the proof of the Knaster-Tarski theorem, $\RLA$ proves that each formula $\mu X e(X)$ is a postfixed point:
    \begin{itemize}
        \item\label{item:postfix-mu} (Postfix) $\mu X e(X) \leq e (\mu X e(X))$
    \end{itemize}
To see this, by Induction it suffices to show that $e(\mu Xe(X)) $ is a prefixed point, i.e.\ $e(e(\mu X)) \leq e(\mu X e(X))$.
Now by the functors of \cref{prop:functoriality-rla} above it suffices to show $e(\mu Xe(X))\leq \mu X e(X)$, which is just the Prefix axiom.
    So, together with the Prefix axiom, $\mu X e(X)$ is provably a fixed point of $e(\cdot)$ in $\RLA$. 
\end{example}

One of the main results of this work is the coincidence of $\RLA$ and $\lang\cdot$:
\begin{theorem}
[Adequacy of $\RLA$]
\label{thm:soundness-completeness-rla}
    $\RLA \proves e\leq f$ $\iff $ $\lang e \subseteq \lang f$.
\end{theorem}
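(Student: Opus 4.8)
The plan is to prove the two inclusions separately, with the left-to-right (soundness) direction being routine and the right-to-left (completeness) direction carrying all the difficulty. For soundness, $\RLA \proves e \leq f \implies \lang e \subseteq \lang f$, I would induct on the $\RLA$ derivation, checking that $\lang \cdot$ validates each axiom. The semilattice axioms hold because $(\pow {\Alphabet^*}, \emptyset, \cup)$ is a bounded join-semilattice; the homomorphism axioms hold because $\sigma \mapsto a\sigma$ distributes over $\cup$ and sends $\emptyset$ to $\emptyset$. For the fixed-point axioms, the operator $A \mapsto \lang{e(A)}$ is monotone (an easy structural induction, in the spirit of \cref{prop:functoriality-rla}), so by the Knaster--Tarski theorem $\lang{\mu X e(X)} = \bigcap\{A : A \supseteq \lang{e(A)}\}$ is its least prefixed point, whence Prefix and Induction are immediate. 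Since the defined rules for $\leq$ and the equational inference rules preserve validity, soundness follows.

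The completeness direction, $\lang e \subseteq \lang f \implies \RLA \proves e \leq f$, is the crux, and I would factor it through an intermediate cyclic proof system $\CRLA$ operating on co-intuitionistic sequents $e \seqar \Gamma$ (read as $\lang e \subseteq \bigcup_{g\in\Gamma}\lang g$). The first step is a cyclic completeness argument: given $\lang e \subseteq \lang f$, I would run a proof-search procedure that decomposes sequents by the (essentially invertible) logical rules, driven by the $\FL$-closure. By \cref{prop:fl-props}(\ref{item:fl-finite}) only finitely many sequents arise up to $\eqfl$, so a successful search can be folded along repeated sequents into a \emph{finite} cyclic proof; conversely a failed maximal branch would expose a word $w \in \lang e \setminus \lang f$, contradicting the hypothesis. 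That the resulting cycles satisfy the required trace (progressing-$\mu$) condition is underwritten by \cref{prop:fl-props}(\ref{item:traces-have-least-inf-occ-elem}).

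The second, harder step is to translate a cyclic $\CRLA$ proof into a finite $\RLA$ derivation of $e \leq f$. Here I would read off, from the finite cyclic structure, a system of right-linear equations indexed by the sequents occurring in the proof, whose least solutions compute the relevant intersections of languages, and solve this system inside $\RLA$ via a preliminary ``bootstrapping'' for right-linear equation systems (in the style of Kozen and Silva for left-handed Kleene algebras). Each back-edge of the cyclic proof would then be discharged by a single application of the Induction axiom to the corresponding invariant, with correctness of the gluing resting on \cref{prop:functoriality-rla} together with the fixed-point axioms. I expect this back-translation — converting the global, trace-based soundness condition of the cyclic proof into finitary uses of Induction, and manufacturing the inductive invariants — to be the main obstacle, as is typical in cyclic proof theory; the co-intuitionistic shape of our sequents, with no product structure on the left, should nonetheless make the invariant construction considerably cleaner than in the Kleene-algebra setting.
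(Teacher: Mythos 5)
Your proposal is correct and follows essentially the same route as the paper: soundness by checking axioms with monotonicity plus Knaster--Tarski, and completeness factored through the cyclic system $\CRLA$ — first completeness of $\CRLA$ by validity-preserving proof search over the finite $\FL$-closure, then a back-translation to $\RLA$ that extracts invariants computing intersections of languages, realised as least solutions of right-linear equational systems bootstrapped inside $\RLA$ (cf.\ \cref{thm:crla-completeness}, \cref{lem:invariant}, and \cref{lemma:clla-to-rla}). The only differences are matters of detail the paper must also confront (e.g.\ its Weakening lemma, \cref{lem:weakening-lemma}, handling unguarded formulas during proof search, and the structural properties of big meets needed to make the invariant argument go through), none of which change the overall strategy you describe.
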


The soundness direction, $\implies$, is verified by checking each axiom of $\RLA$. 
For the fixed point axioms, let us first establish the following lemma:\anupam{I really don't think we need to give this in so much detail. It's immediate from the definitions that each $A\mapsto \lang{e(A)}$ is monotone.}
\begin{lemma}[Semantic monotonicity]
\label{lemma:semantic-functoriality}
If $\vec A \subseteq \vec B$ then $\lang{e(\vec A)}\subseteq\lang{e(\vec B)}$
\end{lemma}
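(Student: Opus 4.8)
The plan is to prove the statement by induction on the structure of $e$. To make the induction go through I would strengthen the hypothesis to simultaneous monotonicity in \emph{all} the (free) variables at once, so that when I descend under a $\mu$-binder the freshly bound variable is also covered by the inductive hypothesis. Concretely, the inductive hypothesis applied to a subexpression $d$ reads: if $\vec A \subseteq \vec B$ then $\lang{d(\vec A)}\subseteq\lang{d(\vec B)}$. The base cases are immediate from the definition of $\lang\cdot$: for $0$ and $1$ the interpretation is constant, and for a variable $X_i$ we have $\lang{X_i(\vec A)} = A_i \subseteq B_i = \lang{X_i(\vec B)}$ directly from the assumption $\vec A \subseteq \vec B$. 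The cases $e = e_0 + e_1$ and $e = a e'$ follow at once from the inductive hypothesis, since both union and the letter-prefixing map $S \mapsto \{a\sigma : \sigma \in S\}$ are monotone.

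The only interesting case is $e = \mu Y\, f(Y,\vec X)$, which I would handle by a Knaster--Tarski style argument on the defining intersection. Writing $S_A \df \{C : C \supseteq \lang{f(C,\vec A)}\}$ and $S_B \df \{D : D \supseteq \lang{f(D,\vec B)}\}$, so that $\lang{\mu Y f(Y,\vec A)} = \bigcap S_A$ and $\lang{\mu Y f(Y,\vec B)} = \bigcap S_B$, it suffices to show $S_B \subseteq S_A$: a larger family of prefixed points has a smaller intersection, so $S_B \subseteq S_A$ yields $\bigcap S_A \subseteq \bigcap S_B$, which is exactly the desired inclusion. To see $S_B \subseteq S_A$, fix $D \in S_B$, i.e.\ $D \supseteq \lang{f(D,\vec B)}$. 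Applying the inductive hypothesis to the strictly smaller expression $f$, with the assignment $C \df D$ in the $Y$-slot (using reflexivity $D \subseteq D$) together with $\vec A \subseteq \vec B$, gives $\lang{f(D,\vec A)} \subseteq \lang{f(D,\vec B)} \subseteq D$, whence $D \in S_A$, as required.

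The main thing to get right — and the one genuinely subtle point — is the bookkeeping around the $\mu$ case: one must strengthen the statement to cover the bound variable (otherwise the inductive hypothesis is not available for $f$ in its first argument), and one must correctly track the order-reversal in passing from $S_B \subseteq S_A$ to $\bigcap S_A \subseteq \bigcap S_B$. Everything else is routine unfolding of the clauses defining $\lang\cdot$, and in particular no appeal to the $\RLA$ axioms is needed, since this is a purely semantic statement about the regular language model.
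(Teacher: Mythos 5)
Your proposal is correct and follows essentially the same route as the paper: structural induction with the vector-valued (simultaneous) statement, immediate base and $+$/$a\cdot$ cases, and the $\mu$ case handled by showing every prefixed point for $\vec B$ is a prefixed point for $\vec A$ (via the inductive hypothesis applied to the body in both the bound slot and the parameters), so the defining intersections compare in the right direction. The paper compresses this last step into a one-line "by inductive hypothesis" inside the chain of inclusions, but the underlying argument — including the order-reversal from the inclusion of prefixed-point families to the inclusion of their intersections — is exactly what you spelled out.
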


\begin{proof}
By induction on the structure of $e(\vec X)$. 
\begin{itemize}
    \item If $e(\vec X) = 0$ then the statement is trivial.
    \item If $e(\vec X) = X_i$ then the statement is trivial.
    \item If $e(\vec X)=e_0(\vec X)+e_1(\vec X)$ then:
\begin{align*}
\lang{e(\vec A)} &= \lang{e_0(\vec A)}\cup \lang{e_1(\vec A)} \\
&\subseteq \lang{e_0(\vec B)}\cup \lang{e_1(\vec B)} && \text{by inductive hypothesis}\\
&\subseteq \lang{e(\vec B)}
\end{align*}
\item If $e(\vec X)=ae'(\vec X)$ then:
\begin{align*}
\lang{e(\vec A)} &= a\lang{e'(\vec A)}\\
&\subseteq a\lang{e'(\vec B)} && \text{by inductive hypothesis}\\
&\subseteq \lang{e(\vec B)}
\end{align*}
\item If $e(\vec X)=\mu Y e'(\vec X,Y)$ then:
\begin{align*}
\lang{e(\vec A)} &= \bigcap\{C \supseteq \lang{e'(\vec A,C)}\}\\
&\subseteq \bigcap\{C \supseteq \lang{e'(\vec B,C)}\} && \text{by inductive hypothesis}\\
&\subseteq \lang{e(\vec B)} \qedhere
\end{align*}
\end{itemize}
\end{proof}

\begin{proposition}[Soundness of $\RLA$]
\label{thm:rla-soundness}
    $\RLA \proves e\leq f$ $\implies$ $\lang e \subseteq \lang f$.
\end{proposition}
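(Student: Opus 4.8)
The plan is to prove soundness by a standard induction on the derivation of $\RLA \proves e \leq f$, verifying that each axiom and inference rule of $\RLA$ is validated by the language model $\lang \cdot$. Since $\leq$ is defined by $e \leq f \iff e + f = f$, and $\lang{e+f} = \lang e \cup \lang f$, the claim $\lang e \subseteq \lang f$ is exactly the semantic counterpart of the derived inequality. So it suffices to check that $\lang \cdot$ is a model of all the equational axioms in \cref{def:axioms-rla}, and that the derived order-theoretic principles and monotonicity are preserved under $\subseteq$.

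The semilattice axioms in item~\eqref{item:join-semilattice} are immediate since $\lang \cdot$ interprets $+$ as union and $0$ as $\emptyset$, and union is associative, commutative, idempotent, with $\emptyset$ as unit. The homomorphism axioms in item~\eqref{item:semilattice-homomorphism} follow from $\lang{ae} = \{a\sigma : \sigma \in \lang e\}$: we have $\lang{a0} = \{a\sigma : \sigma \in \emptyset\} = \emptyset = \lang 0$, and $\lang{a(e+f)} = \{a\sigma : \sigma \in \lang e \cup \lang f\} = \lang{ae} \cup \lang{af} = \lang{ae + af}$. The derived order axioms (reflexivity, antisymmetry, transitivity, boundedness, the $+$-intro/elim rules, and monotonicity of $a\cdot$) all transfer routinely, the last one being exactly the special case $\vec A \subseteq \vec B$ of \cref{lemma:semantic-functoriality}.

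The main obstacle, as flagged in the excerpt, is validating the two fixed-point axioms, where \cref{lemma:semantic-functoriality} does the essential work. For the Prefix axiom $e(\mu X e(X)) \leq \mu X e(X)$, recall $\lang{\mu X e(X)} = \bigcap \{ A : \lang{e(A)} \subseteq A \}$, the intersection of all prefixed points. I would first argue that this intersection is itself a prefixed point: writing $M := \lang{\mu X e(X)}$, for any prefixed point $A$ we have $M \subseteq A$, hence $\lang{e(M)} \subseteq \lang{e(A)} \subseteq A$ by \cref{lemma:semantic-functoriality}; since this holds for every such $A$, we conclude $\lang{e(M)} \subseteq \bigcap \{A : \lang{e(A)} \subseteq A\} = M$, which is exactly $\lang{e(\mu X e(X))} \subseteq \lang{\mu X e(X)}$. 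For the Induction axiom, suppose $\lang{e(f)} \subseteq \lang f$, i.e.\ $\lang f$ is a prefixed point; then $\lang f$ is one of the sets in the intersection defining $M$, so $M = \lang{\mu X e(X)} \subseteq \lang f$, which is the required $\lang{\mu X e(X)} \leq \lang f$.

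Finally, to make the induction on derivations go through cleanly, I would phrase the inductive hypothesis as validity of $\lang \cdot$ under substitution, so that the fixed-point axioms can be instantiated at arbitrary expressions $e(X)$ rather than only at the concrete $\mu X e(X)$; the monotonicity of each operator $A \mapsto \lang{e(A)}$ supplied by \cref{lemma:semantic-functoriality} is precisely what lets the Knaster--Tarski-style argument above survive under such instantiation. With all axioms verified, the induction closes and $\RLA \proves e \leq f$ implies $\lang e \subseteq \lang f$.
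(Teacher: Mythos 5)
Your proposal is correct and takes essentially the same route as the paper: the algebraic axioms are checked directly, and the fixed-point axioms are validated using semantic monotonicity (\cref{lemma:semantic-functoriality}). Your explicit argument that the intersection of prefixed points is itself a prefixed point and lies below every prefixed point is just an unfolding of the Knaster--Tarski theorem, which the paper's proof sketch cites instead of spelling out.
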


\begin{proof}
[Proof sketch]
Each axiom other than the fixed point axioms are easily checked. For the fixed point axioms, 
by \cref{lemma:semantic-functoriality} above we have that each $A \mapsto \lang{e(A)}$ is a monotone operation on $\pow{\Alphabet^*}$, and so $\lang {\mu Xe(X)}$ is indeed its least fixed point by the Knaster-Tarski fixed point theorem.\anupam{commented more detailed argument below, reproving KT.}
\end{proof}

The converse, completeness, is more difficult and will be established later in \cref{sec:crla-to-rla}.

\anupam{I commented formal models of $\RLA$ here.}
While this work is mainly proof theoretic, we may sometimes consider models of $\RLA$. 
Note that, as $\mu$ is rather a binder at the level of the language and not a bona-fide operation, 
structures must give an interpretation of all expressions, satisfying the appropriate axioms. 
Explicit definitions of such structures are given for the more general \emph{(Park) $\mu$-semirings} in, e.g.~\cite{EsikLeiss05:alg-comp-semirings,Leiss16:matrices-over-mu-cont-chom-alg}.

\begin{example}
    [Non-standard language interpretations]
    \label{ex:non-standard-interps}
    Recalling again \cref{ex:empty-univ-as-rlexps}, note that $\top' \df \mu X \left(1 + \sum\limits_{a \in \Alphabet} aX\right)$ is not provably a top element in $\RLA$. 
    For this simply consider the structure of all languages over $\Alphabet$, with standard interpretations of $1,+,\mu$, and interpret each $a \in \Alphabet$ as the homomorphism $\emptyset \mapsto \emptyset $ and $A\mapsto \{\epsilon\}$ when $A\neq \emptyset$.
    In this case we actually have $\top' = 1$.\footnote{Note that axiomatising $\top$ elements over Kleene algebras, with respect to language and relation models, is nontrivial \cite{PousWagemaker22}.}
\end{example}

\begin{example}
    [$\omega$-languages]
    \label{ex:omega-langs-are-a-rla}
    $\pow {\Alphabet^\omega}$ forms a model of $\RLA$ under the usual interpretation of $+,a,\mu$, and interpreting $1$ arbitrarily cf.~\cref{rem:1-is-just-a-thing}. 
    Note here that $\omega$-words are closed under concatenation with letters (or even finite words) on the left,
    and least fixed points exist as we are in a powerset lattice, which is complete.
    Referencing the previous \cref{ex:non-standard-interps}, note that this model again does not have $\top'\df \mu X \left(1 + \sum\limits_{a \in \Alphabet} aX\right)$ as a top element.
    E.g.\ if $1$ is interpreted as some $\Box^\omega$, assuming $\Box\in \Alphabet$, then $\top'$ denotes just the $\omega$-words with finitely many occurrences of non-$\Box$ letters (i.e.\ of \emph{finite support}, wrt $\Box$).
\end{example}

\begin{example}
    [Relational models]
    Like Kleene algebras, relational structures also form models of $\RLA$.
    Here the domain is the set of binary relations over a fixed set, and each $a\cdot$ denotes pre-composition by some fixed binary relation. 
    $0$ is the empty relation, $+$ is the union of relations, and $\mu$ gives the least fixed point as expected.\anupam{maybe too vague, be more explicit?}
\end{example}

\subsection{An agebraic viewpoint via systems of equations}
Another way of speaking about models of $\RLA$ is via algebraic structures admitting a certain fixed point theorem.
Indeed this is the role played by \emph{Chomsky algebras} for $\mu$-expressions with general products, cf.~e.g.~\cite{GraHenKoz13:inf-ax-cf-langs,Leiss16:matrices-over-mu-cont-chom-alg}.
The intended semantics there exhausts all context-free languages, but it is nonetheless useful for us to sketch a similar development here.

A \defname{(right-linear) system (of equations)} over variables $\vec X = X_0, \dots, X_{k-1}$ is a set $\es E(\vec X)$ of equations the form $\{X_i = I_i(\vec X)\}_{i<k}$, where each $I_i(\vec X)$ (the \defname{invariant} of $X_i$) is a ($\mu$-free) expression over $\vec X$ (possibly using parameters from a structure).
Write $\ines E (\vec X)$ for $\es E(\vec X)$ with $\geq $ instead of $=$ in each clause.

\anupam{point out similarity to right-linear grammars here}

\begin{definition}
A \defname{right-linear algebra} is a structure $\struct = (L,0,1,+,\cdot )$ where:
\begin{itemize}
    \item $(L,0,+)$ is a bounded (join-)semilattice.
    \item $\cdot : \Alphabet \times L \to L$ where each $a\cdot $ is a bounded semilattice homomorphism.
    \item Each system $\es E (\vec X) = \{X_i = I_i(\vec X)\}_{i<k}$ admits \emph{least solutions} $\vec S \in L$, i.e.\ such that:
    \begin{itemize}
        \item $\struct \models \es E (\vec S)$; and,
        \item $\struct \models \es E (\vec S') \implies \struct \models \vec S \leq \vec S'$.\footnote{Here we are writing $\vec S\leq \vec S'$ for the conjunction of all inequalities $S_i\leq S_i'$.}
    \end{itemize}
\end{itemize}
\end{definition}


In the context-free setting, Chomsky algebras give us an algebraic viewpoint of (standard) Park $\mu$-semirings \cite{GraHenKoz13:inf-ax-cf-langs}. 
We have a similar correspondence here:
\begin{proposition}
\label{thm:rlas-are-just-rla-models}
    The models of $\RLA$ are just the right-linear algebras.
\end{proposition}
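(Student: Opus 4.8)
The plan is to prove both inclusions, which amounts to showing that the two ways of imposing \emph{leastness} of fixed points---via the $\mu$ binder (\viz the Prefix and Induction axioms) on the one hand, and via least solutions of right-linear systems on the other---are interderivable over the common structure of a bounded semilattice $(L,0,+)$ with semilattice homomorphisms $a\cdot$. Since the semilattice and homomorphism clauses of the two definitions coincide verbatim, all the content lies in the fixed-point conditions, and the proof is really an instance of the equivalence between iterated single fixed points and simultaneous fixed points.

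For the direction ``every model of $\RLA$ is a right-linear algebra'', I would show that such a model admits least solutions to every system $\es E(\vec X) = \{X_i = I_i(\vec X)\}_{i<k}$ by a Beki\'c-style (Gaussian) elimination, by induction on $k$ (allowing arbitrary expression-invariants, since a model interprets all expressions). For $k=1$ the least solution of $X_0 = I_0(X_0)$ is $\mu X_0 I_0(X_0)$: it is a fixed point by Prefix together with Postfix (\cref{ex:postfix-mu}), and least among solutions by the Induction axiom. For the inductive step, solve the last equation parametrically as $S_{k-1}(\vec X_{<k-1}) := \mu X_{k-1} I_{k-1}(\vec X_{<k-1},X_{k-1})$, substitute it into the remaining $k-1$ equations, and apply the inductive hypothesis; \cref{prop:functoriality-rla} supplies the monotonicity needed to verify that the resulting tuple is indeed a least solution. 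In fact this argument yields the stronger statement that the solution is least among all \emph{pre}-solutions $\ines E(\vec X)$, which is more than the definition of right-linear algebra demands.

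For the converse, ``every right-linear algebra $\struct$ is a model of $\RLA$'', the task is to equip $\struct$ with an interpretation of the $\mu$ binder satisfying all axioms. Here I would use the \emph{canonical} (Fischer--Ladner) system of \eqref{eq:rlexp-to-rlgram}: crucially its invariants are $\mu$-\emph{free}, so the defining clause of a right-linear algebra directly provides a least solution $\vec S = (S_f)_{f\in\fl e}$, and I set $\interp{\struct}{f} := S_f$ (relative to a valuation of the free variables, recorded as parameters). Two bookkeeping points are needed. First, \emph{coherence}: if $f\in\fl e$ then $\fl f \subseteq \fl e$ is itself Fischer--Ladner-closed and its equations mention only variables $X_g$ with $g \in \fl f$, so it is an \emph{independent} subsystem; as least solutions to all ($\mu$-free) subsystems exist, the least solution of the ambient system restricts to that of the subsystem, so $S_f$ is well-defined independently of the enclosing expression. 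This in turn yields the usual substitution lemma $\interp{\struct}{e(g)} = F(\interp{\struct}{g})$, writing $F$ for the operator sending $v$ to the value of $e$ with its recursion variable valued at $v$. Second, the semilattice, homomorphism and Prefix axioms are immediate, since the defining equations of the canonical system hold as \emph{equalities} at $\vec S$; in particular $\interp{\struct}{\mu X e} = \interp{\struct}{e(\mu X e)} = F(\interp{\struct}{\mu X e})$ gives both Prefix and Postfix.

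The main obstacle is the Induction axiom. By the substitution lemma it unwinds to: given $t$ with $F(t)\leq t$, show $\interp{\struct}{\mu X e}\leq t$, \ie that $\interp{\struct}{\mu X e}$ is the least \emph{pre}-fixed point of $F$. The plan is to lift such a $t$ to a \emph{pre-solution} $\vec T \in \ines E$ of the canonical system by capping the outer recursion at $t$: for each $f\in\fl{\mu X e}$, written $f = h[\mu X e/X]$, set $T_f$ to be the value of $h$ under the valuation sending the recursion variable $X$ to $t$. Every defining clause then holds as an equality except the one for $X_{\mu X e}$, which becomes $F(t)\leq t$, so $\vec T$ is a pre-solution with $T_{\mu X e}=t$. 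It remains to conclude $\vec S \leq \vec T$. This last step---that for right-linear systems the least solution lies below every pre-solution, \ie that least solutions coincide with least pre-fixed points---is the genuine fixed-point-theoretic content of the proposition and is where I expect the difficulty to lie. It is precisely the monotone, Knaster--Tarski-style property that the $\ines E$ systems are set up to express, exploiting \cref{prop:functoriality-rla} and the affine (Kleene-iterative) shape of right-linear invariants, and it is the exact dual of the strengthening already obtained in the forward direction.
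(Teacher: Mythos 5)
Your overall route coincides with the paper's: Beki\'c-style elimination (the paper's \cref{lem:bekic}) for the direction ``model of $\RLA$ $\Rightarrow$ right-linear algebra'', and canonical Fischer--Ladner systems (\cref{def:canonical-systems}) for the converse, and your forward direction is correct as it stands. The problem is the final step of the converse, which you flag but leave unproved: that the least solution $\vec S$ of the canonical system lies below every \emph{pre}-solution $\vec T$. This is a genuine gap, and it is not, as you suggest, a routine ``Knaster--Tarski-style'' consequence of monotonicity: Knaster--Tarski needs completeness, which is unavailable here, and on an incomplete semilattice a monotone---even additive---map can have a least fixed point that does \emph{not} lie below some pre-fixed point. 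Concretely, on the chain $0 < 1 < 2 < \cdots < 2' < 1' < 0' < \top$ with $a(0)=0$, $a(n)=n+1$ for $n\geq 1$, $a(n')=(n+1)'$, $a(\top)=\top$ (a bounded semilattice homomorphism), the map $x \mapsto ax + 1$ has $\top$ as its unique, hence least, fixed point, yet every $n'$ is a pre-fixed point and $\top \not\leq n'$. Since the paper's definition of right-linear algebra only demands leastness among \emph{exact} solutions, your lifted pre-solution $\vec T$ is simply not in the scope of that leastness clause, and \cref{prop:functoriality-rla} cannot rescue this, being a statement of the theory $\RLA$ whose validity in the structure is precisely what you are trying to establish.

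The missing idea is a ``slack'' trick exploiting the fact that a right-linear algebra must resolve \emph{all} right-linear systems, not only canonical ones. Given $\es E(\vec X) = \{X_i = I_i(\vec X)\}_{i<k}$, the system $\es E'(\vec X) = \{X_i = X_i + I_i(\vec X)\}_{i<k}$ is again right-linear, and its exact solutions are precisely the pre-solutions of $\es E$, since $x = x + y$ iff $y \leq x$. So its least solution $\vec S_0$, which exists by the right-linear algebra axiom, is the least pre-solution of $\es E$. Monotonicity of the $\mu$-free invariants (which needs only the semilattice and homomorphism axioms, i.e.\ the $\mu$-free cases of the functoriality argument) then shows $\vec S_0$ is an exact solution of $\es E$: setting $\vec S_1 := (I_i(\vec S_0))_{i<k} \leq \vec S_0$ we get $I_i(\vec S_1) \leq I_i(\vec S_0) = S_{1,i}$, so $\vec S_1$ is itself a pre-solution, whence $\vec S_0 \leq \vec S_1 \leq \vec S_0$ by leastness. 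Thus $\vec S_0$ is simultaneously the least exact solution and the least pre-solution, giving $\vec S = \vec S_0 \leq \vec T$ as your argument requires. (Consistently with this, the chain above is not a right-linear algebra: there the slack system $X = X + aX + 1$ has solution set $\{n' : n \in \mathbb N\}\cup\{\top\}$, which has no least element.) With this lemma inserted, your proof goes through.
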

The proof is standard, cf.~e.g.~\cite{GraHenKoz13:inf-ax-cf-langs}, but let us briefly explain some of the intermediate steps as we shall make use of some of these concepts later.

\begin{definition}
    [Canonical systems]
    \label{def:canonical-systems}
    Let $E$ be a finite set of expressions closed under $\FL$, i.e.\ $\fl E = E$, and write
    $\vec X_{E} = (X_e)_{e\in E}$.
    The \defname{canonical} system for $E$ is $\es E_{E}(\vec X_{E})$ given by all equations of the form:
\begin{itemize}
    \item $X_0 = 0$
    \item $X_1 = 1$
    \item $X_{ae} = aX_e$
    \item $X_{e+f} = X_e + X_f$
    \item $X_{\mu X e(X)} = X_{e(\mu X e(X))}$
\end{itemize}
\end{definition}


From here we can interpret any right-linear $\mu$-expression within right-linear algebras simply as least solutions to their canonical systems.
%
%
The other direction of \cref{thm:rlas-are-just-rla-models} boils down to what is known as \emph{Beki\'c's theorem}, reducing least solutions of equational systems to individual least fixed points.
The argument is similar to our reasoning in \cref{ex:rlgram-even-odd-to-rlexps}, and the same result holds in our right-linear syntax:
\begin{lemma}
    [Essentially Beki\'c \cite{Bekic84}]
    \label{lem:bekic}
     Working in a $\RLA$ model $\struct$, consider a `system' $\es E(X,Y)$ given by,
     \[
     \begin{array}{r@{\ = \ }l}
          X & e(X,Y) \\
          Y & f(X,Y)
     \end{array}
     \]
     where $e(X,Y)$ and $f(X,Y)$ may be \emph{arbitrary} expressions (with $\mu$s) using parameters from $\struct$.
     There are least solutions $E,F$ in $\struct$ given by,
     \[
     \begin{array}{r@{\ \df \ }l}
          E & \mu X (e(X,f'(X)) \\
          F & f'(E)
     \end{array}
     \]
     where $f'(X) \df \mu Y f(X,Y)$.
\end{lemma}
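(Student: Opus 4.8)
The plan is to verify that the proposed $E$ and $F$ are indeed the least solutions of the system $\es E(X,Y)$, splitting the work into two parts: showing they \emph{are} solutions, and showing they are \emph{least}. Throughout I would work inside an arbitrary $\RLA$ model $\struct$, freely using the results already established, in particular the Prefix and Induction axioms, the Postfix property from \cref{ex:postfix-mu}, and Functoriality (\cref{prop:functoriality-rla}). The key intuition, mirroring the even/odd grammar computation in \cref{ex:rlgram-even-odd-to-rlexps}, is that we eliminate $Y$ first by setting $f'(X) \df \mu Y f(X,Y)$, substitute this into the equation for $X$ to get a single fixed point $E \df \mu X e(X, f'(X))$, and then recover $F$ as $f'(E)$.

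First I would check that $(E,F)$ solves the system. For the $Y$-equation, since $f'(X) = \mu Y f(X,Y)$ is a genuine fixed point of $Y \mapsto f(X,Y)$ (by Prefix together with Postfix, \cref{ex:postfix-mu}), instantiating at $X = E$ gives $F = f'(E) = f(E, f'(E)) = f(E,F)$, as required. For the $X$-equation, $E = \mu X e(X, f'(X))$ is a fixed point of $X \mapsto e(X, f'(X))$, so $E = e(E, f'(E)) = e(E,F)$, again as required. So $(E,F)$ is a solution.

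Next comes leastness, which is the main obstacle and where the Induction axiom does the real work. Suppose $(S,T)$ is any other solution, i.e.\ $S = e(S,T)$ and $T = f(S,T)$ (it suffices to treat the inequational versions $e(S,T) \leq S$ and $f(S,T) \leq T$, since these are what Induction consumes). From $f(S,T) \leq T$ and the definition $f'(S) = \mu Y f(S,Y)$, the Induction axiom for the inner $\mu Y$ yields $f'(S) \leq T$. Using Functoriality (\cref{prop:functoriality-rla}) in the first argument of $e$, together with monotonicity of $e$ in its second argument applied to $f'(S) \leq T$, I would then derive
\[
e(S, f'(S)) \leq e(S, T) \leq S,
\]
so $S$ is a prefixed point of $X \mapsto e(X, f'(X))$. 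The Induction axiom for the outer $\mu X$ then gives $E = \mu X e(X, f'(X)) \leq S$. Finally, from $E \leq S$ and Functoriality, $f'(E) \leq f'(S) \leq T$, i.e.\ $F \leq T$. Hence $(E,F) \leq (S,T)$, establishing leastness.

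The delicate point to get right is the monotonicity of $f'$, namely that $E \leq S$ implies $f'(E) \leq f'(S)$; this is exactly an instance of Functoriality applied to the expression $\mu Y f(X,Y)$ viewed as a function of its free parameter $X$, so no new principle is needed. I would remark that the argument is entirely symmetric in spirit to Beki\'c's original reduction and that, crucially, it makes no use of any $\mu$-free restriction on $e$ and $f$ — the same proof goes through for arbitrary expressions with parameters, which is precisely the generality the lemma claims.
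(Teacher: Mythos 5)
Your proof is correct, and it is precisely the classical Beki\'c argument that the paper itself leaves implicit: the paper states \cref{lem:bekic} without proof, deferring to \cite{Bekic84} and the informal computation in \cref{ex:rlgram-even-odd-to-rlexps}. Your two applications of the Induction axiom (first for the inner $\mu Y$ to get $f'(S)\leq T$, then for the outer $\mu X$ to get $E\leq S$), combined with \cref{prop:functoriality-rla} for monotonicity and \cref{ex:postfix-mu} for the fixed-point equalities, constitute exactly the intended argument; moreover, working with pre-fixed-point solutions $e(S,T)\leq S$, $f(S,T)\leq T$ proves a statement that subsumes leastness among exact solutions, so nothing is lost there.
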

Repeatedly applying the above lemma allows us to resolve systems of equations within any model of $\RLA$, and hence any model of $\RLA$ is a right-linear algebra. 



\subsection{Comparison to (left-handed) Kleene algebras}
\label{sec:lhKAs}
A \defname{left-handed Kleene algebra} (lhKA) is a structure $\mathfrak K = (K,0,1,+,\cdot, *)$ where $(K,0,1,+,\cdot)$ is an idempotent semiring (aka a \emph{dioid}) and:\footnote{A \defname{Kleene algebra} further satisfies symmetric properties for the $*$, namely $A^* = 1 + A^* A$ and $AB \leq A \implies AB^* \leq B$.}

   \begin{itemize}
    \item $A^* = 1 + AA^*$
    \item $AB\leq B \implies A^*B\leq B$
\end{itemize} 
It is known that lhKAs also admit least solutions to (right-linear) equational systems, e.g.\ \cite{KozSil12:left-handed-completeness,KozSil20:left-handed-completeness}.\footnote{The authors of \cite{KozSil12:left-handed-completeness,KozSil20:left-handed-completeness} refer to such systems as `left-linear'.} 
By restricting multiplication to only letters in the first argument we thus have:
\begin{proposition}
\label{prop:lkas-are-rlas}
    Each left-handed Kleene algebra is also a right-linear algebra.
\end{proposition}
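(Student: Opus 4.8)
The plan is to manufacture a right-linear algebra directly from the data of a left-handed Kleene algebra $\mathfrak K = (K,0,1,+,\cdot,*)$, together with a choice of interpretation $h\colon \Alphabet \to K$ of the letters (writing $\hat a \df h(a)$). I keep $0,1,+$ as in $\mathfrak K$, define the action of each letter by left multiplication, $a\cdot x \df \hat a \cdot x$, and simply forget $*$ as primitive data; it will reappear implicitly as the device producing least solutions. So the task reduces to verifying the three clauses in the definition of a right-linear algebra against this reduct.

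First I would dispatch the two pointwise conditions, both immediate from the idempotent-semiring axioms. Since $(K,0,1,+,\cdot)$ is a dioid, $(K,0,+)$ is a bounded join-semilattice: commutativity, associativity, idempotency $x+x=x$, and $0$ being the identity are precisely the semilattice axioms of $\RLA$ (and the induced order is the usual $x\leq y \iff x+y=y$). That each $a\cdot$ is a bounded semilattice homomorphism amounts to the two laws $\hat a\cdot 0 = 0$ and $\hat a\cdot(x+y) = \hat a x + \hat a y$, which are the left-annihilation and left-distributivity laws holding in any semiring. No axiom is required of $1$, so its rôle is inherited for free.

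The substantive clause is that every right-linear system admits least solutions. Given a system $\es E(\vec X) = \{X_i = I_i(\vec X)\}_{i<k}$, I would first put each $\mu$-free invariant $I_i$ into affine normal form. Because products are restricted to letters in the left argument, a routine structural induction on $I_i$ — using only left-distributivity and the homomorphism laws above to push letters inward and collect summands — rewrites $I_i(\vec X)$ as $\sum_{j<k} A_{ij}X_j + B_i$ for suitable coefficients $A_{ij}, B_i \in K$. Hence the whole system is equivalent to a single affine matrix equation $\vec X = A\vec X + \vec B$ over $K$, with $A = (A_{ij})\in K^{k\times k}$ and $\vec B = (B_i)\in K^k$. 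By the cited fact that left-handed Kleene algebras admit least solutions to exactly such systems \cite{KozSil12:left-handed-completeness,KozSil20:left-handed-completeness} — concretely, $\vec S = A^*\vec B$ computed in the matrix lhKA over $K$ — the third clause holds and the reduct is a right-linear algebra.

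The only genuinely nontrivial ingredient is this last least-solvability result, which rests on the matrix-star construction for (left-handed) Kleene algebras; since it is already available in the literature, the hard part is really just invoking it correctly rather than reproving it. Everything else is bookkeeping: the semilattice and homomorphism checks are one-line semiring calculations, and the normalisation of invariants is a straightforward induction whose only content is matching our notion of right-linear system with the affine (``left-linear'') systems of Kozen and Silva. I would close by remarking, consistently with the introduction, that the converse of \cref{prop:lkas-are-rlas} fails, since a right-linear algebra need carry no multiplicative structure at all.
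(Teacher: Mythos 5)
Your proof is correct and takes essentially the same route as the paper: the paper likewise obtains the result by restricting multiplication to letters on the left and invoking the known fact that left-handed Kleene algebras admit least solutions to (right-)linear systems \cite{KozSil12:left-handed-completeness,KozSil20:left-handed-completeness}, with the semilattice and homomorphism clauses being immediate dioid facts. The only difference is that you spell out what the paper leaves implicit — the affine normal form $\vec X = A\vec X + \vec B$ and the matrix-star solution $A^*\vec B$ — which is a faithful elaboration rather than a different argument.
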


However the converse does \emph{not} hold:
\begin{example}
    [RLAs not lhKAs]
    Recalling \cref{rem:1-is-just-a-thing}, note that we can redefine the interpretation of $1$ in $\lang\cdot$ to be an arbitrary language, so that it is not longer an identity for products.
More generally, the model $\pow{\Alphabet^\omega}$ of \cref{ex:omega-langs-are-a-rla} does not extend to \emph{any} Kleene algebra.
For this, suppose that the homomorphisms $a \in \Alphabet$ do indeed extend to a product operation $\cdot$, satisfying $(aA)\cdot B = a(A\cdot B)$.
This forces the product to just be the left-projection, and so $\cdot$ cannot have any right-identity.
\end{example}

So while it is known that the equational theory of left-handed 
 Kleene algebras is complete for the model of regular languages, this does not immediately entail an analogous result for $\RLA$.
Indeed \cref{thm:soundness-completeness-rla} seems somewhat stronger, given \cref{thm:rlas-are-just-rla-models,prop:lkas-are-rlas}. 
Nonetheless, our proof will use methods similar to those that have been employed for left-handed Kleene algebra in \cite{DasDouPou18:lka-completeness}.

\section{A cyclic system for right-linear algebra}
\label{sec:crla}
In this section we introduce a sequent-based \emph{cyclic} proof system $\CRLA$ for right-linear $\mu$-expressions, and show its soundness and completeness for the regular language model $\lang\cdot$.
Introducing $\CRLA$ here serves two purposes: (a) we use its completeness to infer the completeness of the axiomatisation $\RLA$ later in \cref{sec:crla-to-rla}; and (b) this section is a `warm-up' for the more complex development we require later for treating $\omega$-words via \emph{greatest} fixed points in \cref{sec:gfp}.

\subsection{A sequent calculus}
A \defname{sequent} is an expression $e \seqar \Gamma$, where $\Gamma$ is a set of expressions (called a \defname{cedent}).\abhishek{Isn't it more standard to say $\Gamma$ is a succedent and $e$ is the antecent and they are both cedents?}\anupam{$e$ is not a cedent, it is a formula. I prefer to just say LHS and RHS, for accessibility.}
We interpret the RHSs of sequents as sums of their elements, and duly write $\lang \Gamma \df \bigcup\limits_{f\in\Gamma}\lang f$.
As usual we omit braces when writing cedents, and use commas for set-union.

\begin{remark}
    [On the data structure of sequents]
    The structure of sequents we require for our right-linear syntax is significantly simpler than those of previous systems for reasoning about regular languages, \cite{DasPou17:hka,DasDouPou18:lka-completeness}. 
    Those works, working directly with regular expressions which are closed under arbitrary products, require a system $\mathsf{HKA}$ working with lines of the form $\Gamma \seqar S$, where $\Gamma $ is a list of expressions and $S$ is a (multi)set of lists of expressions, comprising a `hypersequential' structure.
    Instead the current notion of sequent is essentially \emph{co-intuitionistic}: rather than having only one formula on the RHS, we have only one on the LHS.
\end{remark}

\begin{figure}
    \textbf{Identity, modal and structural rules:}
    \[
    \vlinf{\id}{}{e \seqar e}{}
    \qquad
    \vlinf{\kk a}{}{ae \seqar a\Gamma}{e\seqar \Gamma}
    \qquad
    \vlinf{\wk}{}{e \seqar \Gamma, f}{e \seqar \Gamma}
    \]

    \medskip
    \textbf{Left logical rules:}
    \[
    \vlinf{\lr 0}{}{0 \seqar \Gamma}{}
    \qquad
    \vliinf{\lr +}{}{e + f \seqar \Gamma }{e \seqar \Gamma}{f \seqar \Gamma}
    \qquad
    \vlinf{\lr \mu }{}{\mu X e(X) \seqar \Gamma}{e(\mu X e(X)) \seqar \Gamma}
    \]

    \medskip
    \textbf{Right logical rules:}
    \[
    \vlinf{\rr 0}{}{e \seqar \Gamma,0}{e \seqar \Gamma}
    \qquad
    \vlinf{\rr +}{}{e \seqar \Gamma, f_0+f_1}{e \seqar \Gamma, f_0,f_1}
    \qquad
    \vlinf{\rr \mu}{}{e \seqar \Gamma, \mu X e(X)}{e \seqar \Gamma, e (\mu X e(X))}
    \]
    \caption{Rules of the system $\LRLAhat$.}
    \label{fig:llahat}
\end{figure}

We define the system $\LRLAhat$ by the rules in \cref{fig:llahat}.
Here we are writing $a\Gamma \df \bigcup\limits_{f\in \Gamma} af$ in the $\kk a$ rule.
Note here that we allow $\Gamma $ to be empty too.\anupam{this is where $a0=0$ is being used.}

As usual for left or right steps the \defname{principal} formula is the distinguished formula on the left-hand side (LHS) or right-hand side (RHS), respectively, of the lower sequent, as typeset in \cref{fig:llahat}.
Any distinguished formulas on the LHS or RHS, respectively, of the upper sequent are called \defname{auxiliary}.

It is not hard to see that each rule of $\LRLAhat$ is sound for $\RLA$:
\begin{lemma}
[Local soundness]
\label{lemma:local-soundness}
     $\LRLAhat\proves e\seqar \Gamma$ $\implies$ $\RLA \proves e \leq \sum \Gamma$.
\end{lemma}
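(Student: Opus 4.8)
The plan is to establish the statement by a routine rule-by-rule verification: for each inference rule of $\LRLAhat$, I would assume the conclusion(s) of the rule's premise(s) hold in $\RLA$ (as inequalities $e \leq \sum\Gamma$) and derive that the conclusion of the rule holds in $\RLA$. Since $\LRLAhat$ proofs are finite trees built from these rules, a straightforward induction on the height of the $\LRLAhat$ derivation then yields the claim. (Note this is the purely \emph{local} statement for the finite system $\LRLAhat$, not yet the cyclic system $\CRLA$, so no well-foundedness or trace condition is needed here.) Throughout I would freely use the derived order-theoretic principles listed after \cref{def:axioms-rla} — reflexivity, transitivity, boundedness, $+$-intro, $+$-elim, monotonicity — as well as \cref{prop:functoriality-rla} and the Prefix axiom.

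The base cases are the axiomatic rules. For $\id$ the conclusion $e \leq e$ is just reflexivity. For $\lr 0$ the conclusion $0 \leq \sum\Gamma$ is boundedness. These require no hypothesis. For the structural and logical rules I would check each in turn, all of which are short: $\wk$ follows since $\sum\Gamma \leq \sum\Gamma + f$ by $+$-intro combined with transitivity; $\rr 0$ uses $\sum(\Gamma,0) = \sum\Gamma + 0 = \sum\Gamma$ from the semilattice axioms; $\lr +$ is exactly $+$-elim applied to the two premises; $\rr +$ is immediate since $\sum(\Gamma, f_0, f_1)$ and $\sum(\Gamma, f_0 + f_1)$ denote the same element by associativity of $+$. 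For the modal rule $\kk a$, from the premise $e \leq \sum\Gamma$ I would apply monotonicity of $a\cdot$ to get $ae \leq a(\sum\Gamma)$, and then use the homomorphism axiom $a(e+f) = ae + af$ (together with $a0 = 0$ to cover the case of empty $\Gamma$) to rewrite $a(\sum\Gamma) = \sum a\Gamma$.

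The only cases with any content are the fixed-point rules. For $\lr\mu$, the premise gives $e(\mu X e(X)) \leq \sum\Gamma$; since the Prefix axiom gives $\mu X e(X) \leq e(\mu X e(X))$ — in fact by \cref{ex:postfix-mu} these are provably equal — transitivity yields $\mu X e(X) \leq \sum\Gamma$, as required. Dually, for $\rr\mu$ the premise gives $e \leq \sum(\Gamma, e(\mu X e(X)))$, and using the Prefix axiom $e(\mu X e(X)) \leq \mu X e(X)$ together with $+$-monotonicity of the cedent sum (functoriality, \cref{prop:functoriality-rla}) and transitivity, I obtain $e \leq \sum(\Gamma, \mu X e(X))$. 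None of these steps is genuinely hard; the main thing to be careful about is the treatment of the cedent $\Gamma$ as a \emph{set}, so that $\sum\Gamma$ is well-defined up to the provable associativity, commutativity and idempotence of $+$, and that the empty-cedent case of $\kk a$ correctly invokes $a0 = 0$. I therefore expect no real obstacle — the lemma is deliberately elementary and serves mainly to record that each rule is individually compatible with the axiomatisation, deferring the genuinely delicate global soundness argument (involving the cyclic/trace structure of $\CRLA$) to a later result.
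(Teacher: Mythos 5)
Your proposal is correct and takes essentially the same approach as the paper, whose proof is exactly this rule-by-rule verification: semilattice axioms for the non-fixed-point rules, $a0=0$ for the empty-RHS case of $\kk a$, and the fixed-point principles for $\lr\mu$ and $\rr\mu$. One small label slip: the inequality $\mu X e(X) \leq e(\mu X e(X))$ you invoke for $\lr\mu$ is the Postfix principle of \cref{ex:postfix-mu}, not the Prefix axiom (which is the converse inequality, the one you correctly use for $\rr\mu$) — your mathematics is right since you write the needed inequality explicitly, and in fact the paper's own sketch swaps these two names in the same way.
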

\begin{proof}
    [Proof sketch]\todo{expand this into a proof for each case to make sure the axioms are correct. remember to check empty RHS case for $\K$ rule etc., which needs $a0=0$.}
    The soundness of each non-$\mu$-rule follows from the semilattice axioms of $\RLA$.
    Note that for the $\kk a$ rule we require $a0\leq 0$ for the case when the RHS is empty.
    The soundness of $\lr\mu$ follows from the Prefix axiom \ref{item:prefix-mu}, and the soundness of $\rr\mu$ follows from the Postfix principle \ref{item:postfix-mu}.
\end{proof}

\begin{remark}
    [Hats and cuts]
    \label{rem:hats-and-cuts}
    In fact the soundness argument above goes through in the subtheory $\RLAhat$ of $\RLA$ obtained by replacing the Induction axioms \ref{item:induction-mu} by the Postfix axioms \ref{item:postfix-mu}.
    In this case we also have the converse result
    in the presence of a \emph{cut} rule:
    \begin{equation}
        \label{eq:cut}
        \vliinf{\cut}{}{e \seqar \Gamma, \Delta}{e\seqar \Gamma,f}{f\seqar \Delta}
    \end{equation}
    We do not use this property and, in general, we avoid consideration of the cut rule in order to simplify the definition and metatheory of the cyclic system we are about to define.
\end{remark}

\begin{definition}
[Cyclic system]
    A \defname{preproof} (of $\LRLAhat$) is generated \emph{coinductively} from the rules of $\LRLAhat$.
    I.e.\ it is a possibly infinite tree of sequents (of height $\leq \omega$) generated by the rules of $\LRLAhat$.
    A preproof is \defname{regular} (or \defname{cyclic}) if it has only finitely many distinct sub-preproofs.
    
    A preproof is \defname{progressing} if each infinite branch has infinitely many $\lr \mu$ steps.
    We write $\CRLA$ for the class of regular progressing $\LRLAhat$-preproofs, which we may simply call $\CRLA$-proofs henceforth.
\end{definition}

\begin{remark}
    [Simplicity of progress condition]
    \label{rem:prog-crla-equiv-inf-lhs}
    Note that our progress condition is equivalent to simply requiring that each infinite branch has infinitely many LHS or $\K$ steps, as every such rule besides $\lr \mu $ reduces the size of the LHS, bottom-up.
    Usually in non-wellfounded proof theory a more complex progress condition is employed, at the level of \emph{traces} (or \emph{threads}, in substructural settings). 
    Indeed we shall require such granularity when admitting greatest fixed points in \cref{sec:gfp}, but the simplicity of the current setting accommodates the simple formulation above.
    For the reader familiar with cyclic proof theory note that: (a) with only least fixed points and no negative operators, progress is witnessed by only LHS traces; (b) with only singleton LHS, there is always a unique LHS trace; and (c) LHS traces are progressing just if they are infinitely often principal, in the absence of greatest fixed points.
\end{remark}

\begin{proposition}
    [Functoriality]
    \label{prop:functoriality-crla}
    There are $\CRLA$ derivations of the form,
    \[
    \vliqf{}{}{e(\vec f) \seqar e(\vec g)}{\{ f_i \seqar g_i \}_i}
    \]
    for each expression $e(\vec X)$.
\end{proposition}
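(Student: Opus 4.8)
The plan is to induct on the structure of $e(\vec X)$, maintaining the slightly stronger statement that the object we build is a regular $\LRLAhat$-preproof whose only open leaves are instances of the hypotheses $\{f_i \seqar g_i\}_i$, and in which every infinite branch carries infinitely many $\lr\mu$ steps. The last condition is precisely the progress condition of $\CRLA$ (cf.\ \cref{rem:prog-crla-equiv-inf-lhs}), so such an object is genuinely a $\CRLA$ derivation with the advertised hypotheses.

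The propositional and modal cases are routine and mirror the proof of \cref{prop:functoriality-rla}. If $e(\vec X)$ is $0$ or $1$ we close the goal immediately with $\lr 0$ or $\id$ respectively; if $e(\vec X)$ is $X_i$ then the conclusion $f_i \seqar g_i$ is already a hypothesis. For $e(\vec X) = e_0(\vec X) + e_1(\vec X)$ we apply $\lr +$ and, in each premise, weaken with $\wk$ and apply $\rr +$ to reduce to $e_0(\vec f) \seqar e_0(\vec g)$ and $e_1(\vec f) \seqar e_1(\vec g)$, which are supplied by the inductive hypothesis. For $e(\vec X) = a e'(\vec X)$ a single $\kk a$ step reduces the goal to $e'(\vec f) \seqar e'(\vec g)$, again given by the inductive hypothesis. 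None of these steps creates a cycle, so progress is preserved trivially.

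The only interesting case is $e(\vec X) = \mu Y e'(Y,\vec X)$, where the conclusion is $\mu Y e'(Y,\vec f) \seqar \mu Y e'(Y,\vec g)$. Writing $h_{\vec f} \df \mu Y e'(Y,\vec f)$ and $h_{\vec g} \df \mu Y e'(Y,\vec g)$, I would first apply $\lr\mu$ and then $\rr\mu$ to unfold both fixed points, reducing the goal to $e'(h_{\vec f},\vec f) \seqar e'(h_{\vec g},\vec g)$. Now I invoke the inductive hypothesis on the strictly smaller body $e'(Y,\vec X)$, treating $Y$ as one further free variable: this yields a derivation of $e'(h_{\vec f},\vec f) \seqar e'(h_{\vec g},\vec g)$ from the hypotheses $\{f_i \seqar g_i\}_i$ together with the single extra hypothesis $h_{\vec f} \seqar h_{\vec g}$ corresponding to the slot $Y$. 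But this extra hypothesis is syntactically identical to the conclusion of the whole derivation, so I tie the knot, redirecting each such leaf by a back-edge to the root.

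It remains to verify that this yields a $\CRLA$ derivation. Regularity is clear, since the construction adds only finitely many new sequents — the root and the $\lr\mu$, $\rr\mu$ intermediates — on top of the regular derivation supplied by the inductive hypothesis. For progress, the key point, which I expect to be the main (and essentially the only) obstacle, is that every cycle newly created by the knot-tying passes through the bottom $\lr\mu$ step. Indeed, an infinite branch either traverses the new back-edge infinitely often, in which case it meets this $\lr\mu$ step infinitely often, or it is eventually confined to a single copy of the inner derivation and is therefore an infinite branch of that derivation, hence progressing by the inductive hypothesis. In either case the branch has infinitely many $\lr\mu$ steps, as required.
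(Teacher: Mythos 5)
Your proposal is correct and takes essentially the same route as the paper: structural induction on $e(\vec X)$, with the $\mu$ case handled by unfolding both fixed points via $\lr\mu,\rr\mu$ and tying the knot back to the root. If anything, your verification of the progress condition is slightly more careful than the paper's (which asserts there is "only one infinite branch, looping on $\bullet$"), since you correctly account for infinite branches that eventually remain inside a copy of the inner derivation and are progressing by the inductive hypothesis.
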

\begin{proof}
    By induction on the structure of $e(\vec X)$:
    \begin{itemize}
     \item If $e(\vec X)  = 0$ the end-sequent is an  instance of $\lr 0$.
        \item If $e(\vec X)  = 1$ the end-sequent is an identity.
        \item If $e(\vec X) = a e'(\vec X)$ we apply the IH to $e'(\vec X)$ and then a $\kk a$ step.
        \item If $e(\vec X) = e_0 (\vec X) + e_1 (\vec X)$ we apply the IH to $e_0(\vec X) $ and to $e_1(\vec X)$ and combine the derivations by $\lr +$ (and some weakenings $\wk$).
        \item If $e(\vec X) = \mu X e'(X,\vec X)$ then we construct:
        \[
        \vlderivation{
        \vlin{\lr\mu,\rr\mu}{\bullet}{\mu X e'(X,\vec f) \seqar \mu X e'(X,\vec g)}{
        \vliiq{\IH}{}{e'(\mu X e'(X,\vec f),\vec f) \seqar e' (\mu X e'(X,\vec g),\vec g)}{
            \vlin{\lr\mu,\rr\mu}{\bullet}{\mu X e'(X,\vec f) \seqar \mu X e'(X,\vec g)}{\vlhy{\vdots}}
        }{
            \vlhy{ \{f_i \seqar g_i\}_i }
        }
        }
        }
        \]
        where the lines marked $\IH$ are obtained by the inductive hypothesis on $e'(X,\vec X)$, and $\bullet$ marks roots of identical subproofs.
        Note that there is only one infinite branch, looping on $\bullet$, along which there are indeed infinitely many $\lr \mu$ steps. \qedhere
    \end{itemize}
\end{proof}

Notice that the only identities required in the above proof were on $1$, unlike the analogous result for $\RLA$, \cref{prop:functoriality-rla}, where we (crucially) needed general identities. 
These functors can thus be employed to reduce general identity to atomic identity, 
a leitmotif of non-wellfounded and cyclic proof theory.
\begin{corollary}
    [$\eta$-expansion of identities]
    \label{cor:eta-expansion-identities}
    The restriction of $\CRLA$ to only identities on $1$, i.e.\ with only initial sequent $1 \seqar 1$, proves the same sequents as $\CRLA$.
\end{corollary}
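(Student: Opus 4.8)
The plan is to perform $\eta$-expansion of identities: in a given $\CRLA$-proof, replace every instance of the general identity rule $\id$ on an expression $e$ by a $\CRLA$-derivation $\mathsf{id}_e$ of $e\seqar e$ whose only identity leaves are of the form $1\seqar 1$. Such a derivation is exactly the ``identity functor'', obtained by instantiating the functoriality construction of \cref{prop:functoriality-crla} with $\vec f = \vec g$; indeed that proof already observes that the only identities it uses are on $1$. Globally substituting each identity leaf of the original proof in this way yields a preproof of the same end-sequent using only the initial sequent $1\seqar 1$, and it then remains to check that this preproof is a genuine $\CRLA$-proof.

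Concretely, $\mathsf{id}_e$ is defined by recursion on $e$, mirroring \cref{prop:functoriality-crla}: for $e=0$ we close the branch with $\lr 0$; for $e=1$ we use the permitted identity $1\seqar 1$; for $e=f+g$ we apply $\rr+$ then $\lr+$ (with weakenings $\wk$) and recurse into $\mathsf{id}_f,\mathsf{id}_g$; for $e=af$ we apply $\kk a$ and recurse into $\mathsf{id}_f$; and for $e=\mu X f(X)$ we apply $\lr\mu$ and $\rr\mu$ to reach $f(\mu X f(X))\seqar f(\mu X f(X))$, then unwind the functorial derivation, looping back cyclically to the root exactly as in the $\mu$-case of \cref{prop:functoriality-crla}. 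The base cases are thus $1$ (kept) and $0$ (discharged by $\lr 0$). No variable identity $X\seqar X$ ever arises: in the fixed-point case the bound variable is substituted by the closed expression $\mu X f(X)$ before recursion, and more generally all sequents occurring in a proof of a closed sequent are themselves closed, since the expressions appearing are drawn from the $\FL$-closure of the (closed) end-sequent, which contains only closed expressions.

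It then remains to verify that the resulting preproof is regular and progressing. Regularity is preserved because the original proof is regular---hence exhibits only finitely many distinct identity leaves---and each $\mathsf{id}_e$ is itself regular (finitely many distinct subderivations, by the cyclic construction of \cref{prop:functoriality-crla}); grafting finitely many regular gadgets onto the leaves of a regular proof is again regular. For progress, note that every infinite branch of the new preproof either coincides with an infinite branch of the original proof (untouched, since identity leaves are leaves and no new rules are inserted along original infinite branches) or eventually stays inside a single $\mathsf{id}_{\mu X f}$ gadget; in the former case it progresses by assumption, and in the latter it loops through the $\lr\mu$ step of that gadget infinitely often.

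The main obstacle is the fixed-point case: one must produce a finite cyclic witness for $\mu X f(X)\seqar \mu X f(X)$ rather than an ungrounded $\id$, and then confirm that grafting these cyclic gadgets onto the leaves of a cyclic proof preserves both regularity and the global progress condition. This is precisely where \cref{prop:functoriality-crla} does the work, so the argument reduces to the bookkeeping above. A secondary point worth flagging is that the reduction to $1\seqar 1$ is genuinely a statement about closed sequents: an open identity such as $X\seqar X$ is irreducible, but such identities cannot occur in a proof of a closed sequent, by $\FL$-closure.
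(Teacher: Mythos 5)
Your proposal is correct and follows exactly the paper's intended route: the paper derives \cref{cor:eta-expansion-identities} directly from \cref{prop:functoriality-crla}, observing that the functoriality derivations use only identities on $1$, so general identity leaves can be replaced by these cyclic gadgets. Your write-up merely makes explicit the grafting step and the (routine) checks of regularity and progress, together with the sensible observation that open identities like $X \seqar X$ never arise in proofs of closed sequents.
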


One of the motivational examples from previous work on the proof theory of regular languages \cite{DasPou17:hka} was the parity principle: $a^* \leq (aa)^* + a(aa)^*$: every word has even or odd length. 
This had no regular (cut-free) proof in an intuitionistic-style sequent system $\mathsf{LKA}$ (\cite{DasPou17:hka,DasPous18:lka-pt}), necessitating the hypersequential lines of $\mathsf{HKA}$ proposed in that work.
Let us see how $\CRLA$ handles the same principle, written in our right-linear syntax.

\begin{example}
    [Parity]
    Recalling \cref{ex:rlgram-even-odd-to-rlexps}, let us write:
    \[
    \begin{array}{r@{\ \df \ }l}
         E & \mu X (1 + aaX) \\
         O & a E
    \end{array}
    \]
    Again, $E$ and $O$ compute languages of words of even and odd length, respectively, over the singleton alphabet $\{a\}$, in the regular language model $\lang \cdot$.
    Let us also write $\top_a \df \mu X (1 + aX)$, cf.~\cref{ex:empty-univ-as-rlexps,ex:non-standard-interps}, which denotes the universal language over $\{a\}$.
    Here is a proof that each word has even or odd parity,
    \[
    \vlderivation{
    \vlin{\rr +}{}{\top_a \seqar E+O}{
    \vlin{\lr\mu,\rr\mu}{\bullet}{\top_a \seqar E, O}{
    \vliin{\lr+,\rr+}{}{1 + a \top_a \seqar 1 + aaE,O}{
        \vlin{\id}{}{1\seqar 1}{\vlhy{}}
    }{
        \vlin{\kk a}{}{a\top_a \seqar aaE,O}{
        \vlin{\lr \mu,\rr\mu}{\bullet}{\top_a \seqar O,E}{\vlhy{\vdots}}
        }
    }
    }
    }
    }
    \]
    where lines marked $\bullet$ are roots of identical subproofs.
    There is only one infinite branch, looping on $\bullet$, and indeed there are infinitely many $\lr\mu$ steps along this branch.
    Note that we have omitted some weakening steps in the left-subproof, a convention that we shall typically employ throughout.
\end{example}

\anupam{Does it make sense to include the (translation of) $(a+b)^* \leq a^* (ba^*)^*$?}

\anupam{I commented old soundness above and replaced with argument below directly using previous NFA/RLG construction, avoiding ranks etc. NB: this is morally the same as using evaluation puzzles.}
\subsection{Soundness}
Given that the proofs of $\CRLA$ are not necessarily well-founded, its soundness does not follow immediately from \cref{lemma:local-soundness},
so we better prove it:

\begin{theorem}
    [Soundness]\todo{can strengthen this to progressing $\LRLAhat$ preproofs.}
    \label{thm:crla-soundness}
    If $\CRLA \proves e\seqar \Gamma $ then $ \lang e \subseteq \lang \Gamma$.
\end{theorem}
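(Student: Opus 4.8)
The plan is to prove soundness of $\CRLA$ by contradiction: suppose $\CRLA \proves e \seqar \Gamma$ via some regular progressing preproof $\pi$, but $\lang e \not\subseteq \lang\Gamma$, so there is a word $\sigma \in \lang e \setminus \lang\Gamma$. I would use such a counterexample word to trace an infinite branch through $\pi$ and derive a contradiction with the progress condition. The intuition, as hinted at by the author's comment about `evaluation puzzles', is that a word witnessing unsoundness of the conclusion can be `played' against the proof, descending bottom-up while the root word $\sigma$ is consumed letter by letter.

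Concretely, I would maintain the following invariant as I build the branch: at each visited sequent $f \seqar \Delta$, there is a suffix $\tau$ of $\sigma$ with $\tau \in \lang f \setminus \lang\Delta$. I would show this invariant can always be propagated to (at least) one premise of whichever rule is applied, by a case analysis mirroring \cref{lemma:local-soundness}. For a $\kk a$ step $ae \seqar a\Gamma$ from $e \seqar \Gamma$, the witness $\tau \in \lang{ae}\setminus\lang{a\Gamma}$ must have the form $a\tau'$ with $\tau' \in \lang e \setminus \lang\Gamma$, so we consume one letter and pass to the premise. For $\lr +$, $\lr\mu$, $\rr 0$, $\rr +$, $\rr\mu$ the witness $\tau$ is preserved (using that $\lang\cdot$ is a fixed point for the $\mu$-rules, i.e.\ $\lang{\mu X e} = \lang{e(\mu X e)}$, established via \cref{lemma:semantic-functoriality} and Knaster--Tarski as in \cref{thm:rla-soundness}); for $\lr +$ at least one premise retains the witness on the LHS. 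Crucially the initial sequents $\id$ (i.e.\ $e \seqar e$) and $\lr 0$ (i.e.\ $0 \seqar \Gamma$) admit \emph{no} such witness, since $\lang e \subseteq \lang e$ and $\lang 0 = \emptyset$; hence the invariant guarantees we never reach a leaf, producing an infinite branch $B$.

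Now I would invoke the progress condition: since $\pi$ is progressing, $B$ has infinitely many $\lr\mu$ steps. The key quantitative observation is that along $B$ the length of the witness suffix $\tau$ is non-increasing (it stays fixed at every rule except $\kk a$, where it strictly decreases), while \emph{none} of the rules increases it. If there are infinitely many $\kk a$ steps on $B$, the witness length would have to decrease infinitely often from the finite starting value $|\sigma|$, which is impossible. So from some point on $B$ contains no $\kk a$ steps; but then, reading \cref{rem:prog-crla-equiv-inf-lhs}, every LHS rule other than $\lr\mu$ strictly reduces the size of the LHS formula, so infinitely many $\lr\mu$ steps with only finitely many $\kk a$ steps still forces infinitely many LHS-principal steps on a tail where the LHS formula size cannot shrink below zero --- I would instead argue directly via the finiteness of $\fl e$ and well-foundedness of $\lefl$ (\cref{prop:fl-props}\ref{item:fl-preorder}).

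The cleanest finish, which I would adopt, replaces the size argument by a trace argument: the LHS formulas along $B$ form a trace $f_0 \geqfl f_1 \geqfl \cdots$ in the sense of \cref{def:fl}, and by \cref{prop:fl-props}\ref{item:traces-have-least-inf-occ-elem} it has a smallest infinitely occurring element, which (since the branch is not eventually stable, as $\lr\mu$ fires infinitely often) must be a fixed point formula $\mu X g(X)$ that is unfolded infinitely often. Tracking the witness suffix through one full cycle from an occurrence of $\mu X g(X)$ back to itself, the intervening $\lr\mu$ unfolding plus the consumption at any $\kk a$ steps shows the witness strictly shrinks across each cycle unless no letter is consumed --- and if no letter is ever consumed on the tail, the LHS trace stabilises in $\fl{}$-size, contradicting that $\mu X g(X)$ is properly unfolded infinitely often. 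Either way we reach a contradiction, so no counterexample word exists and $\lang e \subseteq \lang\Gamma$. The main obstacle I anticipate is making the final descent fully rigorous: precisely correlating the progress condition (infinitely many $\lr\mu$ steps) with strict decrease of a well-founded measure on the witness word, and handling the case distinction between `infinitely many letters consumed' (impossible by finiteness of $|\sigma|$) and `finitely many letters consumed' (where progress itself must be contradicted). This is exactly the place where one typically appeals either to König's lemma plus well-foundedness of $\lefl$, or to the combinatorics of traces in \cref{prop:fl-props}.
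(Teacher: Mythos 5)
There is a genuine gap at your final ``descent'' step, and it is precisely the subtlety the paper flags just before its own proof (the need for \emph{signatures} ``to make appropriate choices in the construction of a fallacious branch''). Your contradiction is supposed to follow from three branch-local properties: the invariant is maintained along $B$, $B$ has infinitely many $\lr\mu$ steps, and (after your correct length argument) $B$ has only finitely many $\kk a$ steps. But these properties are jointly consistent, so no contradiction can be derived from them alone. Take LHS $\mu X(1+X)$, whose language is $\{\epsilon\}$, any $\Gamma$ with $\epsilon\notin\lang\Gamma$, and witness $\tau=\epsilon$; then the infinite sequence of valid inference steps
\[
\mu X(1+X) \seqar \Gamma
\ \xrightarrow{\ \lr\mu\ }\
1+\mu X(1+X) \seqar \Gamma
\ \xrightarrow{\ \lr+\ }\
\mu X(1+X) \seqar \Gamma
\ \xrightarrow{\ \lr\mu\ }\ \cdots
\]
(always choosing the right summand, which legitimately ``retains the witness on the LHS'') maintains your invariant forever, consumes no letters, and has infinitely many $\lr\mu$ steps, i.e.\ it satisfies the progress condition. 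In particular your claimed dichotomy is false: here no letter is ever consumed, yet the LHS trace does \emph{not} stabilise, and its smallest infinitely occurring formula $\mu X(1+X)$ is properly unfolded infinitely often; \cref{prop:fl-props} cannot rescue this. The decreasing quantity in a correct infinite-descent argument is not the witness word but the ordinal \emph{signature} of the critical $\mu$-formula, and for it to decrease you must resolve each $\lr+$ step canonically (choose the summand realised by the least approximant), not merely ``some premise retaining the witness''. Note also that in the example the real contradiction with provability lives on the \emph{other} premises $1\seqar\Gamma$ of the $\lr+$ steps, which your single-branch argument never inspects.

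The paper sidesteps all of this by arguing directly rather than by contradiction. Given $w\in\lang e$, it fixes once and for all a finite accepting run of $w$ in the NFA given by the canonical system \eqref{eq:rlexp-to-rlgram} (\cref{rem:rlgs-as-nfas}), and uses that run --- not the mere existence of a witness --- to choose premises at $\lr+$ steps; in the example above the run ends at the state for $1$, so the guided branch turns into the $1\seqar\Gamma$ premise where the failure is visible. Since the run is finite, the guided branch has only finitely many LHS steps, so the progress condition, used \emph{contrapositively}, forces the branch to be finite; being maximal it ends at an identity leaf, and membership $w\in\lang\Gamma$ is then propagated from that leaf back down to the root. So either adopt this run-guided direct argument, or keep your contradiction plan but import signatures/approximants to steer the sum choices; as written, the proposal does not go through.
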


Soundness of cyclic proofs is usually established by a `contradiction from infinite descent' argument, exploiting the corresponding progress condition. 
This relies on the subtle notion of \emph{signatures} (or \emph{markings} or \emph{assignments}), computing suitable approximants of fixed points in order to make appropriate choices in the construction of a `fallacious' branch. 
In this work we shall manage such choices at the level of evaluation, taking advantage of automaton models and, later, games for deciding problems of the form `$w\in \lang e$?'

\begin{proof}
[Proof of \cref{thm:crla-soundness}]
Let $P$ be a $\CRLA$ proof of $e \seqar \Gamma$ and suppose $w \in \lang e$.
Construing the canonical system for $e$, as in \eqref{eq:rlexp-to-rlgram}, as an NFA, cf.~\cref{rem:rlgs-as-nfas}, let $\vec X = X_{e_0}, X_{e_1}, \dots , X_{e_n}$ be an accepting run of $w$, with $e_0 = e$ and $e_n = 1$. 
Note that, since RHS rules are non-branching, this run induces a unique (maximal) branch $B$ through $P$ whose LHS is always some $e_i$, in particular by choosing the premiss of a $\lr+$ step according to the run.

Now, since the run $\vec X$ is finite, there are only finitely many LHS steps in $B$ and so, by the progress condition, $B$ is finite, say $(e_{i_j} \seqar \Gamma_j)_{j\leq N}$ with:
\begin{enumerate}
    \item\label{item:induced-branch-base-case} $e_{i_0} = e_0 = e$ and $\Gamma_0 = \Gamma$.
    \item\label{item:induced-branch-step-case} Either $e_{i_{j+1}} = e_{i_j}$ (at an RHS step) or $e_{i_{j+1}} = e_{i_j + 1}$ (at an LHS or $\kk a$ step).
    \item\label{item:runs-end-at-init} $e_{i_N} = \Gamma_N $ as $B$ is maximal and so must terminate at an initial sequent $\id$.
\end{enumerate} 

\anupam{maybe there could be an intermediate lemma here about finite branches.}
\abhishek{do we? Can't we just say wlog assume we only have infinite branches because finite branches are taken care of by local soundness?}
\anupam{I don't think the statement of local soundness is refined enough. stet for now.}
Write $w_i$ for the corresponding word being read from state $X_{e_i}$, in particular so that $w_i \in \lang {e_i}$.
We show that $w_{i_j} \in \lang {\Gamma_j}$, by induction on $N-j$:
\begin{itemize}
    \item If $j=N$ then we have $w_{i_N} \in \lang {e_{i_N}}$ and $e_{i_N} = \Gamma_N$ by \ref{item:runs-end-at-init}.
    \item If $e_{i_j} \seqar \Gamma_j$ concludes any step that is not an $a$ step then we have $w_{i_j} = w_{i_{j+1}}$ and $\lang {\Gamma_j} \supseteq \lang{\Gamma_{j+1}}$, and so, by inductive hypothesis, $w_{i_j} \in \lang{\Gamma_{j}}$.
    \item Otherwise at a $\kk a$ step we have $e_{i_j} = ae_{i_{j+1}}$ and $\Gamma_j = a\Gamma_{j+1}$ and $w_{i_{j+1}} = aw_{i_j}$.
    By the inductive hypothesis we have $w_{i_{j+1}} \in \lang{\Gamma_{j+1}}$, and so $w_{i_{j}} = aw_{i_{j+1}} \in \lang {a\Gamma_{{j+1}}} = \lang{\Gamma_{j}}$ as required.
\end{itemize}
Finally, when $N-j = N$ (i.e.\ $j=0$) we have $w = w_{0} = w_{i_0} \in \lang {\Gamma_0} = \lang {\Gamma}$, as required.
\end{proof}

\subsection{Completeness}
The main result of this subsection is the converse of Soundness, \cref{thm:crla-soundness}, of the previous section:

\begin{theorem}[Completeness of $\CRLA$]
\label{thm:crla-completeness}
    If $\lang e \subseteq \lang \Gamma$ then $\CRLA \proves e \seqar \Gamma$.
\end{theorem}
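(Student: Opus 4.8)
The plan is to prove completeness by \emph{goal-directed proof search}: starting from $e \seqar \Gamma$, I build a preproof bottom-up by a fixed deterministic strategy, maintaining throughout the \textbf{validity invariant} that every sequent $e' \seqar \Gamma'$ occurring in the construction satisfies $\lang{e'} \subseteq \lang{\Gamma'}$. Since all formulas that can appear lie in the finite set $\fl{\{e\}\cup\Gamma}$ (by \cref{prop:fl-props}\ref{item:fl-finite}), the antecedent ranges over this finite set and the cedent over its (finite) powerset; hence only finitely many distinct sequents arise, which will ultimately give regularity. The real content is to show that, under the validity invariant, the strategy never gets stuck at a non-initial leaf, and that every infinite branch is progressing. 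By \cref{cor:eta-expansion-identities} it suffices to reach only identities $1 \seqar 1$.

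Concretely, the strategy alternates two phases. In the \emph{left phase} I decompose the antecedent using $\lr 0, \lr +, \lr\mu$ until it becomes either $1$ or $ae'$; in the \emph{right phase} I \emph{saturate} the cedent using $\rr 0, \rr +, \rr\mu$ until every formula in it has the form $1$ or $bg$. Once both sides are in this normal form I branch on the antecedent: if it is $1$, the invariant gives $\epsilon \in \lang{\Gamma}$, which for a saturated cedent forces $1 \in \Gamma$, so I weaken down to $1 \seqar 1$ and apply $\id$; if it is $ae'$, I weaken away every formula of $\Gamma$ not of the form $ag$ and apply $\kk a$, passing to $e' \seqar \{g : ag \in \Gamma\}$. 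The only non-invertible moves here are the weakenings, so the crux of the invariant is to check they preserve validity: this holds because a word in $\lang{ae'}$ begins with $a$ and, for a saturated $\Gamma$, such a word can only be contributed by the $a$-prefixed formulas, while $\epsilon$ can only be contributed by $1$. Every other rule used is semantically invertible — $\lr+,\lr\mu$ preserve validity upward since $\lang{f(\mu X f(X))} = \lang{\mu X f(X)}$, and the right rules leave $\lang{\Gamma}$ unchanged — so the invariant propagates and the construction never stalls.

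For \textbf{regularity}, because the sequent set is finite I close a branch by backlinking whenever its current sequent repeats an ancestor (or, equivalently, observe that the strategy is a function of the current sequent, so there are finitely many distinct subproofs). For \textbf{progress}, I use \cref{rem:prog-crla-equiv-inf-lhs}: it suffices that each infinite branch has infinitely many left or $\kk a$ steps, equivalently infinitely many $\lr\mu$ steps. The sequence of antecedents along a branch is a trace $e_0 \geqfl e_1 \geqfl \cdots$, since $\lr +,\lr 0, \kk a$ pass to strictly $\lefl$-smaller formulas while $\lr\mu$ stays in the same $\eqfl$-class. If the branch has infinitely many $\kk a$ steps the trace cannot be eventually stable, so by \cref{prop:fl-props}\ref{item:traces-have-least-inf-occ-elem} its least infinitely-occurring element is a fixed point formula, to which the strategy applies $\lr\mu$ each time it is the antecedent — giving infinitely many $\lr\mu$ steps. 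If instead there are only finitely many $\kk a$ steps, then past the last one the left phase can never terminate (it would force a $\kk a$ or an $\id$), and a non-terminating left phase is exactly an infinite run of $\lr\mu$ steps.

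The \textbf{main obstacle} is guaranteeing that the right-phase saturation \emph{terminates}: an \emph{unguarded} fixed point such as $\mu X(X + 1 + aX)$ regenerates itself under $\rr\mu$ followed by $\rr+$, producing an $\epsilon$-cycle that, if unfolded forever, would yield an infinite branch with only $\rr\mu$ steps and hence \emph{no} $\lr\mu$ step — fatal for progress. I will resolve this using the \emph{leastness} of $\lang\cdot$: when unfolding a cedent fixed point $\mu X f(X)$ would only reintroduce formulas already present, I stop and weaken that copy away. Writing $f(X) \equiv X + \sum_i \delta_i(X)$ for its top-level normal form, a Knaster–Tarski computation (as in \cref{lemma:semantic-functoriality} and the soundness argument) shows the self-loop is irrelevant to the least solution, $\lang{\mu X f(X)} = \lang{\mu X \sum_i \delta_i(X)} = \bigcup_i \lang{\delta_i(\mu X f(X))}$, so deleting the reappearing copy leaves $\lang{\Gamma}$ unchanged and preserves the invariant. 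With this refinement each cedent fixed point is unfolded at most once per round, the right phase terminates inside the finite $\FL$-closure, and the progress analysis above goes through; assembling regularity, progress, and the validity invariant then yields a genuine $\CRLA$ proof of $e \seqar \Gamma$.
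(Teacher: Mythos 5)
Your overall skeleton matches the paper's proof: invertible left rules to decompose the antecedent (\cref{prop:invertibility}), right-rule saturation of the cedent, a modal/identity dispatch justified by weakening (your ``crux'' weakenings are exactly \cref{prop:modal}), and regularity plus progress extracted from finiteness of the $\FL$ closure. The genuine gap sits precisely at what you call the main obstacle: termination of the right phase for \emph{unguarded} fixed points. Your resolution only covers fixed points with a \emph{top-level} self-loop, $f(X) \equiv X + \sum_i \delta_i(X)$, where deleting the regenerated copy is indeed harmless by your Knaster--Tarski computation. But regeneration can be \emph{indirect}, through a chain of unfoldings of nested fixed points, in which case no formula involved has a top-level self-loop and your computation does not apply. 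Take $M \df \mu X\, \mu Y\,(X + aY + 1)$ and $N \df \mu Y\,(M + aY + 1)$, so that the unfolding of $M$ is $N$. Saturating the cedent $\{M\}$ gives $\{M\}$, then $\{N\}$, then $\{M + aN + 1\}$, then $\{M, aN, 1\}$: the formula $M$ has reappeared, yet neither $M$ nor $N$ has the shape $\mu X(X + \sum_i \delta_i(X))$. As written, your stopping criterion (``unfolding would only reintroduce formulas already present'') never fires here --- unfolding $M$ introduces $N$, which is not present, and unfolding $N$ introduces $M + aN + 1$, which is not present; the reappearance of $M$ happens at an $\rr+$ step, not an $\rr\mu$ step. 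So your strategy cycles forever through right rules, producing an infinite branch with no $\lr\mu$ step, which is fatal for progress.

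If you instead relax the criterion to ``stop when the cedent repeats and weaken away the reappearing fixed points,'' the phase terminates, but the weakening now needs a justification your argument does not supply: in the example you must show $\lang M \subseteq \lang{aN} \cup \lang 1$ (true here: both sides equal $\lang{a^*}$, but not by any local fixed-point identity). This is exactly the content of the paper's Weakening Lemma (\cref{lem:weakening-lemma}), whose proof is a genuinely different argument from yours: the right-rule cycle is transposed into \emph{left}-rule cyclic proofs of $f \seqar \Gamma$ for each regenerated formula $f$ (each $\rr+$, $\rr\mu$, $\rr 0$ step becoming the corresponding left step), these cyclic proofs are progressing because every cycle contains a left rule, and then the \emph{already-established soundness} of $\CRLA$ (\cref{thm:crla-soundness}) is invoked to extract the semantic containment $\lang\Lambda \subseteq \lang\Gamma$. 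The needed containment depends on the global structure of the unfolding cycle, not on the shape of any single fixed point, so the one-line Knaster--Tarski argument cannot be patched to cover it; to complete your proof you need this lemma (or an equivalent), together with the fairness bookkeeping (the paper's ``oldest formula first'' scheduling) that guarantees every looping formula is principal somewhere along the repeated segment.
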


This is actually rather simpler to prove for $\mu$-expressions when they are \emph{guarded}, cf.~\cref{sec:guarded-fragment} later, as every preproof will automatically have infinitely many $\kk a$ steps, and so also infinitely many LHS steps (by finiteness of expressions).
From here we still need an adequate proof search strategy that preserves validity wrt.~$\lang\cdot$, bottom-up.
In guarded modal logics it usually suffices to establish a `Modal Lemma', which in our setting is:

\begin{proposition}
    [Modal]
    \label{prop:modal}
    If $\lang{ae} \subseteq  \lang{a\Gamma, \Delta}$, where $\Delta $ contains only expressions of the form $1$ or $bf$ for $b\neq a$, then $\lang e \subseteq \lang \Gamma$.
\end{proposition}

While this is somewhat obvious in the regular language model $\lang \cdot$, note that it is not in general true in all $\RLA$ models, e.g.\ if $b$ and $a$ are identical homomorphisms or if $a$ is constant.
However for every other rule (except $\wk$) we have a bona fide invertibility property, already in the subtheory $\RLAhat \subseteq \RLA$ from \cref{rem:hats-and-cuts}:

\begin{proposition}
    [Invertibility]
\label{prop:invertibility}
    For each non-$\{\wk,\kk a\}$ $\LRLAhat$ step $\vlinf{}{}{e \seqar \Gamma}{\{e_i \seqar \Gamma_i\}_{i<k}}$, for some $k\leq 2$, we have $\RLA + e\leq \sum \Gamma \proves e_i \leq \sum \Gamma_i $, for each $i<k$.  
\end{proposition}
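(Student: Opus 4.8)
The plan is to proceed by a straightforward case analysis on the final rule, verifying for each admissible rule that its premiss(es) are derivable from its conclusion within $\RLA$. Observe first that the zero-premiss rules $\id$ and $\lr 0$ have $k=0$, so the statement is vacuous for them; thus there are really only the five cases $\lr+,\lr\mu,\rr 0,\rr+,\rr\mu$ to treat. In each case the hypothesis is the reading $e \leq \sum\Gamma$ of the conclusion $e\seqar\Gamma$, and the goal is to derive the reading $e_i \leq \sum\Gamma_i$ of each premiss.

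For the propositional cases the argument is pure semilattice bookkeeping. For $\lr+$, from $e+f \leq \sum\Gamma$ we obtain $e\leq\sum\Gamma$ and $f\leq\sum\Gamma$ by $+$-intro (so $e,f\leq e+f$) followed by transitivity. For $\rr 0$, since $\sum(\Gamma,0)=\sum\Gamma+0=\sum\Gamma$ by the unit law of \ref{item:join-semilattice}, the hypothesis already \emph{is} the desired premiss. For $\rr+$, the cedents $\Gamma,f_0+f_1$ and $\Gamma,f_0,f_1$ have provably equal sums by associativity of $+$, so again the premiss is just a reassociation of the conclusion. The only mild subtlety throughout is that cedents are \emph{sets}: one must invoke commutativity, associativity and idempotence of $+$ to justify that $\sum(\cdot)$ is well behaved with respect to set union, in particular that it does not depend on whether an added formula already occurs. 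This is entirely routine given the semilattice axioms of \ref{item:join-semilattice}, and I would record it as a one-line preliminary remark before dispatching the cases.

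The two fixed-point cases are where the unfolding principles enter, and they are perfectly dual. For $\lr\mu$, from $\mu X e(X)\leq\sum\Gamma$ we conclude $e(\mu X e(X))\leq\sum\Gamma$ by composing the Prefix axiom \ref{item:prefix-mu}, namely $e(\mu X e(X))\leq\mu X e(X)$, with transitivity. For $\rr\mu$, from $e\leq\sum\Gamma+\mu X e(X)$ we want $e\leq\sum\Gamma+e(\mu X e(X))$; here we invoke instead the Postfix principle of \cref{ex:postfix-mu}, namely $\mu X e(X)\leq e(\mu X e(X))$, lift it to $\sum\Gamma+\mu X e(X)\leq\sum\Gamma+e(\mu X e(X))$ by monotonicity of $+$ (i.e.\ $+$-intro and $+$-elim), and finish by transitivity. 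It is exactly this pairing of Prefix on the left with Postfix on the right that accounts for the remark in \cref{rem:hats-and-cuts}: none of the cases above ever appeals to the Induction axiom \ref{item:induction-mu}, which is precisely why the proposition in fact holds already over the weaker theory $\RLAhat$.

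I do not anticipate any serious obstacle, as the proof is essentially a rule-by-rule inspection. If anything warrants care it is the set-versus-sum discipline in the propositional right rules, which is dispatched by the preliminary remark on the invariance of $\sum\Gamma$ under the semilattice laws; everything else reduces to transitivity together with a single application of Prefix or Postfix.
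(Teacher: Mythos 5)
Your proof is correct and matches the paper's intended argument: the paper states \cref{prop:invertibility} without an explicit proof, leaving it to rule-by-rule inspection, which is exactly what you carry out (Prefix for $\lr\mu$, Postfix for $\rr\mu$, semilattice laws for the propositional rules). Your closing observation that Induction \ref{item:induction-mu} is never used also accounts precisely for the paper's claim that the property holds already in the subtheory $\RLAhat$ of \cref{rem:hats-and-cuts}.
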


\abhishek{this is a bit subtle. It suffices to consider RLA proof only if we have cuts, right?}
\anupam{I don't understand: do you mean the statement could be refined?}

Together these two observations allow us to conduct proof search while preserving validity for $\lang \cdot$, bottom-up, avoiding any deadlocks, by inspection of the rules of $\LRLAhat$.
As mentioned before, restricted to guarded expressions this would be enough to conclude completeness, as every preproof would be progressing.
For the general result, as we state it in \cref{thm:crla-completeness}, we require more machinery:

\begin{lemma}[Weakening]
\label{lem:weakening-lemma}
Consider a (necessarily non-branching, finite) derivation with only right logical steps of form,
\begin{equation}
    \label{eq:wk-lem-input}
    \vlderd{P}{}{e \seqar \Gamma, \Lambda}{e \seqar \Gamma, \Lambda}
\end{equation}
where each formula in $\Lambda$ is principal at least once in $P$.
Then $\lang \Lambda \subseteq \lang \Gamma$.
\end{lemma}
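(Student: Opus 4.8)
The plan is to prove the statement semantically, by tracing each word of $\lang\Lambda$ through an infinite unfolding of $P$ and invoking a descent argument on the least fixed points. First I would record a structural consequence of the hypothesis: since $P$ consists only of right logical steps, whose principal formulas are exactly those of the form $0$, $f_0+f_1$ or $\mu X g(X)$, the requirement that every formula of $\Lambda$ be principal at least once forces $\Lambda$ to contain only such formulas (in particular no $1$ and no $af$). The only other ingredient needed is a well-founded \emph{$\mu$-rank} $r(w,f)\in\Nat$, defined for $w\in\lang f$, counting the number of $\mu$-unfoldings in a minimal witness of membership: $r(w,1)=0$; $r(a\sigma,af)=r(\sigma,f)$; $r(w,f_0+f_1)=\min\{r(w,f_i):w\in\lang{f_i}\}$; and $r(w,\mu X g(X))=1+r(w,g(\mu X g(X)))$. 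This is well-defined since $\lang\cdot$ is $\omega$-continuous, so membership of a finite word in a least fixed point is witnessed by finitely many unfoldings; it is precisely the sort of \emph{signature} alluded to in the discussion of \cref{thm:crla-soundness}.

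Fix $w\in\lang\lambda$ for some $\lambda\in\Lambda$; I must show $w\in\lang\Gamma$. I would form the infinite derivation $P^\omega$ obtained by stacking copies of $P$; as all its rules are non-branching it is a single infinite branch whose LHS is constantly $e$ and whose RHS is $\Gamma,\Lambda$ at every loop boundary. Starting from the occurrence of $\lambda$ at the bottom, I follow a \emph{trace} of right-hand formulas $\lambda=\tau_0\geqfl\tau_1\geqfl\cdots$ maintaining the invariant $w\in\lang{\tau_i}$: at an $\rr+$ step on the tracked formula I descend into a disjunct realising the minimum in the clause for $r$; at an $\rr\mu$ step on the tracked formula I pass to its unfolding; and at any step where the tracked formula is not principal (including every $\rr 0$ step, since $\tau_i\neq 0$) it persists. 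A case inspection of these three situations shows that $r(w,\tau_i)$ is non-increasing, and strictly decreasing exactly at the $\rr\mu$ steps acting on the tracked formula.

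Now comes the dichotomy. If at some loop boundary the tracked formula $\tau_i$ lies in $\Gamma$, then $w\in\lang{\tau_i}\subseteq\lang\Gamma$ and we are done. Otherwise every boundary formula lies in $\Lambda$, hence is principal in the next copy of $P$. By \cref{prop:fl-props}, item~\ref{item:traces-have-least-inf-occ-elem}, the trace has a smallest infinitely occurring formula $g$ under $\subform$, and either the trace is eventually stable at $g$, or $g$ is a $\mu$-formula unfolded infinitely often. In the latter case $r(w,\tau_i)$ strictly decreases infinitely often while remaining a natural number, a contradiction. In the former case $g$ occurs at all sufficiently late boundaries, so $g\in\Gamma\cup\Lambda$; but a stable formula is never principal, whereas every $\Lambda$-formula is principal each iteration, so $g\notin\Lambda$, whence $g\in\Gamma$ and again $w\in\lang{g}\subseteq\lang\Gamma$. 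As $w$ was arbitrary this yields $\lang\Lambda\subseteq\lang\Gamma$.

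The main obstacle is setting up the $\mu$-rank so that two requirements hold simultaneously: that the trace can always be continued without leaving the languages $\lang{\tau_i}$ (which needs the $\min$ in the $+$-clause together with $\omega$-continuity to guarantee a finite witness), and that every unfolding of the tracked least fixed point strictly decreases the rank (which is what converts the ``bad'' infinitely-unfolding trace of a least fixed point on the \emph{right} into a genuine infinite descent). The interplay with the set structure of cedents, where distinct traces may merge, is harmless here since the argument only needs the existence of one suitable trace, but it is the point one must check with care.
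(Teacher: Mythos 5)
Your argument is correct, but it takes a genuinely different route from the paper. The paper's proof is syntactic: it transforms $P$ by replacing each right logical step with the corresponding \emph{left} step, yielding for each $f\in\Lambda$ a finite derivation of $f\seqar\Gamma$ from premisses $\{f'\seqar\Gamma\}_{f'\in\Lambda}$; these pieces are then tied together coinductively into cyclic proofs $P^f$ of $f\seqar\Gamma$, which are regular (the pieces are finite) and progressing (each formula of $\Lambda$ is principal at least once in $P$, so every infinite branch has infinitely many LHS steps, cf.~\cref{rem:prog-crla-equiv-inf-lhs}); the inclusion $\lang\Lambda\subseteq\lang\Gamma$ then follows from the already-established soundness of $\CRLA$, \cref{thm:crla-soundness}. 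You instead prove the inclusion directly, by a semantic infinite-descent argument along a trace through the infinite stacking $P^\omega$. The trade-off is clear: the paper recycles soundness and thereby avoids rank/signature machinery altogether (a stated design choice --- its soundness proof manages fixed-point choices ``at the level of evaluation'' via the NFA of \eqref{eq:rlexp-to-rlgram} precisely to sidestep signatures), at the cost of having to construct cyclic proofs and verify their regularity and progress; your proof is self-contained, needs no prior soundness theorem, and makes the underlying descent explicit, but it re-imports exactly the signature-style bookkeeping the paper was avoiding.

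Two points in your write-up deserve explicit justification. First, the well-definedness of the rank $r$ should be grounded concretely rather than by the (prima facie circular) recursion: for instance, define $r(w,f)$ as the least number of $\mu$-unfolding transitions over accepting runs of the NFA of \eqref{eq:rlexp-to-rlgram} from state $X_f$; the recursion equations then hold by inspecting the first transition. Second, two small edge cases: your claim that a stable formula is never principal fails for the degenerate formula $\mu X X$, whose unfolding is itself --- it is rescued by your invariant $w\in\lang{\tau_i}$, since $\lang{\mu X X}=\emptyset$; and \cref{prop:fl-props} only gives that the smallest infinitely occurring formula of a non-stable trace is a $\mu$-formula, so you should add the one-line argument that it is moreover \emph{unfolded} infinitely often (a trace can leave a formula only at a step where that formula is principal, and a non-stable trace visiting $g$ infinitely often must also leave it infinitely often).
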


\begin{proof}
[Proof sketch]
We exploit the already established soundness of $\CRLA$, \cref{thm:crla-soundness}, constructing (mutually dependent) cyclic proofs $P^f$ for each $f\in \Lambda$ of $f\seqar \Gamma$.

First we define a transformation of the form,
\[
\vlderd{Q}{}{e\seqar \Gamma,\Lambda}{e\seqar \Gamma, \Lambda'}
\qquad \mapsto\qquad
\vltreeder{Q_f}{f \seqar \Gamma}{}{\{f' \seqar \Gamma\}_{f'\in \Lambda'}}{}
\]
 when $Q$ consists of only right logical steps with no formula of $\Gamma$ ever principal, for each $f\in \Lambda$, with $Q_f$ a finite $\LRLAhat$ derivation (with the indicated premisses). 
We proceed by induction on the length of $Q$, essentially converting each right step to its corresponding left step, formally by analysis of the topmost step:
\begin{itemize}
    \item If $Q$ is empty then we set $Q_f \df f \seqar \Gamma$.
    \item If $Q$ extends $Q'$ above by a step $\vlinf{\rr 0}{}{e \seqar \Gamma, \Lambda',0}{e \seqar \Gamma, \Lambda'} $ then we set:
    \[
    Q_f \ \df \quad
    \vlderivation{
    \vliiq{Q_f'}{}{f \seqar \Gamma}{
        \vlhy{ \{f' \seqar \Gamma\}_{f' \in \Lambda'} }
    }{
        \vlin{\lr 0}{}{0 \seqar \Gamma}{\vlhy{}}
    }
    }
    \]
    \item If $Q$ extends $Q'$ above by a step $\vlinf{\rr+}{}{e \seqar \Gamma, \Lambda', g_0 + g_1}{e \seqar \Gamma, \Lambda', g_0 , g_1}$ then we set:
    \[
    Q_f \ \df \quad
    \vlderivation{
    \vliiq{Q'_f}{}{f\seqar \Gamma}{
        \vlhy{ \{f' \seqar \Gamma\}_{f' \in \Lambda'} }
    }{
        \vliin{\lr +}{}{g_0 + g_1 \seqar \Gamma}{
            \vlhy{g_0 \seqar \Gamma}
        }{
            \vlhy{g_1 \seqar \Gamma}
        }
    }
    }
    \]
    If already $g_i\in \Gamma$ then we must further close off the premiss $g_i \seqar \Gamma$ by applying weakenings and an identity.
    \item If $Q$ extends $Q'$ by a step $\vlinf{\rr \mu}{}{f\seqar \Gamma, \Lambda', \mu X g(X)}{f\seqar \Gamma, \Lambda', g(\mu X g(X))}$ then we set:
    \[
    Q_f \ \df \quad
    \vlderivation{
    \vliiq{Q'_f}{}{f \seqar \Gamma}{
        \vlhy{ \{f' \seqar \Gamma\}_{f' \in \Lambda'} }
    }{
        \vlin{\lr \mu}{}{\mu X g(X) \seqar \Gamma}{\vlhy{g(\mu X g(X)) \seqar \Gamma}}
    }
    }
    \]
    Again, if already $g(\mu X g(X)) \in \Gamma$, then we must further close off the premiss $g(\mu X g(X)) \seqar \Gamma$ by applying weakenings and an identity.
\end{itemize}
Now, returning to \eqref{eq:wk-lem-input}, we define our family of $\CRLA$-proofs $P^f$ for each $f\in \Lambda$ of $f\seqar \Gamma$ coinductively by setting:
\[
P^f \ \df \quad
\vliqf{P_f}{}{f \seqar \Gamma}{
\left\{ 
\vltreeder{P^{g}}{g \seqar \Gamma}{\quad }{ }{\quad }
\right\}_{g\in \Lambda}
}
\]
Since each formula of $\Lambda$ is principal at least once in $P$, each derivation $P_f$ contains at least one left rule, by construction.
Thus each $P_f$ has infinitely many LHS steps and so must be progressing, cf.~\cref{rem:prog-crla-equiv-inf-lhs}.
Note also that, since each $P_f$ is finite, each $P^f$ has only finitely many distinct subproofs, and so is indeed regular.
\end{proof}

We are now ready to put together the proof of our main completeness result:

\begin{proof}
[Proof of \cref{thm:crla-completeness}]
We apply the following bottom-up proof strategy, always preserving validity of the current sequent (wrt.~$\lang \cdot$):
\begin{enumerate}
    \item\label{item:apply-left-rules} Apply left logical rules as long as possible.
    This process can only terminate at a sequent $l\seqar \Gamma$ with $l$ of form $1$ or $ae'$.
    \item\label{item:apply-right-rules} Apply right logical rules \emph{fairly}: always choosing the oldest possible formula as the next principal formula.
    Stop when a sequent repeats and each of its RHS formulas not of form $1$ or $bg$ is principal at least once in between. 
    This gives a sequent of form $l \seqar \Delta, \Lambda$, where $\Delta$ consists only of formulas of the form $1$ or $bg$, and $\Lambda$ consists only of $+$ or $\mu $ formulas.
    \item\label{item:weaken-fixed-points} Weaken all of $\Lambda$ to obtain the sequent $l\seqar a \Gamma', \Delta$.
    \item\label{item:weaken-wrong-letters-and-apply-k} 
    If $l=1$ and $1\in \Delta$, weaken $a\Gamma'$ and the rest of $\Delta$ to terminate at the initial sequent $1\seqar 1$.
    Otherwise $l=ae'$ so, writing $\Gamma'\df \{g : ag \in \Delta\}$, weaken the rest of $\Delta$ to obtain the sequent $ae'\seqar a\Gamma'$, then apply a $\kk a $ step to obtain the sequent $e' \seqar \Gamma'$.
    Go back to \ref{item:apply-left-rules} and continue proof search.
\end{enumerate}
\ref{item:apply-left-rules} and \ref{item:apply-right-rules}
preserve validity by \cref{prop:invertibility}. 
Now \ref{item:weaken-fixed-points} preserves validity by \cref{lem:weakening-lemma}, and \ref{item:weaken-wrong-letters-and-apply-k} by \cref{prop:modal}. 
So, by preservation of validity, the proof search strategy can only terminate at an initial sequent $1\seqar 1$ in \ref{item:weaken-wrong-letters-and-apply-k}. 

Note that \ref{item:apply-right-rules} must indeed terminate as there are only finitely many distinct formulas and sequents that may occur during proof search.
So, since \ref{item:apply-right-rules}, \ref{item:weaken-fixed-points} and \ref{item:weaken-wrong-letters-and-apply-k} are finite, any infinite branch produced has infinitely many left steps from \ref{item:apply-left-rules}.
Thus a preproof produced by this strategy is indeed progressing, cf.~\cref{rem:prog-crla-equiv-inf-lhs}. 

Finally note that each sequent $f \seqar \Pi$ can only conclude at most $|\Pi|$ many distinct subproofs: \ref{item:apply-left-rules}, \ref{item:weaken-fixed-points} and \ref{item:weaken-wrong-letters-and-apply-k} are memoryless, 
\abhishek{I understand what you mean by memoryless, will readers?} 
\anupam{good point. Suggestions?}
and in \ref{item:apply-right-rules} the fairness constraint guarantees that each (possible) formula of $\Pi$ is principal after at most $|\Pi|$ repetitions. 
Since there are only finitely many distinct possible sequents during proof search, a proof produced by this strategy is indeed regular.
%
%
%
%
\end{proof}

\todo{give examples and say something about complexity?}

\section{Completeness of \texorpdfstring{$\RLA$}{} via \texorpdfstring{$\CRLA$}{}}
\label{sec:crla-to-rla}

In this section we use the completeness of $\CRLA$ for $\lang\cdot$, \cref{thm:crla-completeness}, as an engine for proving completeness of the theory of right-linear algebras:

\begin{theorem}
    [Completeness of $\RLA$]
    \label{thm:rla-completeness}
    If $\lang e \subseteq \lang f$ then $\RLA \proves e \seqar f$.
\end{theorem}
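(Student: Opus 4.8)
The plan is to run the completeness of $\CRLA$ (\cref{thm:crla-completeness}) as an engine: given $\lang e \subseteq \lang f$, I would manufacture a cyclic proof and then distil from it an inductive $\RLA$ derivation, the whole difficulty lying in converting the proof's back-edges into applications of the $\mu$-Induction axiom \ref{item:induction-mu}. Concretely, I would first invoke \cref{thm:crla-completeness} to obtain a regular progressing $\CRLA$-proof $P$ of $e \seqar f$. Since $P$ is regular it has only finitely many distinct subproofs; let $(h_i \seqar \Delta_i)_{i \in I}$ enumerate the finitely many sequents occurring as their roots (the \emph{companions}). The goal then reduces to proving $\RLA \proves h_i \leq \sum \Delta_i$ for each $i$, of which the root instance is the desired $\RLA \proves e \leq f$.

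The next ingredient is a bootstrapping step: I would verify that $\RLA$ solves arbitrary right-linear systems of equations. This is exactly \cref{lem:bekic} applied repeatedly — resolving the variables $X_0, \dots, X_{k-1}$ one at a time reduces the least solution of a $k$-variable system to nested $\mu$-expressions, and $\RLA$ proves these are genuinely least. With this in hand I would read a right-linear system off the proof graph: treating $P$ as a finite graph on the companions, each node $h_i \seqar \Delta_i$ unfolds, via the rules occurring before the next companions it reaches, into an equation $X_i = I_i(\vec X)$ whose right-hand side is a $\mu$-free right-linear combination of the successor variables (the $\kk a$ steps contribute letter prefixes, $\lr +$ and $\rr +$ contribute $+$, and $\id,\lr 0$ contribute the constants, essentially mirroring \cref{lemma:local-soundness}). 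By bootstrapping this system has least solutions $\vec S$ in $\RLA$. Semantically — this is the ``intersection'' intuition from \cite{DasDouPou18:lka-completeness} — each $S_i$ computes $\lang{h_i} \cap \lang{\Delta_i}$, which equals $\lang{h_i}$ precisely because each companion sequent is valid.

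With the invariants $\vec S$ fixed, I would argue in $\RLA$ that (a) $S_i \leq \sum \Delta_i$, obtained by a direct ``projection'' since the $\Delta_i$-data is built into the defining system, and that (b) $h_i \leq S_i$. Transitivity of these then yields $h_i \leq \sum \Delta_i$. Direction (b) is where the progress condition is spent: along each loop the (unique) LHS trace passes through a $\lr\mu$ step, so the companion LHS is, or unfolds to, a fixed point $\mu X g(X)$, and establishing $h_i \leq S_i$ amounts to exhibiting $\vec S$ as a suitable pre-fixed point, whereupon the Induction axiom \ref{item:induction-mu} closes the loop. Because several loops may interleave, (b) must be proved simultaneously for all $i$, which is exactly what the least-solution property of $\vec S$ (via \cref{lem:bekic}) delivers.

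I expect the main obstacle to be this last step — phrasing the regular proof's back-edges as honest pre-fixed-point inequalities so that Induction applies uniformly across interleaved loops, and checking that direction (b) goes through for the full tuple rather than one companion at a time. This extraction of inductive invariants is the recurring difficulty of cyclic proof theory; here the structural simplicity of $\CRLA$ (singleton LHS, so a unique LHS trace, with progress witnessed purely by $\lr\mu$ steps, cf.~\cref{rem:prog-crla-equiv-inf-lhs}) is what should keep the invariants tractable, making the intermediate steps considerably more succinct than in the analogous $\lhKA$ argument of \cite{DasDouPou18:lka-completeness}.
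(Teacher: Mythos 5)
Your overall pipeline is the paper's: invoke \cref{thm:crla-completeness} to get a cyclic proof, then extract from it inductive invariants given as least solutions of a right-linear system, using the bootstrapping of $\RLA$ on equational systems. Steps (1)--(3) and your claim (a) are sound: indeed $(\sum\Delta_i)_i$ is a $\geq$-solution of your proof-graph system (the $\rr\mu$ clauses hold by the Prefix axiom), so leastness of $\vec S$ yields $S_i \leq \sum\Delta_i$. The gap is in (b), and it is a directionality error: the least-solution property of $\vec S$ only ever produces \emph{upper} bounds on the $S_i$ --- it says $\vec S$ lies below any other $\geq$-solution; for instance it equally gives $S_i \leq h_i$, since $(h_i)_i$ is itself a $\geq$-solution of your system (again by Prefix at the $\lr\mu$ nodes). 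It can never deliver the \emph{lower} bound $h_i \leq S_i$. That inequality requires leastness on the side of the \emph{formulas}: one must know that each $h_i$ is ($\RLA$-provably) the least solution of its canonical system, which is exactly \cref{thm:solns-can-sys-characterisation}(2), proved via the Reflection lemma (\cref{lem:reflection}) --- that is where the Induction axiom is actually spent --- and one must then exhibit a solution of that inequational canonical system built out of the proof data.

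This is where the second, unaddressed, difficulty appears: the canonical system is indexed by formulas, but your invariants $S_i$ are indexed by nodes, and a single formula $g \in \fl{e_0}$ typically heads many distinct sequents $g \seqar \Delta_{i_1}, g \seqar \Delta_{i_2}, \dots$ in the proof (different RHSs arising after different $\kk{a}$ steps). Concretely, when you try to verify your pre-fixed-point inequality $g(S_i) \leq S_i$ at a companion with LHS $\mu X g(X)$, the unfolding below node $i$ reaches \emph{other} nodes $i'$ with the same LHS but different RHSs, so the system clauses only give an inequality with the $S_{i'}$ sitting where $S_i$ would need to be; nothing relates $S_{i'}$ back to $S_i$, and the Induction axiom cannot be applied. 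The only way to make the argument uniform is to combine $\{S_{i'} : h_{i'} = g\}$ into a single per-formula invariant, i.e.\ a meet --- but $\RLA$ has no meet operation, and establishing usable truth conditions for meets is precisely the obstacle the paper's intermezzo identifies. This is why the paper constructs big meets $\bigcap G_g$ as least solutions of the product system $\meetes{E_0}$, proves their Elimination and monotonicity properties (\cref{prop:meet-struct-props}, \cref{prop:lla-mon-in-rleq}), and only then establishes the Invariant lemma (\cref{lem:invariant}) feeding into \cref{thm:solns-can-sys-characterisation}(2). That big-meet machinery is the missing idea in your proposal; without it, step (b) does not go through except in the special case where each LHS formula heads a unique sequent of the proof.
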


The first two subsections of this section carry out some `bootstrapping' for $\RLA$, namely some properties of canonical systems and a construction of big meets similar to those for left-handed Kleene algebras (see, e.g., \cite{KozSil12:left-handed-completeness,KozSil20:left-handed-completeness,DasDouPou18:lka-completeness}).
The final two subsections prove certain properties of big meets and give a translation of $\CRLA$ proofs to $\RLA$. 
These are inspired by similar developments in \cite{DasDouPou18:lka-completeness}, but our exposition seems much simpler thanks to (a) the convenience of right-linear syntax; and (b) the bootstrapping we carried out for systems of equations in $\RLA$.

\subsection{Characteristic properties of canonical systems}
\label{sec:char-prop-canonical-sys}

Recall the definition of \emph{canonical system} $\esEX E E {\vec X_E}$ of a finite set $E$ of expressions closed under $\FL$, from \cref{def:canonical-systems}.
Let us write $I_e (\vec X_E)$ for the invariant of each $X_e$, i.e.\ so that $X_e = I_e (\vec X_E)$ is the clause for $X_e$ in $\esEX E E {\vec X_E}$.
We now show that $\fl E $ are in fact provably the least solutions of $\esEX E E {\vec X_E}$ in $\RLA$:
\begin{theorem}
[Least solutions of canonical systems]
\label{thm:solns-can-sys-characterisation}
For $\FL$-closed finite sets $E$:
    \begin{enumerate}
        \item\label{item:E-solves-its-can-sys} $\RLA \proves \esEX E E E $ 
        and,
        \item\label{item:E-is-lpfp-of-its-can-sys} $\RLA + \inesEX E E {\vec X_E} \proves {E} \leq \vec X_E$.
    \end{enumerate}
\end{theorem}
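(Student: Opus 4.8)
The plan is to dispatch \cref{item:E-solves-its-can-sys} by a routine clause-by-clause check and to invest the real effort in the leastness claim \cref{item:E-is-lpfp-of-its-can-sys}. For \cref{item:E-solves-its-can-sys}, observe that $\esEX E E E$ is just the canonical system of \cref{def:canonical-systems} with every variable $X_e$ replaced by the expression $e$ itself. Reading off the clauses, those for $0,1,ae$ and $e+f$ become the syntactic identities $0=0$, $1=1$, $ae=ae$ and $e+f = e+f$, each an instance of reflexivity. The only substantive clause is $X_{\mu X e(X)} = X_{e(\mu X e(X))}$, which instantiates to $\mu X e(X) = e(\mu X e(X))$: exactly the statement that $\mu X e(X)$ is a genuine fixed point of $e(\cdot)$. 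This follows in $\RLA$ from the Prefix axiom together with the Postfix principle of \cref{ex:postfix-mu} and antisymmetry.

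For \cref{item:E-is-lpfp-of-its-can-sys} I work throughout in $\RLA + \inesEX E E {\vec X_E}$, so each prefixed-point inequality $X_f \geq I_f(\vec X_E)$ is available as a hypothesis, and I aim to prove $f \leq X_f$ for every $f \in E$. The induction is on $\lefl$, which is well-founded by \cref{prop:fl-props}. When $f$ is not a fixed-point formula its invariant $I_f$ refers only to variables $X_g$ with $g \lefl f$ strictly (for instance $I_{ae} = aX_e$ with $e \lefl ae$, and $I_{e+g} = X_e + X_g$ with $e,g \lefl e+g$); the induction hypothesis then gives $g \leq X_g$, and combining this with monotonicity of the relevant operation (\cref{prop:functoriality-rla}) and the hypothesis $X_f \geq I_f(\vec X_E)$ yields $f \leq X_f$. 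The atomic cases $0$, $1$ and variables are immediate, by boundedness, the hypothesis $X_1 \geq 1$, and reflexivity respectively.

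The interesting case is $f = \mu X e(X)$, where the invariant mentions $X_{e(\mu X e(X))}$ but $e(\mu X e(X)) \eqfl f$, so the induction hypothesis does not apply. Here I switch to the Induction axiom: to prove $\mu X e(X) \leq X_f$ it suffices to show that $X_f$ is a prefixed point of $e(\cdot)$, that is $e(X_f) \leq X_f$. Since the hypotheses include the unfolding inequality $X_{e(f)} \leq X_f$ (writing $f = \mu X e(X)$ and $e(f) = e(\mu X e(X)) \in E$), it is enough to establish the substitution inequality $e(X_f) \leq X_{e(f)}$. I would prove this by a secondary induction on the structure of the body, with the bound variable $X$ frozen: for each sub-context $c$ of $e$ one shows $c[X_f/X] \leq X_{c[f/X]}$, where $c[f/X] \in E$ by $\FL$-closure. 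The case $c = X$ is reflexivity $X_f \leq X_f$; a free variable $Y \neq X$ is $\lefl$-smaller and is handled by the main induction hypothesis; and $c = ac'$ and $c = c_0 + c_1$ reduce to the relevant unfolding inequalities together with the secondary induction hypothesis.

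I expect the main obstacle to be the nested fixed-point case $c = \mu Z d$ of the substitution lemma. If $X$ does not occur in $d$ then $\mu Z d \lefl f$ and the main induction hypothesis applies directly; but if $X$ does occur then $\mu Z\, d[f/X] \eqfl f$ lies in the same $\FL$-class as $f$, so I must apply the Induction axiom a second time, freeze $Z$ as the parameter $X_{\mu Z\, d[f/X]}$, invoke its own unfolding inequality, and thereby recurse on the strictly smaller body $d$ now carrying one additional frozen variable. Making this precise means stating the substitution lemma for an arbitrary tuple of simultaneously frozen $\mu$-bound variables, each mapped to its corresponding parameter, with the well-founded measure being the size of the source context $c$; this strictly decreases at every fixed point even though the index formula $c[f/X]$ on the right grows (while remaining in $E$). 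Once this bookkeeping is in place, discharging all the fixed-point cases delivers $e(X_f) \leq X_{e(f)}$ and hence $f \leq X_f$, completing the induction. It is also worth noting that the canonical system is $\mu$-free, so an alternative route is to eliminate its variables one at a time via \cref{lem:bekic}, which internalises exactly this substitution bookkeeping.
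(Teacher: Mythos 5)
Your treatment of \cref{item:E-solves-its-can-sys} is correct and matches the paper's: all clauses are syntactic identities except the $\mu$-clause, which is the fixed-point property (Prefix, Postfix of \cref{ex:postfix-mu}, antisymmetry). For \cref{item:E-is-lpfp-of-its-can-sys}, the key idea you reach in your final paragraph --- the substitution inequality $e(\vec X_E)\leq X_{e(E)}$ with \emph{all} $\mu$-bound variables frozen as parameters, proved by induction on the size of the source context, using the Induction axiom at fixed points and the hypothesis clauses of $\inesEX E E {\vec X_E}$ elsewhere --- is exactly the paper's route (its Reflection lemma, \cref{lem:reflection}). However, the outer induction on $\lefl$ that you wrap around it rests on a false claim: it is \emph{not} true that for a non-fixed-point $f\in E$ the invariant $I_f$ mentions only variables $X_g$ with $g\lefl f$ strictly. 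Concretely, take $E = \fl{\mu X (aX)} = \{\mu X (aX),\ a\mu X (aX)\}$. The invariant of $X_{a\mu X (aX)}$ is $aX_{\mu X (aX)}$, but $\mu X (aX) \eqfl a\mu X (aX)$: each lies in the other's $\FL$-closure (one by stripping the letter, the other by unfolding). In general, every $\eqfl$-class created by an unfolding cycle contains non-fixed-point formulas alongside the fixed points, and for these your case analysis has no induction hypothesis to appeal to; in the example above, \emph{neither} element of $E$ has anything strictly below it, so the induction cannot even start. The same defect infects the places where your secondary induction calls back to the main one (``$Y\neq X$ free is $\lefl$-smaller'', ``$X\notin\fv d$ implies $\mu Z d\lefl f$''), since the required strict inequalities can fail.

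The repair is to discard the outer induction entirely and promote your generalized substitution lemma to the main statement, which is precisely \cref{lem:reflection}: for every expression $e(\vec X_E)$ over the parameters $\vec X_E$ (possibly containing $\mu$-binders) with $e(E)\in E$, one has $\RLA + \inesEX E E {\vec X_E} \proves e(\vec X_E) \leq X_{e(E)}$, by structural induction on the template $e(\vec X_E)$. In that induction a parameter $X_g$ is handled by reflexivity (no appeal to any outer hypothesis), the cases $0,1,+,a\cdot$ by the semilattice/homomorphism axioms together with the corresponding hypothesis clause, and $\mu X e'(X,\vec X_{E'})$ by applying the inductive hypothesis to the body with the bound variable replaced by the parameter $X_{e(E)}$ --- legitimate because the unfolding $e'(e(E))$ lies in $E$ by $\FL$-closure --- followed by the hypothesis clause for the $\mu$-formula and the Induction axiom. \cref{item:E-is-lpfp-of-its-can-sys} is then immediate with no further induction: each $f\in E$ is closed, hence is itself such a template, giving $f\leq X_f$ directly.
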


Note that above we are construing the orderings of $E$ and $\vec X_E$ to match up in the obvious way. We are writing $E\leq \vec X_E$ for the conjunction of all $e\leq X_e$, for $e\in E$.
To prove the theorem
we first need the following `reflection' principle:
\begin{lemma}
[Reflection]
    \label{lem:reflection}
    $\RLA + \inesEX E E {\vec X_E} \proves {E} \proves e(\vec X_E) \leq X_{e(E)}$, for each $e(E) \in E$.
\end{lemma}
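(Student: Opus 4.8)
The plan is to prove Reflection by structural induction on $e$, working throughout in $\RLA$ together with the hypotheses $\inesEX E E {\vec X_E}$, reading $e(\vec X_E)$ and $e(E)$ as the two substitution instances of the context $e$: one filling its placeholder leaves by the variables $\vec X_E$, the other by the corresponding formulas of $E$. The point of the hypotheses is that each clause of $\inesEX E E {\vec X_E}$ is exactly a ``one-step'' inequality of the form $X_{e(E)} \geq I_{e(E)}(\vec X_E)$, and since $E$ is closed under $\FL$ (\cref{def:fl}), whenever $e(E)\in E$ the immediate subcontexts $e'$ of $e$ also satisfy $e'(E)\in E$, so the inductive hypotheses are always available.

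The base cases are immediate: for $e=0$ we have $0 \leq X_0$ by boundedness; for $e=1$ we have $1\leq X_1$ directly from the hypothesis clause $X_1 \geq 1$; and for a placeholder leaf filled by $g\in E$ we have $e(\vec X_E) = X_g \leq X_g = X_{e(E)}$ by reflexivity. For $e = e_0 + e_1$, the inductive hypotheses give $e_i(\vec X_E)\leq X_{e_i(E)}$, so by monotonicity of $+$ (\cref{prop:functoriality-rla}) we obtain $e_0(\vec X_E)+e_1(\vec X_E) \leq X_{e_0(E)} + X_{e_1(E)}$, and the hypothesis clause $X_{e_0(E)+e_1(E)} \geq X_{e_0(E)} + X_{e_1(E)}$ finishes the case. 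The case $e = a e'$ is analogous, combining the inductive hypothesis with monotonicity of $a\cdot$ and the clause $X_{ae'(E)} \geq a X_{e'(E)}$.

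The interesting case is $e = \mu Y f(Y)$, where $f$ is itself a context in the bound variable $Y$ (and the placeholders), so that $e(\vec X_E) = \mu Y f(Y)(\vec X_E)$ and $e(E) = \mu Y f(Y)(E)\in E$. Here we cannot simply recurse on the unfolding, since an $\FL$-unfolding does not strictly decrease any structural measure (it is $\eqfl$ to the fixed point), so we must appeal to the Induction axiom of $\RLA$. Concretely, to conclude $\mu Y f(Y)(\vec X_E) \leq X_{e(E)}$ it suffices to show that $X_{e(E)}$ is a prefixed point of $Y \mapsto f(Y)(\vec X_E)$, i.e.\ $f(X_{e(E)})(\vec X_E) \leq X_{e(E)}$. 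Now $\FL$-closure gives that the one-step unfolding $f(e(E))(E) \in E$, so filling the $Y$-slot of $f$ with the fixed-point formula $e(E)\in E$ and applying the inductive hypothesis to the smaller context $f$ yields $f(X_{e(E)})(\vec X_E) \leq X_{f(e(E))(E)}$. Finally the unfolding clause of the hypothesis for $e(E)$ reads $X_{e(E)} \geq X_{f(e(E))(E)}$, and chaining the two inequalities gives exactly the prefixed-point inequality needed.

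The main obstacle is precisely this fixed-point case: the delicate point is the substitution bookkeeping, keeping the bound variable $Y$ separate from the context placeholders and recognising that filling $Y$ with the fixed-point formula itself stays inside $E$ by $\FL$-closure, which is what licenses the inductive hypothesis on $f$. Once this is set up, the Induction axiom and the unfolding inequality from $\inesEX E E {\vec X_E}$ slot together cleanly, and the remaining cases reduce to the routine monotonicity manipulations above.
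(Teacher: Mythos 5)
Your proof is correct and follows essentially the same route as the paper's: structural induction on the context $e$, with the base and homomorphism cases handled by the clauses of $\inesEX E E {\vec X_E}$ together with semilattice monotonicity, and the fixed-point case resolved by applying the inductive hypothesis to the body $f$ (with the bound variable treated as a placeholder filled by $e(E)$, available in $E$ by $\FL$-closure), chaining with the unfolding clause, and invoking the Induction axiom. The substitution bookkeeping you highlight is exactly the delicate point the paper's proof also navigates, via its decomposition $e(\vec X_E) = \mu X\, e'(X, \vec X_{E'})$ with $E' = E \setminus \{e(E)\}$.
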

\begin{proof}
    By induction on the structure of $e(\vec X_E)$, working in $\RLA + \inesEX E E {\vec X_E}$:
    \begin{itemize}
    \item If $e(\vec X_E)=0$ then $X_0$ then $X_0 \geq 0$ by $\inesEX E E {\vec X_E} $ (or Boundedness).
        \item If $e(\vec X_E)=1$ then $X_1 \geq 1$ by $\inesEX E E {\vec X_E} $.
        \item If $e(\vec X_E) = X_f$ then $e(\vec X_E) \leq X_{e(E)}$ is just an identity on $X_f$.
        \item If $e(\vec X_E) = af(\vec X_E)$ then we have:
        \[
        \begin{array}{r@{\ \leq \ }ll}
             f(\vec X_E) & X_{f(E)} & \text{by inductive hypothesis} \\
             af(\vec X_E) & aX_{f(E)} & \text{as $a$ is a homomorphism} \\
             e(\vec X_E) & X_{af(E)} & \text{by dfn.\ of $e(\vec X_E)$ and $\inesEX E E {\vec X_E}$}
        \end{array}
        \]
        \item If $e(\vec X_E) = e_0(\vec X_E) + e_1(\vec X_E)$ then we have:
        \[
        \begin{array}{r@{\ \leq \ }ll}
             e_0(\vec X_E) & X_{e_0(E)} & \text{by inductive hypothesis} \\
             e_1(\vec X_E) & X_{e_1(E)} & \text{by inductive hypothesis} \\
             e_0(\vec X_E)  + e_1(\vec X_E) & X_{e_0(E)} + X_{e_1(E)} & \text{as $+$ forms a semilattice} \\
             e(\vec X_E) & X_{e(E)} & \text{by dfn.\ of $e(\vec X_E)$ and $\inesEX E E {\vec X_E}$}
        \end{array}
        \]
        \item If $e(\vec X_E) = \mu X e'(X ,\vec X_{E'})$, where $E' \df E\setminus\{e(E)\}$, then we have:\footnote{Note here that, $e(E)\in E$ means that $e(E)$ cannot be a proper subformula of $e(E)$ itself, by as formula construction is wellfounded.}
        \[
        \begin{array}{r@{\ \leq \ }ll}
             e'(X_{e(E)}, \vec X_E) & X_{e'(e(E), E')} & \text{by inductive hypothesis} \\
                & X_{e(E)} & \text{by dfn.\ of $e(\vec X_E)$ and $\inesEX E E {\vec X_E}$} \\
            \mu X e'(X,\vec X_{E'}) & X_{e(E)} & \text{by Induction}\\
            e(\vec X_E) & X_{e(E)} & \text{by dfn.\ of $e(\vec X_E)$} \qedhere
        \end{array}
        \]
    \end{itemize}
\end{proof}

From here we can readily prove our earlier characterisation of canonical solutions:

\begin{proof}
    [Proof of \cref{thm:solns-can-sys-characterisation}]
    For \ref{item:E-solves-its-can-sys}, each clause $e = I_e(E)$ of $\esEX E E E $ is just an identity unless $e$ has form $\mu X e'(X)$. 
    In this case the corresponding clause $\mu X e'(X) = e'(\mu X e'(X))$ follows from the fact that $e$ is a fixed point, cf.~\cref{ex:postfix-mu}.\footnote{Note that this argument goes through already in $\RLAhat \subseteq \RLA$, where $\mu X e'(X)$ is axiomatised as a (not necessarily least) fixed point.}

    \ref{item:E-is-lpfp-of-its-can-sys} follows immediately from \cref{lem:reflection} above as $e\in E$ is closed. 
\end{proof}

\subsection{Big meets}
\label{sec:big-meets}
Let us write $m,n,\dots $ for expressions of the form $1$ or $ae$.

Fix a finite set $E$ of expressions closed under $\FL$, i.e.\ $\fl {E} = E$.
%
%
We define a (infinite) system of equations $\meetes {E}(X_{\vec e})_{\vec e \in  E^*}$ consisting of all equations of form:\footnote{To be clear $\vec e$ should not be confused with the product $\prod \vec e$, which recall is not a right-linear expression: $\vec e$ is just a list of expressions for which, at the risk of ambiguity, we omit explicit delimiters for succinctness of notation.}
\begin{itemize}
    \item $X_{\vec n 0 \vec g} = 0$
    \item $X_{\vec n (f_0 + f_1) \vec g} = X_{\vec n f_0 \vec g} + X_{\vec n f_1 \vec g}$
    \item $X_{\vec n \mu X f(X) \vec g} = X_{\vec n f(\mu X f(X)) \vec g}$
    \item $X_{\vec 1} = 1$
    \item $X_{a\vec e} = aX_{\vec e}$
\end{itemize}

\abhishek{I'm a bit confused about the notation here. What does $\vec n 0 \vec g$ mean? It is not a well-formed right-linear expression.}
\anupam{I added a footnote}

Note that, while $\meetes{E}(X_{\vec e})_{\vec  e \in E^*}$ is an infinite set, each variable depends on only finitely many others; moreover, for each $k\in \Nat$, we have that the restriction $\meetes{E}^k(X_{\vec e})_{\vec e \in E^k}$ to only clauses for $X_{\vec e}$ with $\vec e \in E^k$ is a well-defined finite system of equations.
We point out that, when $k=1$, $\meetes E^1 (X_{e})_{e \in E}$ is just (a renaming of) the canonical system $\esEX E E {\vec X_E}$ for $E$, cf.~\cref{def:canonical-systems}.

\todo{change $\bigcap$ to $\bigsqcap$}
Duly let us (suggestively) write $\bigcap \vec e$ for the least solutions of $\meetes{E}(X_{\vec e})_{\vec  e \in E_0^*}$ in $\RLA$ ($E$ will always be clear from context).
Before addressing the semantics of these solutions, let us establish some basic recursive properties for later use:
%

\begin{lemma}
[Recursive properties]
\label{lem:rec-props-of-meets}
    $\RLA$ proves all equations of the form:
    \begin{itemize}
     \item $\bigcap{\vec e 0 \vec g} = 0$
    \item $\bigcap{\vec e (f_0 + f_1) \vec g} = \bigcap{\vec e f_0 \vec g} + \bigcap{\vec e f_1 \vec g}$
    \item $\bigcap{\vec e \mu X f(X) \vec g} = \bigcap{\vec e f(\mu X f(X)) \vec g}$
    \item $\bigcap{\vec 1} = 1$
    \item $\bigcap{a\vec e} = a\bigcap{\vec e}$
    \end{itemize}
\end{lemma}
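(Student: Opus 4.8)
The statement to prove is \cref{lem:rec-props-of-meets}, asserting that the least solutions $\bigcap\vec e$ of the big-meet system $\meetes E$ satisfy the same recursive clauses as the defining system, but now as \emph{provable equations in $\RLA$} rather than as the definitions of distinct variables. The key conceptual point is that the $\bigcap\vec e$ are defined as least solutions of the \emph{entire} infinite system $\meetes E$, whereas the clauses we want to prove are about individual tuples. The whole content of the lemma is therefore that ``least solution of the system'' transfers the defining clauses into provable identities. My plan is to invoke the general theory of least solutions of right-linear systems already developed in the excerpt, specialised to $\meetes E$.

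\textbf{Key steps.}
First I would record that $\meetes E$ is, by the remark just before the lemma, an infinite system each of whose variables depends on only finitely many others, and that its least solutions exist in any $\RLA$ model (and are $\RLA$-provable as solutions) by \cref{lem:bekic} applied finitely often inside each finite truncation $\meetes E^k$. The cleanest route is to observe that \cref{thm:solns-can-sys-characterisation}\ref{item:E-solves-its-can-sys} — which says that the least solution of a canonical system provably \emph{solves} its own defining equations in $\RLA$ — is exactly the statement that ``least solution'' turns the defining clauses into provable equalities. So I would argue that the same proof applies verbatim to $\meetes E$: since the $\bigcap\vec e$ are by definition the least solutions, each defining clause of $\meetes E$ becomes a provable equation once we substitute these solutions. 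Concretely, for each clause type — $X_{\vec n 0\vec g}=0$, $X_{\vec n(f_0+f_1)\vec g}=X_{\vec n f_0\vec g}+X_{\vec n f_1\vec g}$, $X_{\vec n\mu Xf(X)\vec g}=X_{\vec nf(\mu Xf(X))\vec g}$, $X_{\vec 1}=1$, and $X_{a\vec e}=aX_{\vec e}$ — substituting $\bigcap(\cdot)$ for each $X_{(\cdot)}$ yields precisely the five equations claimed in the lemma, and the fact that these hold in $\RLA$ is exactly what it means for $\bigcap(\cdot)$ to be a (provable) solution of $\meetes E$.

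\textbf{The main obstacle.}
The only subtlety I anticipate is that $\meetes E$ is genuinely infinite, so I cannot directly cite the finite-system machinery of \cref{lem:bekic} or \cref{thm:solns-can-sys-characterisation} as a black box. The resolution is the finiteness-of-dependency observation: for any fixed tuple $\vec e$ the clause for $X_{\vec e}$ involves only tuples of bounded length over the finite set $E$, so the relevant fragment is a finite subsystem, and $\RLA$-provability of that single clause only ever appeals to a finite truncation $\meetes E^k$. Thus I would prove each clause by restricting attention to a large-enough finite truncation, applying \cref{thm:solns-can-sys-characterisation}\ref{item:E-solves-its-can-sys} (or Beki\'c directly) there, and noting the least solutions of truncations agree with $\bigcap(\cdot)$ on the relevant tuples by monotonicity/uniqueness of least solutions. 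I expect this ``finite truncation suffices'' bookkeeping to be the one place needing care; everything else is a direct unwinding of what ``least solution'' provably means in $\RLA$.
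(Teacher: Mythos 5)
There is a genuine gap here, and it is precisely the point the paper flags in the sentence immediately following the lemma: \emph{the prefix $\vec e$ in the stated equations is arbitrary, unlike in the clauses of $\meetes{E}(X_{\vec e})_{\vec e\in E^*}$}. Your plan treats the lemma as a direct unwinding of "least solutions provably satisfy their defining clauses," but the system's clauses for $0$, $+$ and $\mu$ only apply when the prefix before the critical formula is a list $\vec n$ of expressions of the form $1$ or $ae$ (this restriction is what makes the system well-defined, with at most one clause per variable). So substituting the solutions $\bigcap(\cdot)$ into the defining clauses yields only the restricted instances $\bigcap \vec n\, 0\, \vec g = 0$, etc. Concretely, your argument cannot prove, say, $\bigcap (f_0+f_1)\, 0 = 0$: the defining clause for $X_{(f_0+f_1)0}$ is $X_{(f_0+f_1)0} = X_{f_0 0} + X_{f_1 0}$, not $X_{(f_0+f_1)0}=0$, so no substitution of solutions into clauses gives the claimed equation. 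Only the last two clauses, $\bigcap \vec 1 = 1$ and $\bigcap a \vec e = a \bigcap \vec e$, follow immediately in the way you describe, because those system clauses carry no prefix restriction.

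The missing ingredient is a genuine leastness argument, which is how the paper proceeds case by case: for $\bigcap \vec e\, 0\, \vec g \leq 0$ one checks that assigning $0$ to every variable $X_{\vec e 0 \vec g}$ yields a solution of the relevant part of the system (using $0 = 0+0$ for the $+$ clauses, and noting the $\vec 1$ and $a\vec e$ clauses never apply to such tuples), then invokes leastness of $\bigcap$; for the $+$ clause one shows each side of the claimed equation is a solution for the other side's variable in $\meetes{E}^\geq(X_{\vec e})_{\vec e \in E^*}$, using monotonicity and idempotency of $+$ in $\RLA$; the $\mu$ case is similar but simpler, identifying the solutions of the two variables. By contrast, the "finite truncation" issue you single out as the main obstacle is a non-issue: every clause preserves tuple length, so $\bigcap \vec e$ for $\vec e \in E^k$ is just the least solution of the finite system $\meetes{E}^k$, exactly as the paper notes before the lemma, and no appeal to \cref{thm:solns-can-sys-characterisation} or a delicate truncation/agreement argument is needed for that point.
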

Note, in the clauses above, that the prefix $\vec e$ may be arbitrary, unlike in the clauses of $\meetes{E}(X_{\vec e})_{\vec  e \in E_0^*}$.
\begin{proof}
    We consider each case separately: 
    \begin{itemize}
        \item $\bigcap{\vec e 0 \vec g} \geq 0$ as $0$ is a bottom element in $\RLA$.
        For the converse, we just set $0$ to be the solution of every $X_{\vec e0\vec g}$ and note that $0 = 0+0$ in $\RLA$, for the $+$ cases, and that the $\vec 1 $ and $a\vec e $ cases do not present.
        \item For $\bigcap{\vec e (f_0 + f_1) \vec g} \geq \bigcap{\vec e f_i \vec g} + \bigcap{\vec e f_1 \vec g}$, it suffices to show that $\bigcap \vec e(f_0+f_1)\vec g$ is a solution for $X_{\vec ef_i \vec g}$ in $\meetes{E}^\geq(X_{\vec e})_{\vec  e \in E^*}$.
        Each clause is immediate except the critical case for $X_{\vec n f_i \vec g} $. For this we simply first notice that $\bigcap \vec n (f_0 + f_1) \vec g \geq \bigcap \vec n f_i \vec g $ by the $\meetes{ E}$-equations, thence simulating the clause for $X_{\vec n f_i \vec g}$.
        The converse is proved similarly, showing that $\bigcap{\vec e f_i \vec g} + \bigcap{\vec e f_1 \vec g}$ is a solution for $X_{\vec e (f_0 + f_1) \vec g}$ in $\meetes{E}^\geq(X_{\vec e})_{\vec  e \in E^*}$.
        For each case we rely on monotonicity of $+$ (on the RHS) and idempotency of $+ $ (on the LHS) in $\RLA$.
        \item $\bigcap{\vec e \mu X f(X) \vec g} \geq \bigcap{\vec e f(\mu X f(X)) \vec g}$ is proved similarly to above, but goes through more simply as we can just identify the solutions to $X_{\vec n f(\mu X f(X)) \vec g}$ and $X_{\vec n \mu X f(X)\vec g}$ in $\meetes {E}^\geq  (X_{\vec e})_{\vec e \in E^*} $.
        \item $\bigcap{\vec 1} = 1$ follows immediately from the $\meetes {E} $ equations.
        \item $\bigcap{a\vec e} = a\bigcap{\vec e}$ follows immediately from the $\meetes {E} $ equations. \qedhere
    \end{itemize}
\end{proof}



\subsection{Intermezzo: the problem of truth conditions for meets}
We can view the solution $\bigcap \vec e$ as a sort of \emph{product construction} on right-linear grammars (equivalently NFAs).
Thus semantically, within the regular language model $\lang \cdot$, we duly have that $\bigcap \vec e$ indeed computes the suggested intersection:
\begin{proposition}
    $\lang {\bigcap \vec e } = \bigcap\limits_{e \in \vec e} \lang {e} $.
\end{proposition}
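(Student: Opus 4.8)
The plan is to prove the two inclusions separately: the forward inclusion via a \emph{least solution} argument, and the reverse (harder) one by induction on word length.

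I would first observe that $\lang{\bigcap \vec e}$ is in fact the \emph{least} solution of $\meetes{E}$ in the model $\lang\cdot$. By definition $\RLA$ proves both that the family $\bigcap\vec e$ solves $\meetes{E}$ and the associated leastness entailment; since $\lang\cdot$ is a model of $\RLA$ by soundness (\cref{thm:rla-soundness}), and the model is defined on a syntax already extended with language constants, both facts transfer to $\lang\cdot$ (the leastness entailment being applied with its competitor variables instantiated by language constants). For the $\subseteq$ direction it then suffices to check that the assignment $X_{\vec e}\mapsto \bigcap_{e\in\vec e}\lang e$ is itself a solution of $\meetes{E}$ in $\lang\cdot$, since leastness then yields $\lang{\bigcap\vec e}\subseteq \bigcap_{e\in\vec e}\lang e$. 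The clauses are routine: the $0$ and $1$ clauses are immediate, the $\mu$ clause uses the fixed-point identity $\lang{\mu X f(X)}=\lang{f(\mu X f(X))}$ in $\lang\cdot$, the $+$ clause uses distributivity of $\cap$ over $\cup$, and the $a$-clause uses that $\sigma\mapsto a\sigma$ commutes with arbitrary intersections (being injective).

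The reverse inclusion $\bigcap_{e\in\vec e}\lang e\subseteq\lang{\bigcap\vec e}$ is the substantial part. I would prove, by induction on $|w|$, that if $w\in\lang e$ for every $e\in\vec e$ then $w\in\lang{\bigcap\vec e}$. The engine is the recursive properties of \cref{lem:rec-props-of-meets}, which by soundness (\cref{thm:rla-soundness}) hold verbatim in $\lang\cdot$. Using the $0$, $+$ and $\mu$ clauses I would reduce $\vec e$ to a list $\vec m$ of \emph{modal} formulas (each of the form $1$ or $bf$) that still contains $w$ in the intersection of its languages and satisfies $\lang{\bigcap\vec m}\subseteq\lang{\bigcap\vec e}$: at each $+$-split I choose the disjunct containing $w$ (here $\lang{\bigcap\cdot}$ only shrinks, which is the direction I need), and I unfold each $\mu$ in place (an equality). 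Once $\vec m$ is modal: if $w=\epsilon$ then every entry must be $1$, so $\vec m=\vec 1$ and $w\in\lang 1=\lang{\bigcap\vec 1}\subseteq\lang{\bigcap\vec e}$; otherwise $w=aw'$ forces every entry to have head letter $a$, so $\vec m=a\vec f$, and applying the induction hypothesis to the strictly shorter word $w'$ (which lies in every $\lang f$) gives $w'\in\lang{\bigcap\vec f}$, whence $w=aw'\in a\lang{\bigcap\vec f}=\lang{\bigcap a\vec f}=\lang{\bigcap\vec m}\subseteq\lang{\bigcap\vec e}$ using the $a$-clause.

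The main obstacle is the \emph{termination} of the reduction to modal form, which is not obvious for unguarded fixed points such as $\mu X X$, where naive unfolding loops forever. I would resolve this by exploiting the finitary character of least fixed points in $\lang\cdot$: the operators $A\mapsto\lang{e(A)}$ are continuous, so membership $w\in\lang e$ is already witnessed by a finite unfolding, equivalently by a finite accepting run of $w$ in the right-linear grammar associated with the canonical system (\cref{prop:rl-langs-are-just-reg-langs}, cf.\ the accepting-run analysis in the soundness proof of \cref{thm:crla-soundness}). Following such a run, the reduction of each list entry to modal form terminates, and since $\vec e$ is a finite list the whole reduction terminates, completing the induction.
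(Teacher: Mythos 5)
Your proof is correct, and it is worth knowing that the paper does not actually spell out a proof of this proposition at all: it is asserted as immediate from viewing $\meetes{E}$ as the product construction on right-linear grammars (equivalently NFAs), whence the statement is the standard fact that product automata compute intersections of languages. Your argument is a self-contained formalisation of essentially that content, but decomposed quite differently and built only from results the paper has already established. Your forward inclusion — transferring the $\RLA$-provable leastness of $\bigcap\vec e$ to $\lang\cdot$ via soundness (\cref{thm:rla-soundness}) and checking that $X_{\vec e}\mapsto\bigcap_{e\in\vec e}\lang e$ solves $\meetes{E}$ (distributivity of $\cap$ over $\cup$ for the $+$ clause, injectivity of $\sigma\mapsto a\sigma$ for the letter clause) — is the ``projection'' half of product-automaton correctness; your reverse inclusion, by induction on $|w|$ using \cref{lem:rec-props-of-meets} read in the model, is the ``combination of runs'' half. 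Crucially, you correctly identified the one genuine pitfall: termination of the reduction to modal form fails for naive unfolding of unguarded fixed points (e.g.\ $\mu X(1+X)$, where ``choose a disjunct containing $w$'' can loop forever), and your resolution — guiding the reduction by a finite accepting run of $w$ in the canonical grammar, exactly the device used in the paper's soundness proof of \cref{thm:crla-soundness} — is sound and non-circular, since it relies only on \cref{prop:rl-langs-are-just-reg-langs} and not on anything involving meets. What the paper's appeal buys is brevity; what your proof buys is that it imports no automata-theoretic facts wholesale, instead running entirely on the paper's own machinery (soundness, the Beki\'c-style leastness of $\bigcap\vec e$, and the recursive properties), at the cost of length.
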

To exploit this property when reasoning in $\RLA$ we would ideally like meets to satisfy their basic truth conditions in $\RLA$, namely:
\begin{enumerate}
\item\label{item:binary-meet-elimination} $ \bigcap \vec e \leq e$, for each $e \in \vec e$.
    \item\label{item:binary-meet-introduction} $ \bigwedge_i  e \leq f_i \rightarrow e \leq \bigcap \vec f $
\end{enumerate}

In fact the second condition \emph{fails} in some right-linear algebras, so there is no hope of proving it outright. 
For instance consider any model of $\RLA$ satisfying $0\lneq a \lneq b$.
Then we certainly have $a\leq a$ and $a\leq b$, but by the $\meetes E$-equations we have that $a\cap b = 0$, and so certainly we do not have $a \leq a \cap b$.
More reasonable might be to reframe the second condition as an admissible rule:
\begin{enumerate}
\setcounter{enumi}{2}
    \item\label{item:binary-meet-introduction-rule} $ \forall i \, \proves  e\leq f_i \implies \proves e \leq \bigcap \vec f$.
\end{enumerate}
However even this property for $\RLA$, while eventually true a fortiori, seems to evade direct proof: while some structural induction on $\RLA$ derivability seems necessary, the case of the Induction rule renders such argument difficult to prove outright, without resorting to completeness of $\RLA $ itself.

In the absence of these properties, even associativity of binary meets is not readily derivable, which is why we more generally consider big meets.\anupam{this point might be lost here, could use more example running through the paper, or set up this intermezzo differently.}
Instead, we will be able to establish \ref{item:binary-meet-elimination} and a much weaker version of \ref{item:binary-meet-introduction-rule}, namely monotonicity wrt\ an external relation $\rleq$ extracted from our system $\CRLA$.

\subsection{Structural properties of big meets in \texorpdfstring{$\RLA$}{}}
%
Let us write $\Gamma \rleq \Gamma'$ for the smallest preorder on cedents (i.e.\ sets of formulas) satisfying:
\begin{itemize}
\item $\Gamma  \rleq \Gamma,0$
\item $\Gamma  \rleq \Gamma,e$
    \item $\Gamma, e_i \rleq \Gamma, e_0 + e_1$
    \item $\Gamma, e(\mu X e(X)) \rleq \Gamma, \mu X e(X) $.
\end{itemize}

By inspecting the rules of $\LRLAhat$ we have:
\begin{observation}
\label[observation]{obs:rleq-above-below}
    If $e\seqar \Gamma$ occurs below $e'\seqar \Gamma' $ in a $\LRLAhat$ preproof and there are no $\K$-steps in between, then $e' \rleq e$ and $\Gamma \rleq \Gamma'$.
    \end{observation}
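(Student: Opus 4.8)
The plan is to prove \cref{obs:rleq-above-below} by a straightforward induction on the height of the derivation segment between $e \seqar \Gamma$ (below) and $e' \seqar \Gamma'$ (above), noting that since the segment contains no $\kk a$-steps, every rule applied is one of $\id$, $\wk$, the left rules $\lr 0, \lr +, \lr \mu$, or the right rules $\rr 0, \rr +, \rr \mu$. The base case is when the two sequents coincide, where $e' \rleq e$ and $\Gamma \rleq \Gamma'$ hold by reflexivity of $\rleq$. For the inductive step, I would consider the bottommost rule of the segment, relating $e \seqar \Gamma$ to the immediately-above sequent $e'' \seqar \Gamma''$, and then compose with the inductive hypothesis applied to the segment from $e'' \seqar \Gamma''$ up to $e' \seqar \Gamma'$.

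The key observation driving each case is that the defining clauses of $\rleq$ are exactly chosen to mirror the non-$\kk a$ rules, separately on the LHS and the RHS. Concretely, I would check: for $\wk$, going from premiss $e \seqar \Gamma$ to conclusion $e \seqar \Gamma, f$ we get $\Gamma \rleq \Gamma, f$ by the clause $\Gamma \rleq \Gamma, f$, with the LHS unchanged; for $\rr 0$ we use $\Gamma \rleq \Gamma, 0$; for $\rr +$ we use $\Gamma, f_0, f_1 \supseteq$-related via $\Gamma, f_i \rleq \Gamma, f_0 + f_1$ (taking the RHS upward); and for $\rr \mu$ we use $\Gamma, e(\mu X e(X)) \rleq \Gamma, \mu X e(X)$. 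Dually, the left rules leave $\Gamma$ fixed and move the LHS formula: $\lr +$ sends premiss LHS $e$ (or $f$) to conclusion LHS $e + f$ via $e_i \rleq e_0 + e_1$, and $\lr \mu$ sends $e(\mu X e(X))$ to $\mu X e(X)$. The rules $\lr 0$ and $\id$ are leaves, so they cannot appear strictly below $e' \seqar \Gamma'$ within the segment and need no treatment beyond the base case. In each case the single-step relation is one of the generating clauses of $\rleq$, and transitivity of the preorder lets me chain it with the inductive hypothesis to conclude $e' \rleq e$ and $\Gamma \rleq \Gamma'$.

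I do not expect a genuine obstacle here: the statement is essentially a bookkeeping lemma asserting that $\rleq$ was defined precisely to track the monotone effect of the non-modal rules, read bottom-up. The only mild subtlety worth flagging is the handedness convention — the LHS formula grows going \emph{downward} (so $e' \rleq e$, with the upper formula smaller), whereas the RHS cedent grows going \emph{upward} (so $\Gamma \rleq \Gamma'$, with the upper cedent larger) — and I would make sure the direction of each generating clause is applied consistently with this asymmetry. Since $\rleq$ is stipulated to be a preorder (hence reflexive and transitive), the induction goes through cleanly once each single rule instance is matched to its corresponding clause.
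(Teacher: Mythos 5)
Your strategy --- formalising the paper's ``by inspecting the rules'' as an induction on the rule-free segment, with one case per non-$\kk a$ rule --- is the right one, and indeed it is all the paper itself offers for this observation. Each individual case you check is also correct (modulo one imprecision: for $\rr +$, relating the premiss cedent $\Gamma, f_0, f_1$ to the conclusion cedent $\Gamma, f_0+f_1$ needs \emph{two} instances of the generating clause $\Gamma, e_i \rleq \Gamma, e_0+e_1$, chained by transitivity and using that cedents are sets, not a single instance as your sketch suggests). The genuine problem is in your closing paragraph, and it concerns exactly the ``handedness'' subtlety you flag. Every RHS case you verify establishes that the \emph{upper} cedent is $\rleq$ the \emph{lower} one: $\Gamma \rleq \Gamma, f$ for $\wk$, $\Gamma \rleq \Gamma, 0$ for $\rr 0$, $\Gamma, f(\mu X f(X)) \rleq \Gamma, \mu X f(X)$ for $\rr \mu$, and $\Gamma, f_0, f_1 \rleq \Gamma, f_0+f_1$ for $\rr +$. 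So what your cases actually prove is $e' \rleq e$ \emph{and} $\Gamma' \rleq \Gamma$, i.e.\ upper $\rleq$ lower on both sides. Your final paragraph then asserts the opposite for the RHS --- ``the RHS cedent grows going upward (so $\Gamma \rleq \Gamma'$, with the upper cedent larger)'' --- presumably to match the statement's literal wording. That assertion is false and contradicts your own case analysis: for a $\wk$ step with empty premiss RHS, the lower-$\rleq$-upper reading would require $\{f\} \rleq \emptyset$, which by the soundness observation for $\rleq$ would give $\RLA \proves f \leq 0$, refuted in $\lang\cdot$ for $f = 1$.

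The resolution is that the statement as printed has the two cedents transposed: the intended (and true) claim is $e' \rleq e$ and $\Gamma' \rleq \Gamma$, and this is the version the paper actually uses in \cref{lem:invariant} --- there one concludes $\Gamma \rgeq a\Gamma'$ where $ae \seqar a\Gamma'$ is the first $\kk a$-conclusion \emph{above} $ae \seqar \Gamma$, and $\{1\} \rleq \Gamma$ from an initial sequent $1 \seqar 1$ above $1 \seqar \Gamma$; both are instances of upper $\rleq$ lower. So the fix to your write-up is small but necessary: keep the induction and the cases, state the conclusion as $\Gamma' \rleq \Gamma$, and delete the final paragraph's directional claim. As it stands, your proposal proves one statement and then announces a different, incompatible one.
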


As expected, we also have soundness of $\rleq$ for $\RLA$:
\begin{observation}
    $\Gamma \rleq \Gamma' \implies \RLA \proves \sum \Gamma \leq \sum \Gamma'$.
\end{observation}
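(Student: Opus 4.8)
The plan is to exploit that $\rleq$ is, by definition, the \emph{smallest} preorder closed under the four displayed generating clauses. Let $R$ be the relation on cedents defined by $\Gamma \mathrel R \Gamma'$ iff $\RLA \proves \sum \Gamma \leq \sum \Gamma'$. It then suffices to show that $R$ is itself a preorder containing each of the four generating pairs, since minimality of $\rleq$ immediately yields $\rleq\, \subseteq\, R$, which is precisely the claimed implication.

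First I would record that $\sum \Gamma$ is well-defined up to $\RLA$-provable equality on \emph{sets} of formulas: with the convention $\sum\emptyset \df 0$, commutativity, associativity and idempotency of $+$ from the semilattice axioms \ref{item:join-semilattice} ensure that the value is independent of any chosen enumeration of $\Gamma$, and in particular that $\sum(\Gamma, e) = \sum \Gamma + e$ holds even when $e \in \Gamma$. Reflexivity and transitivity of $R$ are then immediate from the corresponding properties of the defined order $\leq$ in $\RLA$ (listed just after \cref{def:axioms-rla}), so $R$ is a preorder.

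It remains to verify that $R$ contains each generating clause, each being an elementary consequence of the $\RLA$ axioms. The clauses $\Gamma \rleq \Gamma,0$ and $\Gamma \rleq \Gamma, e$ require $\sum \Gamma \leq \sum\Gamma + 0$ and $\sum\Gamma \leq \sum\Gamma + e$, both instances of $(+\text{-intro})$. For $\Gamma, e_i \rleq \Gamma, e_0 + e_1$ we have $e_i \leq e_0 + e_1$ by $(+\text{-intro})$, whence $\sum\Gamma + e_i \leq \sum\Gamma + (e_0+e_1)$ by monotonicity of $+$ (which follows from $(+\text{-intro})$ and $(+\text{-elim})$). Finally, $\Gamma, e(\mu X e(X)) \rleq \Gamma, \mu X e(X)$ is handled by the Prefix axiom \ref{item:prefix-mu}, namely $e(\mu X e(X)) \leq \mu X e(X)$, again composed with the identity on $\sum\Gamma$ under monotonicity of $+$. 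There is no genuine obstacle here; the only point needing any care is the well-definedness of $\sum$ on sets, which rests exactly on idempotency of $+$ and makes the set-theoretic treatment of the auxiliary and principal formulas in the clauses unproblematic.
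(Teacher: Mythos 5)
Your proof is correct: the paper states this as an unproved observation (its soundness is treated as immediate), and your argument — showing that the relation $\Gamma \mathrel{R} \Gamma'$ iff $\RLA \proves \sum\Gamma \leq \sum\Gamma'$ is a preorder containing the four generating clauses, then invoking minimality of $\rleq$ — is precisely the routine verification the paper implicitly relies on. Your attention to the well-definedness of $\sum$ on sets via idempotency, covering the case where an auxiliary formula already occurs in $\Gamma$, is the only non-trivial point and you handle it correctly.
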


Now, by the recursive properties of big meets, \cref{lem:rec-props-of-meets}, and induction on the definition of $\rleq$ we duly have:
\begin{proposition}
    [Monotonicity]
    \label{prop:lla-mon-in-rleq}
     If $\Gamma \rleq \Gamma'$ then $\RLA \proves \bigcap \vec e (\sum\Gamma) \vec g \leq \bigcap \vec e (\sum \Gamma')\vec g$.
\end{proposition}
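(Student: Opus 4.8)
The plan is to proceed by induction on the derivation witnessing $\Gamma \rleq \Gamma'$. Since $\rleq$ is the \emph{smallest} preorder closed under the four generating clauses, such a witness is a finite chain $\Gamma = \Gamma_0, \Gamma_1, \dots, \Gamma_n = \Gamma'$ in which each step $\Gamma_i \rleq \Gamma_{i+1}$ is one of the four clauses; so the induction is on $n$, with reflexivity ($n=0$) as the base case and a single clause application (composed with transitivity of $\leq$ in $\RLA$) as the inductive step. The one preliminary I would extract first is a normal-form identity: for any cedent $\Delta$, repeatedly applying the $0$- and $+$-clauses of \cref{lem:rec-props-of-meets} gives
\[
\RLA \proves \bigcap \vec e \left(\textstyle\sum \Delta\right)\vec g \ = \ \sum_{h\in\Delta} \bigcap \vec e\, h\, \vec g ,
\]
under the convention that the empty sum denotes $0$ (consistent with $\bigcap\vec e\,0\,\vec g = 0$). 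This reduces the proposition to showing $\RLA \proves \sum_{h\in\Gamma}\bigcap \vec e\, h\, \vec g \leq \sum_{h'\in\Gamma'}\bigcap \vec e\, h'\, \vec g$, i.e.\ a comparison of sums of big-meet terms, which isolates all the genuine work into the semilattice axioms of $(0,+)$.

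With this identity in hand each generating clause of $\rleq$ becomes a routine semilattice fact, mirroring the right rules of $\LRLAhat$. For the weakening clause $\Gamma \rleq \Gamma, h$, the right-hand sum carries the extra summand $\bigcap\vec e\, h\, \vec g$, so the inequality is an instance of $+$-intro. For $\Gamma \rleq \Gamma, 0$ the added summand is $\bigcap\vec e\, 0\, \vec g = 0$, which is absorbed, giving equality. For the $+$-clause $\Gamma, h_i \rleq \Gamma, h_0 + h_1$, the recursive property gives $\bigcap\vec e\,(h_0+h_1)\,\vec g = \bigcap\vec e\, h_0\,\vec g + \bigcap\vec e\, h_1\,\vec g$, which dominates $\bigcap\vec e\, h_i\,\vec g$ for $i\in\{0,1\}$ by $+$-intro, so the left sum is a sub-sum of the right one and we conclude by $+$-elim. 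For the $\mu$-clause $\Gamma, h(\mu X h(X)) \rleq \Gamma, \mu X h(X)$, the $\mu$-clause of \cref{lem:rec-props-of-meets} gives $\bigcap\vec e\,(\mu X h(X))\,\vec g = \bigcap\vec e\,(h(\mu X h(X)))\,\vec g$, so the two distinguished summands are provably equal and the comparison is an equality.

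The base and closure steps of the preorder are immediate: reflexivity gives identical terms, handled by reflexivity of $\leq$; and transitivity is handled by applying the inductive hypothesis to the two halves of the chain and composing via transitivity of $\leq$ in $\RLA$. I do not expect a serious obstacle here — the substantive content was already discharged in establishing \cref{lem:rec-props-of-meets}, and the present argument is a clean structural induction. The only points requiring mild care are the bookkeeping around $\sum\Gamma$ (associativity, commutativity, and the empty-cedent convention), all of which are subsumed by the normal-form identity above, and making sure the recursive properties are applied with the arbitrary prefix $\vec e$ and suffix $\vec g$, which is exactly the generality noted immediately after \cref{lem:rec-props-of-meets}.
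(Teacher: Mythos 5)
Your proposal is correct and follows essentially the same route as the paper: reflexivity and transitivity are discharged by the corresponding properties of $\leq$ in $\RLA$, and each generating clause of $\rleq$ is checked by distributing the big meet over the sum via \cref{lem:rec-props-of-meets} and then invoking the semilattice axioms ($+$-intro/$+$-elim). Your explicit normal-form identity $\bigcap \vec e (\sum\Delta)\vec g = \sum_{h\in\Delta}\bigcap\vec e\, h\,\vec g$ is just a spelled-out version of the paper's reduction to showing $\bigcap \vec e f \vec g \leq \sum_{f'\in\Gamma'}\bigcap\vec e f'\vec g$ for each $f\in\Gamma$, so no substantive difference.
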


\anupam{i commented remark about ordering sums above, i think the proof makes it clear that the ordering does not matter.}

\begin{proof}
    Reflexivity and transitivity of $\rleq$ reduce to the same properties of entailment in $\RLA$, so we check each initial clause.
    In all cases we note that, by the recursive properties of big meets from \cref{lem:rec-props-of-meets} and properties of $+$ in $\RLA$, it suffices to show  $\RLA \proves \bigcap \vec e f \vec g \leq \sum\limits_{f'\in \Gamma'} \bigcap \vec e f'\vec g$ for each $f\in \Gamma$.
    Each of these reduce to basic properties of $\RLA$ and again the recursive properties from \cref{lem:rec-props-of-meets}.
\end{proof}

The recursive properties also allow us to recover certain structural properties of meets, analogous to some of $+$ in $\RLA$:\anupam{consider placing this proposition earlier and treat $\vec e$ as sets already for monotonicity above.}
\begin{proposition}
    [Structural properties]
    \label{prop:meet-struct-props}
    $\RLA$ proves:
    \begin{enumerate}
    \item (Commutativity) $\bigcap \vec e f_0 f_1 \vec g \leq \bigcap \vec e f_1 f_0 \vec g$.
    \item (Elimination) $\bigcap \vec e f_0 f_1 \vec g \leq \bigcap \vec e f_i \vec g$.
        \item (Idempotency) $\bigcap \vec e f \vec g \leq \bigcap \vec e f f \vec g$
    \end{enumerate}
\end{proposition}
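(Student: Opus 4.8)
The plan is to prove each of the three structural properties by leveraging the recursive properties of big meets established in \cref{lem:rec-props-of-meets}, together with the basic semilattice reasoning available in $\RLA$. The key technical device throughout is that big meets are defined as least solutions of the (infinite) system $\meetes E$, so to prove an inequality $\bigcap \vec e \leq t$ it suffices to exhibit $t$ as a \emph{solution} (i.e.\ to show the $\geq$ variant of the relevant clauses holds of $t$), or alternatively to argue directly via the recursive equations from \cref{lem:rec-props-of-meets}. Each of the three properties will be reduced to such considerations.

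For Commutativity, the cleanest route is induction on the combined $\FL$-structure of $f_0$ and $f_1$, peeling off the leading connective of $f_0$ (and symmetrically $f_1$) using the recursive equations. The base cases are when $f_0$ and $f_1$ are both of the form $1$ or $a g$: here I would use the $\bigcap \vec 1 = 1$ and $\bigcap a\vec e = a\bigcap \vec e$ clauses, noting that a leading letter can be pulled out only when \emph{both} adjacent formulas share that letter, so the genuine work is in the $+$ and $\mu$ reductions, where $\bigcap \vec e (f_0 + f_1') f_1 \vec g = \bigcap \vec e f_0 f_1 \vec g + \bigcap \vec e f_1' f_1 \vec g$ lets me commute position-by-position and reassemble using idempotency/commutativity of $+$ in $\RLA$. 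In fact Commutativity should be provable as an equality (both $\leq$ directions are symmetric instances), so I would prove $\bigcap \vec e f_0 f_1 \vec g = \bigcap \vec e f_1 f_0 \vec g$ outright and specialize.

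For Elimination, $\bigcap \vec e f_0 f_1 \vec g \leq \bigcap \vec e f_i \vec g$, I would exhibit the family $\{\bigcap \vec e f_0 f_1 \vec g\}$ (suitably reindexed) as a solution to the subsystem $\meetes E^\geq$ defining $\bigcap \vec e f_i \vec g$, exactly in the style of the $\geq$ arguments in the proof of \cref{lem:rec-props-of-meets}. Concretely, by Commutativity it suffices to treat $i=0$, and then one checks that each recursive clause for $\bigcap \vec e f_0 \vec g$ is \emph{simulated} by the corresponding clause for $\bigcap \vec e f_0 f_1 \vec g$ after discarding the $f_1$ coordinate; the only slightly delicate point is matching the leading-letter clause, which again requires all relevant heads to agree, handled by first reducing $f_0$ to a head-normal form via the $+$ and $\mu$ equations. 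Idempotency, $\bigcap \vec e f \vec g \leq \bigcap \vec e f f \vec g$, follows by a similar solution-exhibiting argument: one shows $\bigcap \vec e f \vec g$ solves the system defining $\bigcap \vec e f f \vec g$, using that the two adjacent copies of $f$ reduce in lockstep under the recursive equations of \cref{lem:rec-props-of-meets}.

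The main obstacle I anticipate is the same one that motivated working with big meets in the first place (as flagged in the intermezzo): the meet-introduction principle \ref{item:binary-meet-introduction-rule} is \emph{not} available, so I cannot prove these inequalities by the naive route of introducing $\bigcap \vec e f_i \vec g$ from its components. I must instead argue at the level of least-solution (co)induction, i.e.\ always proving $\leq$ by exhibiting the larger-indexed meet as a solution of the system defining the smaller-indexed one, and proving $\geq$ (where needed, as in Commutativity) symmetrically. Keeping the bookkeeping of the arbitrary prefixes $\vec e$ and suffixes $\vec g$ straight while peeling only the distinguished coordinates—and ensuring the leading-letter clause only fires when heads genuinely agree—is where the care lies, though each individual step reduces to \cref{lem:rec-props-of-meets} plus semilattice laws of $\RLA$.
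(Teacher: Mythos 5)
Your treatment of Elimination and Idempotency matches the paper's proof exactly: for each item the paper exhibits the target meet (with the coordinate dropped, resp.\ duplicated, uniformly across all variables) as a solution of the system defining the source meet, verifying each clause via \cref{lem:rec-props-of-meets}, with idempotency of $+$ required precisely when the discarded/duplicated formula is a sum.

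However, your route for Commutativity has a genuine gap: the proposed ``induction on the combined $\FL$-structure of $f_0$ and $f_1$'' is not well-founded. The $\mu$-reduction replaces $\mu X f(X)$ by its unfolding $f(\mu X f(X))$, which is $\eqfl$-equivalent to (not $\lefl$-strictly below) $\mu X f(X)$ and is syntactically larger, so the peeling never reaches your base case on, e.g., $f_0 = \mu X (X+1)$; no measure decreases. This circularity is exactly why the big meets are defined as least solutions of equational systems rather than by structural recursion, and why the paper proves Commutativity by the same fixed-point device as the other two items: assign to every variable $X_{\vec h f f' \vec k}$ the swapped meet $\bigcap \vec h f' f \vec k$, check each clause of $\meetes{E}^\geq$ using \cref{lem:rec-props-of-meets}, and conclude by leastness. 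The same defect infects the ``first reduce $f_0$ to a head-normal form'' step in your Elimination argument (unguarded expressions need not have head-normal forms), but there it is harmless because it is unnecessary: in the clause-by-clause solution check, the letter clause only fires when \emph{every} formula in the list (not just the two adjacent ones) has head $a$, and then the dropped or swapped assignment satisfies it outright via $\bigcap a\vec e = a \bigcap \vec e$. So your stated plan needs repair, but the repair is simply to apply uniformly the solution-exhibition device you already state up front and use correctly for items 2 and 3.
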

\begin{proof}
We prove each item separately:    
\begin{enumerate}       
\item Each $\bigcap \vec e f_1 f_0 \vec g$ is a solution of $X_{\vec e f_0f_1\vec g}$ by using the recursive properties from \cref{lem:rec-props-of-meets} for each clause.
        \item Each $\bigcap \vec e f_i \vec g$ is a solution of $X_{ \vec e f_0f_1 \vec g}$ by using the recursive properties from \cref{lem:rec-props-of-meets} for each clause, requiring idempotency of $+$ when $f_{1-i}$ is a sum.
        \item Each $\bigcap \vec e f f \vec g$ is a solution of $\bigcap \vec e f \vec g$ by using the recursive properties from \cref{lem:rec-props-of-meets} twice for each clause, requiring some algebraic properties of $+$ when $f$ is a sum. \qedhere
    \end{enumerate}
\end{proof}

Thus we may henceforth safely treat $\vec e$ as sets when considering big meets $\bigcap \vec e$.
In particular, given $E_0$, we shall construe the variables of $\meetes{E_0}$ as being indexed by (now one of finitely many) subsets of $E_0$, e.g.\ writing $\meetes{E_0}(X_E)_{E\subseteq E_0}$.

\subsection{Computing an invariant from \texorpdfstring{$\CRLA$}{} proofs}
\label{sec:inv-of-crla-prfs}

Let us fix a $\CRLA$ proof $P$ of $e_0 \seqar \Gamma_0$.
We assume WLoG that $P$ applies left logical rules maximally before applying other rules and that all identities are restricted to $1\seqar 1$, cf.~the proof search strategy in the proof of $\CRLA$ completeness, \cref{thm:crla-completeness} (see also \cref{cor:eta-expansion-identities}).

We set $E_0 = \fl{\{\sum \Gamma: \Gamma \text{ an RHS in $P$}\}}$ and construe all big meets $\bigcap F$, for $F\subseteq E_0$, as solutions of $X_{F}$ in $\meetes {E_0} {(X_{E})_{E\subseteq E_0}}$.
Note, again, that the canonical system $\esEX E {e_0} {\vec X_{e_0}}$, with $\vec X_{e_0} = (X_e)_{e\in \fl {e_0}}$, is just the restriction of 
$\meetes {E_0} {(X_{E})_{E\subseteq E_0}}$ to variables $X_e$ with $e \in \fl {e_0}$.

Write $G_e$ for the set of $\sum\Gamma $ such that $e\seqar \Gamma$ is a sequent in $P$. 
We will show:

\begin{lemma}
[Invariant]
\label{lem:invariant}
    $\bigcap G_e$, for $e \in \fl {e_0}$, are $\RLA$-solutions for $X_e$ in $\inesEX{E} {e_0} {\vec X_{e_0}}$.
\end{lemma}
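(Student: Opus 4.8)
The plan is to verify, case by case, that the substitution $X_e \mapsto \bigcap G_e$ satisfies each inequational clause of the canonical system $\inesEX E {e_0}{\vec X_{e_0}}$; that is, for every $e \in \fl{e_0}$ I must establish in $\RLA$ the instance of ``$X_e \geq (\text{clause for } X_e)$'' obtained by this substitution. Since $\inesEX E {e_0}{\vec X_{e_0}}$ is $\FL$-closed, every variable occurring in the clause for $e$ is itself indexed by a formula of $\fl{e_0}$, so the statement is self-contained. I would first record the easy fact that, because $P$ applies left logical rules maximally and closes every branch, each $e \in \fl{e_0}$ actually occurs as a left-hand side somewhere in $P$ (left rules and the forced $\kk a$ steps realise exactly the $\FL$-closure operations), so that each $G_e$ is nonempty and the big meets $\bigcap G_e$ are meaningful.

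The cases split according to whether the principal connective of $e$ forces a left rule. For the decomposable left-hand sides I exploit that maximal left-rule application pins down the rule applied: wherever $f+g \seqar \Gamma$ occurs, $\lr{+}$ is applied, so both $f \seqar \Gamma$ and $g \seqar \Gamma$ occur above it, giving $G_{f+g} \subseteq G_f$ and $G_{f+g} \subseteq G_g$; the Elimination property of \cref{prop:meet-struct-props} (a larger meet-set yields a smaller meet) then gives $\bigcap G_f, \bigcap G_g \leq \bigcap G_{f+g}$, whence $\bigcap G_f + \bigcap G_g \leq \bigcap G_{f+g}$ by $+$-elimination. Identically, $\lr{\mu}$ forces $G_{\mu X f(X)} \subseteq G_{f(\mu X f(X))}$, yielding $\bigcap G_{f(\mu X f(X))} \leq \bigcap G_{\mu X f(X)}$, while the clause for $0$ is immediate from boundedness.

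The atomic left-hand sides $1$ and $af$ are the interesting cases, since no left rule applies and I must instead trace the proof upward through right rules (and, for $af$, a modal $\kk a$ step), appealing to monotonicity of big meets along $\rleq$ (\cref{prop:lla-mon-in-rleq}) together with the recursive identities of \cref{lem:rec-props-of-meets}. For $e=1$: every branch out of a sequent $1 \seqar \Gamma$ keeps the atomic left-hand side $1$, hence by progress cannot be infinite and must terminate at the only available initial sequent $1 \seqar 1$; by \cref{obs:rleq-above-below} this gives $\{1\} \rleq \Gamma$, so replacing each slot $\sum\Gamma$ of $\bigcap G_1$ by the $\rleq$-smaller $1$ only decreases the meet, yielding $1 = \bigcap \vec 1 \leq \bigcap G_1$. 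For $e=af$: each $af \seqar \Gamma$ leads upward, through right rules, to a $\kk a$ step applied to some $af \seqar a\Gamma'$ with premiss $f \seqar \Gamma'$, so $\sum\Gamma' \in G_f$ and, by \cref{obs:rleq-above-below}, $a\Gamma' \rleq \Gamma$. Writing $H \df aG_f$ and using the recursive identity $\bigcap(aG_f) = a\bigcap G_f$, I would first restrict $H$ to the witnesses $a\sum\Gamma'$ actually arising this way (a subset, so its meet is no smaller) and then raise each such slot to the $\rleq$-larger $\sum\Gamma$ via \cref{prop:lla-mon-in-rleq}; this shows $a\bigcap G_f = \bigcap H \leq \bigcap G_{af}$, as required.

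I expect the $af$ case to be the main obstacle: it is the only place where the modal step breaks the simple subset bookkeeping available for the other connectives, forcing a two-stage argument (discard unused witnesses, then raise each slot along $\rleq$) and care in treating meets as sets, justified by \cref{prop:meet-struct-props}. A secondary point worth flagging is that we prove only the one-sided system $\inesEX E {e_0}{\vec X_{e_0}}$ (with $\geq$), never the reverse inclusions: this is unavoidable, since by the Intermezzo the meet-introduction rule fails in general $\RLA$ models, so $\bigcap G_e$ cannot be shown to be a genuine greatest lower bound. Finally, I would be careful to invoke \cref{obs:rleq-above-below} in the orientation consistent with soundness of $\rleq$, namely with the higher sequent's right-hand side being $\rleq$ the lower one's.
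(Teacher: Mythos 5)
Your proof is correct and takes essentially the same route as the paper's: the same clause-by-clause case analysis, with Boundedness for $0$, maximal left-rule application (your subset inclusions plus Elimination) for the $+$ and $\mu$ clauses, progress forcing termination at $1\seqar 1$ for the $1$ clause, and for $af$ the first-$\kk a$-step argument combining \cref{obs:rleq-above-below}, \cref{prop:lla-mon-in-rleq}, \cref{prop:meet-struct-props} and the recursive identity $\bigcap a\vec e = a\bigcap \vec e$, exactly as in the paper's chain $\bigcap G_{ae} \geq \bigcap_{\Gamma'\in G_e} a\Gamma' \geq a\bigcap G_e$. Your closing caveat about orientation is also apt: the direction you use (the upper sequent's RHS being $\rleq$ the lower one's) is the one the paper's own proof actually relies on, despite the literal wording of \cref{obs:rleq-above-below}.
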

\begin{proof}
    We proceed by case analysis on the structure of $e \in \fl {e_0}$.
    \begin{itemize}
        \item $\bigcap G_0 \geq 0$ by Boundedness.
        \item $\bigcap G_1 \geq 1$.
        First note that, for any sequent $1\seqar \Gamma$ in $P$, there must be an initial sequent $1\seqar 1$ above it with no $\K$-steps in between, whence we conclude by~\cref{obs:rleq-above-below} and \cref{prop:lla-mon-in-rleq}.
        \item $\bigcap G_{\mu X e(X))} \geq \bigcap G_{e(\mu X e(X))}$ follows from Observation~\ref{obs:rleq-above-below} and \cref{prop:lla-mon-in-rleq} as $P$ must apply left rules maximally before other rules.
        \item $\bigcap G_{e_0 + e_1} \geq \bigcap G_{e_0} + \bigcap G_{e_1}$ also follows from Observation~\ref{obs:rleq-above-below} and \cref{prop:lla-mon-in-rleq} as $P$ must apply 
        left rules maximally before other rules.
        \item $\bigcap G_{ae} \geq a\bigcap G_e$. 
        Since $P$ only has identities on $1$ and is progressing, so left-principal infinitely often, for every sequent $ae \seqar \Gamma$ there is a (first) $\K$-step above $\vlinf{\kk a}{}{ae \seqar a\Gamma'}{e\seqar \Gamma'}$.
        So we have $\Gamma' \in G_e$ and
        by Observation~\ref{obs:rleq-above-below} we have $\Gamma \rgeq a\Gamma'$. 
        Since the choice of $\Gamma \in G_{ae}$ was arbitrary, we have by \cref{prop:lla-mon-in-rleq} and \cref{prop:meet-struct-props} that $ \bigcap G_{ae} \geq \bigcap\limits_{\Gamma' \in G_e} a\Gamma' \geq  a \bigcap G_e$, by the $\meetes{E_0}^\geq$ equations, as required. \qedhere
    \end{itemize}
\end{proof}

\begin{lemma}
\label{lemma:clla-to-rla}
$\CRLA\proves e\seqar f$ $\implies$ $\RLA\proves e\leq f$
\end{lemma}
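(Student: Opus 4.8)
The plan is to read this lemma as the final assembly step of the invariant-extraction already carried out in \cref{sec:inv-of-crla-prfs}: combine \cref{lem:invariant} with the least-solution characterisation of canonical systems (\cref{thm:solns-can-sys-characterisation}) and meet-elimination (\cref{prop:meet-struct-props}). First I would fix a $\CRLA$-proof $P$ of $e \seqar f$ in the normal form assumed throughout \cref{sec:inv-of-crla-prfs}: left logical rules applied maximally, bottom-up, and all identities restricted to $1\seqar 1$ (cf.\ \cref{cor:eta-expansion-identities}). Setting $E_0 \df \fl{\{\sum\Gamma : \Gamma \text{ an RHS in } P\}}$, I read each big meet $\bigcap F$ ($F\subseteq E_0$) as the least $\RLA$-solution of $X_F$ in $\meetes{E_0}(X_E)_{E\subseteq E_0}$, exactly as before \cref{lem:invariant}. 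Since the single RHS formula $f$ satisfies $f\in\fl e\subseteq E_0$, all the objects below are available.

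The conceptual work is already done: by \cref{lem:invariant}, the family $(\bigcap G_g)_{g\in\fl e}$ is an $\RLA$-solution of the relaxed canonical system $\inesEX E e {\vec X_e}$ for $\fl e$. I would then invoke part \ref{item:E-is-lpfp-of-its-can-sys} of \cref{thm:solns-can-sys-characterisation}, which states $\RLA + \inesEX E e {\vec X_e} \proves e \leq X_e$, i.e.\ that $e$ is provably the least prefixed point. As $\RLA$-derivability is closed under substitution of expressions for the (free) variables $\vec X_e$, I substitute $X_g \df \bigcap G_g$ throughout; the hypotheses $\inesEX E e {\vec X_e}$ then become precisely the inequalities furnished by \cref{lem:invariant}, and discharging them yields $\RLA \proves e \leq \bigcap G_e$.

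It remains to bound $\bigcap G_e$ from above by $f$. Because $e\seqar f$ is the root of $P$, we have $f = \sum\{f\} \in G_e$, so by Commutativity and Elimination (\cref{prop:meet-struct-props}) I strip away all other components to get $\RLA \proves \bigcap G_e \leq \bigcap\{f\}$. For the singleton base case $\RLA \proves \bigcap\{f\} \leq f$ I use that the identity assignment $X_{\{h\}}\mapsto h$ solves the singleton restriction of $\meetes{E_0}$ — this is exactly part \ref{item:E-solves-its-can-sys} of \cref{thm:solns-can-sys-characterisation} applied to the canonical system — so $\bigcap\{f\}$, being the \emph{least} such solution, lies below $f$. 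Chaining the two bounds by transitivity gives $\RLA \proves e \leq f$, as required.

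The main obstacle is not the plumbing above but the asymmetry highlighted in the Intermezzo: meet-\emph{introduction} is unavailable (indeed false in some $\RLA$-models), so every inequality must be routed through meet-\emph{elimination} and the least-prefixed-point principle alone, never introducing a meet. Two points therefore demand care. First, the instantiation into part \ref{item:E-is-lpfp-of-its-can-sys} of \cref{thm:solns-can-sys-characterisation} is only legitimate because that statement is uniform in the chosen solution and \cref{lem:invariant} certifies that $(\bigcap G_g)_g$ genuinely is one; this is where all the structural work on big meets (\cref{lem:rec-props-of-meets,prop:lla-mon-in-rleq,prop:meet-struct-props}) is tacitly consumed. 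Second, the singleton reduction $\bigcap\{f\}\leq f$ rests on \emph{leastness} of the big meet rather than on any introduction rule, which is exactly the ingredient the general theory was built to supply.
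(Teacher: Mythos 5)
Your proposal is correct and follows essentially the paper's own route: \cref{lem:invariant} together with \cref{thm:solns-can-sys-characterisation} yield $\RLA \proves e \leq \bigcap G_e$ (via substituting the invariant solutions into the relaxed canonical system), and then Elimination from \cref{prop:meet-struct-props}, with the singleton case discharged by leastness, gives $e\leq f$; you merely make explicit the details the paper's two-line proof leaves implicit. One small slip: $f$ need not lie in $\fl e$ as you claim --- what you actually need (and correctly use later) is that $f \in E_0$ and $f \in G_e$, both of which hold simply because $e \seqar f$ is the root sequent of $P$.
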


\begin{proof}
By \cref{lem:invariant} and \cref{thm:solns-can-sys-characterisation} we have $\RLA \proves e\leq \bigcap G$ for some $G\ni f$, for meets over an appropriate set. \abhishek{typo? Should it be $G_e$ instead of $G$?}
\anupam{no, it says `some G', which is in particular $G_e$ earlier, but we do not need to name it. only the properties stated are used.}
Thus, by Elimination from \cref{prop:meet-struct-props}, we in particular have $e\leq f$, as required.
\end{proof}

\todo{Observe that using this and RLA soundness, we have an alternate proof of CRLA soundness}

From here we are finally able to immediately deduce the main result of this section, the completeness of $\RLA$ for $\lang\cdot$:

\begin{proof}
[Proof of \cref{thm:rla-completeness}]
If $\lang e \subseteq \lang f$
    we have by completeness of $\CRLA$, \cref{thm:crla-completeness}, that $\CRLA \proves e\seqar f$, and thus by~\cref{lemma:clla-to-rla} we have $\RLA\proves e\leq f$.
\end{proof}

Notice that our soundness result for $\CRLA$, \cref{thm:crla-soundness}, is now actually subsumed by \cref{lemma:clla-to-rla,thm:rla-soundness}, cf.~\cref{fig:summary}.
\section{Greatest fixed points and \texorpdfstring{$\omega$}{}-languages}
\label{sec:gfp}

\todo[inline]{Consider reworking this section (and next) simply without 1, so that we have only $\omega$-length words.}

In this section we extend the cyclic framework we developed in \cref{sec:crla} to a syntax with \emph{greatest} fixed points.
These naturally compute $\leq \omega$-length words; in particular, the right-linear syntax means that $\omega$-words are closed under all operations of the syntax, in particular left-concatenation with letters.

The metatheory of the resulting system, with both least and greatest fixed points, is much more complex than for $\CRLA$, so we split it into two sections.
The main result of this section is the soundness of our system.

\subsection{Extension by greatest fixed points}
We extend the grammar of expressions by:
\[
e,f,\dots \quad \bnf \quad \dots \quad \mid \quad \nu X e
\]
We call such expressions \defname{$\mu\nu$-expressions}, if we need to distinguish them from expressions without $\nu$.
The intention is that $\nu$ behaves like a \emph{greatest} fixed point, governed by rules dual to the $\mu$ stating that it is a greatest post-fixed point:
\begin{enumerate}
    \item (Postfix) $\nu X e(X) \leq e(\nu X e(X))$
    \item (Coinduction) $f \leq e(f) \implies f\leq \nu Xe(X)$
\end{enumerate}
While this work will not focus on axiomatisations of $\mu\nu$-expressions, let us call the extension of $\RLA$ by the axioms above $\nu\RLA$.



    


\begin{remark}
[Finite words]
Note that languages of finite words already comprises a complete lattice and so we may readily interpret $\mu\nu$-expressions within them in the expected way.
%
    However note that $\nu\RLA$ is \emph{not} complete for this model. In particular the inclusion $\nu X (1 + aX)\leq \mu X (1 + a X) $ cannot have any proof, as it has a countermodel by way of the $\omega$-word semantics we are about to give.
\end{remark}

\abhishek{good place to mention that up until now, all this development could be done with `left-linear' expressions. In the following *right*-linearity is crucial.}
\anupam{I think this is not worth saying: it is an artefact of the fact we are thinking of $\omega$-words as being written left-right rather than right-left (which would equally be a fine convention).}

\begin{definition}
    [Intended semantics of $\mu\nu$-expressions]
    We consider languages of finite and $\omega $-words, i.e.\ subsets of $\Alphabet^{\leq \omega}$. 
    We extend the definition of $\lang \cdot $ to $\mu\nu$-expressions by setting:
    \begin{itemize}
        \item $\wlang {\nu X e(X)} \df \bigcup \{ A \subseteq \wlang {e(A)} \}$
    \end{itemize}
\end{definition}
It is worth pointing out at this juncture that $\pow{\Alphabet^{\leq \omega}}$ indeed forms a complete lattice under $\subseteq$, and that the right-linear syntax means that it is closed under concatenation with letters on the left.
Since all the operations are monotone, $\wlang{\nu X e(X)}$ is indeed the greatest fixed point of the operation $A\mapsto \lang{e(A)}$, by the Knaster-Tarski theorem.

\begin{remark}
    [Finite words as $\omega$-words of finite support]
    \label{rem:fin-words-as-omega-words}
    Usually $\omega$-languages are construed as containing \emph{only} $\omega$-words, and not any finite words. 
    Indeed we can accommodate this interpretation by setting $\wlang 1 \df \Box^\omega$ (for $\Box \in \Alphabet$ arbitrary), so that finite words are construed as `finitely supported' $\omega$-words, cf.~\cref{ex:omega-langs-are-a-rla}.
    We do not take this route so as to remain consistent with our previous exposition, formally extending the definition of the model $\lang \cdot$. 
    However let us point out that, in the presence of greatest fixed points, we could also simply omit $1$ from our syntax altogether: it is not difficult to see that, for any $1$-free expression $e$, we have $\lang e \subseteq \Alphabet^\omega$.
\end{remark}

\begin{remark}
    [$\top$, a subtlety]
    \label{rem:top-nu-morphisms}
    In contrast with the $\mu$-only setting, cf.~\cref{ex:non-standard-interps}, we can express a bona fide top element by $\top \df \nu X X$. 
    Clearly $\wlang \top = \Alphabet^{\leq \omega}$ contains every other language, this is indeed provable in $\nRLA$:
    \begin{itemize}
        \item $e \leq e$ by Reflexivity.
        \item $e \leq \nu X X = \top$ by Coinduction.
    \end{itemize}
    However notice that we do \emph{not} require letters $a\in \Alphabet$ to be interpreted as homomorphisms that preserve the top element.
    Indeed, in $\wlang \cdot$, this does not hold, as $a\Alphabet^{\leq \omega} \neq \Alphabet^{\leq \omega}$. 
    When it comes to our proof system $\nCRLA$ later, this possibility is once again accommodated by the asymmetry of sequents: while the RHS can be empty, the LHS must be a single expression.
\end{remark}

A language is \defname{$\omega$-regular} just if it is computed by an expression of the form $\sum\limits_{i<n} e_i f_i^\omega$, where $e_i,f_i$ are regular expressions with $\epsilon \notin \lang {f_i}$ for each $i<n$.
It is not hard to see that all $\omega$-regular languages are computed by $\mu\nu$-expressions:

\begin{proposition}
    [Adequacy]
    \label{prop:omega-reg-have-munu-exps}
    $\mu\nu$-expressions exhaust all $\omega$-regular languages.
\end{proposition}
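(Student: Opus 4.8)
The plan is to reuse the regular-expression-to-right-linear translation from the proof of \cref{prop:rl-langs-are-just-reg-langs}, extending it to absorb the $\omega$-power. Recall the product operation $e\bullet g$ from \eqref{eq:prod-regexp-rlexp-to-rlexp}: its defining clauses proceed by induction on the \emph{regular} expression $e$ and never inspect the structure of the right argument $g$, so the very same definition applies verbatim when $g$ is an arbitrary right-linear $\mu\nu$-expression (renaming the bound variables introduced in the $e^*\bullet g$ clause to avoid capture). First I would verify, by a routine structural induction on $e$ mirroring the cases of \eqref{eq:prod-regexp-rlexp-to-rlexp}, that $\wlang{e\bullet g} = \lang e \cdot \wlang g$, where $\cdot$ denotes concatenation prepending the finite words of $\lang e \subseteq \Alphabet^*$ to the (possibly infinite) words of $\wlang g$. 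The only nontrivial clause is $e^*\bullet g = \mu X(g + e\bullet X)$, where the \emph{least} fixed point correctly computes $\bigcup_k \lang e ^k \cdot \wlang g = \lang{e^*}\cdot \wlang g$.

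Next, given a regular expression $f$ with $\epsilon\notin\lang f$, I would set $f^\omega \df \nu X(f\bullet X)$ and show that $\wlang{f^\omega}$ equals the usual $\omega$-iteration $\lang f ^\omega = \{w_0 w_1 \cdots : w_i \in \lang f\}$. By the previous step, $\wlang{f^\omega}$ is the greatest fixed point of the monotone operation $A\mapsto \wlang{f\bullet A} = \lang f \cdot A$ on $\pow{\Alphabet^{\leq\omega}}$, which exists by Knaster--Tarski. It then suffices to show this gfp coincides with $\lang f ^\omega$: on the one hand $\lang f ^\omega$ is a fixed point, since every $\omega$-concatenation of $f$-words factors as one $f$-word followed by another such concatenation; on the other hand, every fixed point $A$ satisfies $A \subseteq \lang f^\omega$, obtained by repeatedly peeling an $f$-block off any $w\in A$ via $A = \lang f \cdot A$.

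The main obstacle — and the one place where the hypothesis $\epsilon\notin\lang f$ is used — is precisely this peeling argument. Since each extracted block lies in $\lang f$ and is therefore \emph{nonempty}, the successive residues are strictly shorter suffixes of $w$; this simultaneously rules out finite words from any fixed point (the descent could never bottom out at $\epsilon$, which is excluded because $\epsilon\in\lang f\cdot A$ would force $\epsilon\in\lang f$) and, for infinite $w$, exhibits a genuine decomposition $w = w_0 w_1\cdots$ with $w_i\in\lang f$, witnessing $w\in\lang f^\omega$. Finally, combining the two lemmas, an arbitrary $\omega$-regular language $\bigcup_{i<n}\lang{e_i}\cdot\lang{f_i}^\omega$ is computed by the $\mu\nu$-expression $\sum_{i<n} e_i \bullet \bigl(\nu X(f_i\bullet X)\bigr)$, using $+$ for the finite union and $\bullet$ for the leading regular prefixes $e_i$, which establishes the claim.
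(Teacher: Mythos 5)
Your proposal is correct and takes essentially the same route as the paper's proof: extend the $\bullet$ translation of \eqref{eq:prod-regexp-rlexp-to-rlexp} verbatim to second arguments containing $\nu$, compute $f^\omega$ as $\nu X(f\bullet X)$, handle the prefixes $e_i$ via $\bullet$, and close under sums; you merely fill in the semantic verifications that the paper leaves implicit. One small point of divergence: you use the hypothesis $\epsilon\notin\lang f$ in the peeling argument identifying the greatest fixed point with $\lang f^\omega$, whereas the paper's footnote claims this hypothesis is not needed -- since the paper's definition of $\omega$-regular language supplies that hypothesis anyway, your (arguably more careful) use of it is harmless.
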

\begin{proof}
    Recall the definition of the concatenation $eg$ of regular expressions $e$ and right-linear expressions $g$ from \eqref{eq:prod-regexp-rlexp-to-rlexp}.
    Note that the same construction applies when $g$ may contain $\nu$s.
    Now, for $f$ a regular expression, we can compute $f^\omega$ by $\nu X (f\bullet X)$, and then $ef^\omega$ similarly.\footnote{Note here that, in our semantics of both finite and infinite words, we are not using the fact that $\epsilon \notin \lang f$.}
    Since $\mu\nu$-expressions are closed under sums, they may express all $\omega$-regular languages wrt.~$\lang\cdot$.
\end{proof}

To prove the converse, we will need to first introduce a mechanism for deciding word membership.

\subsection{Evaluation puzzle}
As an engine for our main metalogical results about $\nu\CRLA$ we will first establish a characterisation of membership checking via games, or in this case puzzles.

The \defname{evaluation puzzle} is a puzzle (i.e.\ one-player game) whose states are pairs $(w,e)$ where $w\in \Alphabet^{\leq \omega}$ and $e$ is a $\mu\nu$-expression.
A \defname{play} of the puzzle runs according to the rules in \cref{fig:eval-rules}: puzzle-play is deterministic at each state except when the expression is a sum, in which case a choice must be made.
A play of the evaluation puzzle is \defname{winning} if:
\begin{itemize}
    \item it terminates at the winning state $(\epsilon,1)$; or,
    \item it is infinite and the smallest expression occurring infinitely often is a $\nu$ formula.
\end{itemize}

\begin{figure}
    \centering
    \begin{tabular}{|c|c|}
        \hline
        State  & Admissible moves
        \\\hline
       $(\epsilon, 1)$ & (Winning) \\
       $(aw,ae)$ & $(w,e)$ \\
       $(w,0)$ & - \\
       $(w,e+f)$ & $(w,e)$, $(w,f)$ \\
       $(w, \mu X e(X))$ & $(w,e(\mu X e(X))$ \\
       $(w, \nu X e(X))$ & $(w,e(\nu X e(X))$ 
       \\
        \hline
    \end{tabular}    
    \caption{Rules of the evaluation puzzle}
    \label{fig:eval-rules}
\end{figure}

\abhishek{maybe say no admissible moves here with $w$ (isntead of $\epsilon$ and then later say if you stop at 0 or at 1 with $w\ne\epsilon$ you lose and if you stop at 1 with $\epsilon$ you win.}

For the final point the `smallest expression occurring infinitely often' in a play is indeed well-defined. 
To see this, first extend the definition of $\fl\cdot$ from \cref{def:fl} to all $\mu\nu$-expressions by adding the clause:
\begin{itemize}
    \item if $\nu X f \in \fl e$ then $f(\nu X f(X)) \in \fl e$.
\end{itemize}
We extend all the associated order notation for $\FL$ to $\mu\nu$-expressions too, and we note that the properties of $\FL$ from \cref{prop:fl-props} still hold for our extended notions, by the same proof.

\abhishek{for lics, shall we introduce $\mu\nu$ expressions at one go and then restrict? Then exposition on FL,etc. will be uniform.}
\anupam{I think it is good to keep $\mu$-only stuff as warm up. stet for now.}

Now since any infinite play of the evaluation puzzle induces a $\leqfl$-decreasing trace, by the right components, there must be a smallest infinitely occurring expression that is either a $\mu$ or $\nu$ formula, by \cref{prop:fl-props}.

We have the following characterisation of evaluation:

\begin{theorem}
[Evaluation]
\label{thm:evaluation-theorem}
    $w \in \wlang e \iff $ there is a winning play from $(w,e)$.
\end{theorem}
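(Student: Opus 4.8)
The plan is to prove both directions by relating the coinductive/inductive structure of $\wlang\cdot$ to the trace condition on plays. The governing principle is that the evaluation puzzle is essentially an unfolding of the semantic clauses for $\lang\cdot$, so the winning condition must match the fixed-point alternation that defines membership. I would set up the argument around \emph{signatures} (ordinal approximants) for the $\mu$ fixed points, while treating the $\nu$ fixed points coinductively, since the parity-style winning condition (smallest infinitely-occurring expression is a $\nu$) is precisely what distinguishes genuine membership from a ``spurious'' descent through least fixed points.

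For the $\Longleftarrow$ direction (a winning play witnesses membership), I would proceed by taking a winning play and showing $w\in\wlang e$. The clean route is to argue contrapositively or by a well-founded induction on the $\mu$-signatures: assign to each state $(w,e)$ along the play an ordinal measuring the approximant stage at which $w$ enters $\wlang e$, and observe that every non-$\nu$ move either strictly decreases a relevant component or is matched by the semantic clause. At a $\mu$-unfolding the approximant strictly decreases, at a $\nu$-unfolding it is preserved (using that $\wlang{\nu X e}$ is the greatest fixed point, hence closed under one unfolding), and the modal move $(aw,ae)\to(w,e)$ peels a letter. Since a winning infinite play has its smallest infinitely-occurring formula a $\nu$, the $\mu$-formulas below it are unfolded only finitely often, so no infinite strictly-decreasing ordinal descent can occur; combined with the coinductive closure this forces $w\in\wlang e$. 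The terminal case $(\epsilon,1)$ gives $\epsilon\in\wlang 1$ directly.

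For the $\Longrightarrow$ direction (membership yields a winning play), the player must have a strategy, and the only genuine choices are at sums $(w,e+f)$. I would use the semantic clause $\wlang{e+f}=\wlang e\cup\wlang f$ to always move into a disjunct whose language still contains $w$, maintaining the invariant $w\in\wlang e$ at every state; this invariant is preserved by every other (forced) move using the corresponding clause of $\lang\cdot$, and in particular the $\mu$- and $\nu$-unfoldings preserve it since both are fixed points. I would then have to check that the resulting play is winning. If the play is finite it terminates at $(\epsilon,1)$, which is winning (one checks the only way to get stuck with the invariant intact is at $1$ with $w=\epsilon$, since $\wlang 0=\emptyset$ and the modal/letter cases always allow a move when $w\in\wlang{ae}$). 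If the play is infinite, I must rule out that the smallest infinitely-occurring formula is a $\mu$-formula: this is the crux, and I would argue it again via signatures, choosing the strategy at sums not merely to preserve membership but to \emph{minimise the $\mu$-approximant} (a standard ``least-witness'' refinement). Along such an optimal play the signature of the smallest infinitely-occurring $\mu$-formula would have to strictly decrease infinitely often, contradicting well-foundedness of the ordinals.

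The main obstacle I anticipate is exactly this last point: showing that membership-preserving strategies can be sharpened so that no $\mu$-formula is the smallest formula occurring infinitely often. This requires a careful signature/approximant assignment that behaves correctly under the interleaving of $\mu$ and $\nu$ unfoldings, and it is where the parity character of the winning condition is genuinely used; \cref{prop:fl-props}\ref{item:traces-have-least-inf-occ-elem} guarantees a well-defined smallest infinitely-occurring formula, but one must still pin down its fixed-point polarity via the ordinal analysis. I expect the cleanest presentation to fix a closure ordinal $\kappa$ for all the $\mu$-operators occurring in $\fl e$, define signatures as tuples of approximant-stages ordered by the $\FL$-priority, and show they descend strictly across each full cycle returning to the smallest infinitely-occurring formula when that formula is a $\mu$; everything else (the forced moves, the finite-play case, and the coinductive closure for $\nu$) should then follow by routine unfolding of the semantic clauses.
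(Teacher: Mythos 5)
Your $\Longrightarrow$ direction is essentially the paper's: preserve membership along the play, choose summands minimising the least $\mu$-signature, and contradict well-foundedness if the smallest infinitely-occurring formula were a $\mu$. That half is sound. The genuine gap is in your $\Longleftarrow$ direction. You propose to assign to each state along the winning play ``an ordinal measuring the approximant stage at which $w$ enters $\wlang e$'' --- but such a stage exists only when $w \in \wlang e$, which is precisely what this direction must establish; the assignment presupposes the conclusion. The closing appeal to ``coinductive closure'' does not repair this: to conclude $w \in \wlang{\nu X f(X)}$ coinductively you must exhibit and verify a post-fixed point containing $w$, and with $\mu$'s interleaved inside the $\nu$ that verification is exactly the hard part your sketch leaves untouched. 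Your supporting claim is also off: only formulas FL-\emph{smaller} than the smallest infinitely-occurring $\nu$ occur finitely often; $\mu$-formulas above it in the dependency order may be unfolded infinitely often (their approximants being reset each time the $\nu$ unfolds), so ``no infinite strictly-decreasing ordinal descent can occur'' does not follow as stated, and in any case there is no descent hypothesis to contradict in this direction as you have set it up.

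The repair is to dualise your own $\Longrightarrow$ argument rather than reuse $\mu$-approximants, which is what the paper does. Assume for contradiction $w \notin \wlang e$. Every puzzle move preserves non-membership (at a sum, $w' \notin \wlang{f_0 + f_1}$ gives non-membership in \emph{both} disjuncts, so no strategic choice is needed), and a winning play cannot then terminate at $(\epsilon, 1)$, so it is infinite. Now track at each state the least $N$-signature $\vec\alpha_i$ --- the tuple of $\nu$-approximant stages, ordered lexicographically by priority --- witnessing $w_i \notin \wlang{(e_i)_{\vec\alpha_i}}$. This sequence is monotone non-increasing, and it strictly decreases at each unfolding of the smallest infinitely-occurring formula, which is a $\nu$ precisely because the play is winning; the lexicographic order absorbs the resets of lower-priority components that worried you. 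Non-convergence of a non-increasing ordinal sequence is the desired contradiction. With this replacement the two directions become symmetric infinite-descent arguments --- $M$-signatures for membership, $N$-signatures for non-membership --- which is the paper's proof.
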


The proof of this result uses relatively standard but involved techniques, requiring a detour through a theory of approximants and signatures when working with fixed point logics. 

First let us use this machinery to establish the converse of \cref{prop:omega-reg-have-munu-exps}, by showing that evaluation puzzles can be evaluated by non-deterministic parity automata.
To be clear, in our setting admitting both finite and $\omega$-words, we allow (non-deterministic) $\omega$-word automata to have $\epsilon$-transitions.\todo{find a citation? Otherwise justify or work modulo \cref{rem:fin-words-as-omega-words}. }\abhishek{To my surprise, couldn't find anything}
\anupam{Let's discuss just removing $1$ from $\nCRLA$ altogether.}
\begin{corollary}
    [$\omega$-regularity]
    \label{cor:munu-exprs-compute-omega-reg-lang}
    Every $\mu\nu$-expression computes an $\omega$-regular language.
\end{corollary}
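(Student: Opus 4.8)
The plan is to realise the evaluation puzzle as a nondeterministic parity automaton and then invoke the Evaluation theorem, \cref{thm:evaluation-theorem}. Fix a $\mu\nu$-expression $e$. I would build an automaton $\mathcal A_e$ whose states are the (finitely many, by \cref{prop:fl-props}) expressions in $\fl e$, with initial state $e$, and whose transitions exactly mirror the admissible moves of \cref{fig:eval-rules}: an $a$-labelled transition from $af$ to $f$; silent ($\epsilon$-)transitions from $f+g$ to each of $f,g$; a silent transition from $\mu X f(X)$ to $f(\mu X f(X))$ and likewise for $\nu X f(X)$; the state $1$ is accepting (for finite input) and $0$ is a dead state. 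By construction the (maximal) runs of $\mathcal A_e$ on an input $w\in\Alphabet^{\leq\omega}$ are in bijection with the plays of the evaluation puzzle from $(w,e)$, the only genuine choices in either setting occurring at sums.

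Next I would equip $\mathcal A_e$ with a parity (priority) condition capturing the winning condition of the puzzle, namely that an infinite play wins iff its smallest infinitely occurring expression under $\subform$ is a $\nu$-formula. Using \cref{prop:fl-props}, the right components of a play form a $\leqfl$-trace whose successive moves always strictly change the expression, so no infinite play is eventually stable and its smallest infinitely occurring $\subform$-element is always a fixed-point formula. I would therefore colour each fixed-point formula of $\fl e$ by a priority that (i) is even exactly when the formula is a $\nu$, and (ii) is $\subform$-monotone, in the sense that a more dominant (smaller under $\subform$, \ie more `outer') fixed point receives a strictly larger priority than any fixed point properly containing it; all non-fixed-point states receive the least priority. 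This is the standard assignment of a Rabin/Mostowski index to a hierarchically ordered family of fixed points, and since the $\subform$-minimal infinitely occurring fixed point is $\subform$-below every other infinitely occurring formula, it attains the strictly largest priority among them. Hence the maximal priority seen infinitely often is precisely that of the smallest infinitely occurring fixed-point formula, which is even iff that formula is a $\nu$; so the parity-accepting runs of $\mathcal A_e$ are exactly the winning plays.

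Combining this with \cref{thm:evaluation-theorem} yields $\wlang{\mathcal A_e}=\wlang e$: a finite word is accepted iff there is a finite winning play terminating at $1$, and an infinite word iff there is a parity-accepting infinite run. Since nondeterministic parity automata (with $\epsilon$-transitions) recognise exactly the $\omega$-regular languages, $\wlang e$ is $\omega$-regular, as required; the presence of finite words is handled either directly via the accepting state $1$ or, if one insists on pure $\omega$-languages, through the reading of \cref{rem:fin-words-as-omega-words}.

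I expect the main obstacle to be verifying the correctness of the priority assignment, namely that the $\subform$-hierarchy of fixed points in $\fl e$ admits a colouring whose induced parity condition coincides with the puzzle's winning condition, together with the bookkeeping needed to accommodate both finite and infinite words under a single automaton with $\epsilon$-transitions. The reduction of runs to plays and the appeal to \cref{thm:evaluation-theorem} are then routine.
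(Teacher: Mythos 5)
Your proposal is correct and takes essentially the same route as the paper's own proof: the paper likewise builds a nondeterministic parity automaton with states $\fl e$, initial state $e$, transitions given by the puzzle moves (all $\epsilon$ except $af \to f$, which reads $a$), and a colouring in which $\nu$-formulas are even, all others odd, and smaller formulas dominate, so that parity acceptance coincides with the puzzle's winning condition and \cref{thm:evaluation-theorem} closes the argument. The only cosmetic difference is the bookkeeping for finite words: the paper has the state $1$ loop on itself by an $\epsilon$-transition with even parity, rather than treating it as a finite-acceptance state.
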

\begin{proof}
    Given a $\mu\nu$-expression $e$
    construct a non-deterministic parity automaton by:
    \begin{itemize}
        \item the states are just $\fl e$;
        \item $e$ is the initial state;
        \item each moves of the evaluation puzzle (projected to the right component) is a transition, all being $\epsilon$ except $ae \to e$ which is an $a$-transition;
        \item the state $1$ loops on itself by an $\epsilon$-transition and has (arbitrary) even parity (even is accepting);
        \item other states are coloured according to the size of formula (smaller is better) ensuring that $\nu $-formulas are even and all others are odd.
    \end{itemize}
    Now the runs of this automaton on an input word $w$ correspond precisely to plays of the evaluation puzzle, with the parity condition of the former corresponding to the winning condition of the latter.
\end{proof}

\subsection{Approximants and assignments: proof of the Evaluation Theorem}
\anupam{this section can be appendix for the conference paper: we just need the evaluation puzzle result.}
The point of this subsection is to prove \cref{thm:evaluation-theorem}.
The reader comfortable with that result may safely skip straight to the next subsection.


Let us write $\subform$ for the subformula relation.
\begin{definition}
    [Dependency order]
Define the \defname{dependency order} $\dleq\, \df\, \leqfl \times \supform$, i.e.\ $e\dleq f$ if either $e\lefl f$ or $e\eqfl f $ and $f \subform e$.
\end{definition}
Note that, by the properties of FL closure, $\dleq$ is a well partial order on expressions. 
In the sequel we assume an arbitrary extension of $\dleq$ to a total well-order $\leq$.

\begin{definition}
    [Signatures]
    Let $M$ be a finite set of $\mu$-expressions $\{\mu X_0 e_0  \dge \cdots \dge \mu X_{n-1} e_{n-1}\}$.
    An $M$-\defname{signature} (or $M$-\defname{assignment}) is a sequence $\vec \alpha$ of ordinals indexed by $M$.
    Signatures are ordered by the lexicographical product order.
    An $M$-\defname{signed} formula is an expression $e^{\vec \alpha}$, where $e$ is a $\mu\nu$-expression and $\vec \alpha$ is an $M$-signature.

When $N$ is a finite set of $\nu$-formulas we define $N$-signatures similarly and use the notation $e_{\vec \alpha}$ for $N$-signed formulas.
\end{definition}

We evaluate signed formulas in $\lang\cdot$ just like usual formulas, adding the clauses,
\begin{itemize}
\item $\wlang {(\mu X_i e_i(X))^{\vec \alpha_i 0 \vec \alpha^i}} \df \emptyset$.
    \item $\wlang {(\mu X_i e_i(X))^{\vec \alpha_i (\alpha_i + 1 ) \vec \alpha^i}} \df \wlang { (e_i (\mu X_i e_i(X)))^{\vec \alpha_i \alpha_i  \vec \alpha^i}}$.
    \item $\wlang{(\mu X_i e_i(X))^{\vec \alpha_i \alpha_i \vec \alpha^i}} \df \bigcup \limits_{\beta_i <\alpha_i} \wlang{(\mu X_i e_i(X))^{\vec \alpha_i \beta_i \vec \alpha^i}} $, when $\alpha_i$ is a limit.

    \medskip
    
\item $\wlang {(\nu X_i e_i(X))_{\vec \alpha_i 0 \vec \alpha^i}} \df \Alphabet^{\leq \omega}$.
    \item $\wlang {(\nu X_i e_i(X))_{\vec \alpha_i (\alpha_i + 1 ) \vec \alpha^i}} \df \wlang { (e_i (\nu X_i e_i(X)))_{\vec \alpha_i \alpha_i  \vec \alpha^i}}$.
    \item $\wlang{(\nu X_i e_i(X))_{\vec \alpha_i \alpha_i \vec \alpha^i}} \df \bigcap \limits_{\beta_i <\alpha_i} \wlang{(\nu X_i e_i(X))_{\vec \alpha_i \beta_i \vec \alpha^i}} $, when $\alpha_i$ is a limit.
\end{itemize}
where we are writing $\vec \alpha_i \df (\alpha_j)_{j<i}$ and $\vec \alpha^i \df (\alpha_j)_{j>i}$.

Recall that least and greatest fixed points can be computed as limits of approximants. 
In particular, for any sets $M,N$ of $\mu,\nu$ formulas respectively, we have,
\begin{itemize}
    \item $\lang e = \bigcup \limits_{\vec \alpha} e^{\vec \alpha}$
    \item $\lang e = \bigcap \limits_{\vec \beta} e_{\vec \beta}$
\end{itemize}
where $\vec \alpha $ and $\vec \beta$ range over all $M$-signatures and $N$-signatures, respectively. 
Thus we have immediately:

\begin{proposition}
Suppose $e$ is an expression and $M,N$ the sets of $\mu,\nu$-formulas, respectively, in $\fl e$. We have:
\begin{itemize}
    \item If $w \in \wlang e$ then there is a least $M$-signature $\vec \alpha$ such that $w \in \wlang {e^{\vec \alpha}}$.
    \item If $w\notin \wlang e$ then there is a least $N$-signature $\vec \alpha$ such that $w \notin \wlang {e_{\vec \alpha}}$.
\end{itemize}
\end{proposition}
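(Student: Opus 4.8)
The plan is to read off both statements directly from the two approximant representations recorded immediately above, namely $\wlang e = \bigcup_{\vec\alpha} \wlang{e^{\vec\alpha}}$ (ranging over $M$-signatures) and $\wlang e = \bigcap_{\vec\beta} \wlang{e_{\vec\beta}}$ (ranging over $N$-signatures), together with the fact that $M$- and $N$-signatures are well-ordered by the lexicographical product order. The only ingredient beyond the two representations is this well-orderedness, so I would establish it first and then each bullet becomes a one-line appeal to the least-element property.

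Since $M$ and $N$ are finite, an $M$-signature (resp.\ $N$-signature) is just a finite tuple of ordinals $\vec\alpha = (\alpha_i)_{i<|M|}$, and I claim the lexicographical product order on such tuples has the property that every nonempty collection has a least element. Given a nonempty collection $S$ of signatures, one takes the least value $\alpha_0$ of the first coordinate occurring among elements of $S$ (possible since $\mathrm{Ord}$ is well-ordered), restricts $S$ to those tuples whose first coordinate equals $\alpha_0$, takes the least second coordinate occurring there, and continues; after $|M|$ (finitely many) steps this pins down a tuple lying in $S$ that is $\preceq$-minimal in $S$. The finiteness of $M$ and $N$ is what makes this terminate, and is the only place it is used; for infinite index sets the lexicographical product need not be well-founded.

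For the first bullet, suppose $w \in \wlang e$. By the union representation the collection $\{\vec\alpha : w \in \wlang{e^{\vec\alpha}}\}$ of $M$-signatures is nonempty, and so has a least element by the above, which is exactly the required least $\vec\alpha$. For the second bullet, suppose $w \notin \wlang e$; by the intersection representation there is at least one $N$-signature $\vec\beta$ with $w \notin \wlang{e_{\vec\beta}}$, so the collection $\{\vec\beta : w \notin \wlang{e_{\vec\beta}}\}$ is nonempty and again has a least element. I do not expect any real obstacle here: the statement is indeed "immediate" from the stated representations, with all the actual content sitting in the well-orderedness of signatures, and hence ultimately in the finiteness of $M$ and $N$.
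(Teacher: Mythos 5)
Your proposal is correct and matches the paper's approach: the paper derives the proposition "immediately" from the two approximant representations $\wlang e = \bigcup_{\vec\alpha}\wlang{e^{\vec\alpha}}$ and $\wlang e = \bigcap_{\vec\beta}\wlang{e_{\vec\beta}}$, exactly as you do. The only difference is that you spell out the (routine) fact that the lexicographical order on finite tuples of ordinals is a well-order, which the paper leaves implicit.
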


In fact it suffices to take only signatures of natural numbers, i.e.\ finite ordinals, for the result above, as we mention later in \cref{cor:closure-ordinals-omega}, but we shall not use this fact.
We are now ready to prove our characterisation of evaluation:

\begin{proof}
[Proof of \cref{thm:evaluation-theorem}]
Let $M,N$ be the sets of $\mu,\nu$-formulas, respectively, in $\fl e$. 

    $\implies$. 
    Suppose $w \in \wlang e $. 
    We construct a winning play from $(w,e)$, always preserving membership of the word in the language of the expression.
    In particular, at each state $(w',e_0 + e_1)$, we choose a summand $e_i$ admitting the least $M$-signature $\vec \alpha$ for which $w'\in \wlang {e_i^{\vec \alpha}}$.
    If the play is finite then it is winning by construction, so assume it is infinite, say $(w_i,e_i)_{i<\omega}$ and, for contradiction, that its smallest infinitely occurring formula is $\mu X f(X)$. 
    Write $\vec \alpha_i$ for the least $M$-signature s.t.\ $w_i\in \wlang {e_i^{\vec \alpha_i}}$, for all $i<\omega$.
    Then by construction $(\vec \alpha_i)_{i<\omega}$ is a monotone non-increasing sequence that, by progressiveness, does not converge. Contradiction.

    $\impliedby$. Suppose there is a winning play $\pi$ from $(w,e) $  and assume $w \notin \wlang e$ for contradiction. 
    By inspection of the puzzle rules, note that $\pi$ must be infinite, say $\pi = (w_i,e_i)_{i<\omega}$, and preserve non-membership, i.e.\ we always have $w_i \notin \wlang {e_i}$.
    Moreover, writing $\vec \alpha_i$ for the least $N$-signature $\vec \alpha_i $ such that $w_i \notin \wlang{(e_i)_{\vec \alpha_i}}$, we have by construction that $(\vec \alpha_i)_{i<\omega}$ is a monotone non-increasing sequence.
    Since $\pi$ is winning, this sequence cannot converge. Contradiction.
\end{proof}

Note that the asymmetry of the argument above, with no critical play choices for the $\impliedby$ direction, reflects the fact that we are working with a puzzle rather than a game.

\subsection{Cyclic system and soundness}
We work with the system $\nLRLAhat$ extending $\LRLAhat$ by the following rules:
\[
\vlinf{\lr \nu}{}{\nu X e(X) \seqar \Gamma}{e(\nu X e(X))\seqar \Gamma}
\qquad
\vlinf{\rr \nu}{}{e \seqar \Gamma, \nu X f(X)}{e \seqar \Gamma, f(\nu X f(X))}
\]

Once again we have a local soundness principle:
\begin{proposition}
    [Local soundness]
    \label{prop:local-soundness-nu}
    $\nLRLAhat \proves e \seqar \Gamma \implies \nRLA \proves e\leq \sum \Gamma$.
\end{proposition}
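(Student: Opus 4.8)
The plan is to extend the proof of \cref{lemma:local-soundness} by checking soundness of each of the two new rules $\lr\nu$ and $\rr\nu$ against the axioms of $\nRLA$, since all the old rules of $\LRLAhat$ were already shown sound in \cref{lemma:local-soundness} and that argument goes through verbatim in the larger theory $\nRLA \supseteq \RLA$. So I only need to treat the $\nu$-rules, each of which should follow directly from one of the two defining $\nu$-axioms (Postfix and Coinduction), mirroring exactly how $\lr\mu$ and $\rr\mu$ were handled via Prefix and Postfix for $\mu$.

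First, for the $\rr\nu$ rule $\vlinf{\rr\nu}{}{e \seqar \Gamma, \nu X f(X)}{e \seqar \Gamma, f(\nu X f(X))}$, the inductive hypothesis gives $\nRLA \proves e \leq \sum\Gamma + f(\nu X f(X))$. By the Postfix axiom we have $\nu X f(X) \leq f(\nu X f(X))$, so $\sum\Gamma + f(\nu X f(X)) \geq \sum\Gamma + \nu X f(X) = \sum(\Gamma, \nu X f(X))$ fails in the wrong direction — so I must instead read the rule bottom-up and use the reverse inclusion. The correct observation is that $f(\nu X f(X)) \leq \nu X f(X)$ as well, since $\nu X f(X)$ is a \emph{fixed} point (Postfix gives one direction, and applying functoriality to Postfix as in \cref{ex:postfix-mu} gives the other, dually). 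Hence $\sum(\Gamma, f(\nu X f(X))) = \sum\Gamma + f(\nu X f(X)) \leq \sum\Gamma + \nu X f(X) = \sum(\Gamma, \nu X f(X))$, and composing with the IH by transitivity yields $e \leq \sum(\Gamma, \nu X f(X))$ as required.

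For the $\lr\nu$ rule $\vlinf{\lr\nu}{}{\nu X e(X) \seqar \Gamma}{e(\nu X e(X))\seqar \Gamma}$, the inductive hypothesis gives $\nRLA \proves e(\nu X e(X)) \leq \sum\Gamma$. Since $\nu X e(X)$ is provably a fixed point, in particular $\nu X e(X) \leq e(\nu X e(X))$ by Postfix, so by transitivity $\nu X e(X) \leq \sum\Gamma$, which is exactly the conclusion. I expect the mild subtlety — and the only real work — to be the analogue of \cref{ex:postfix-mu} for $\nu$: establishing that $\nu X e(X)$ is provably a genuine \emph{fixed} point (both inclusions) in $\nRLA$, which one obtains by a dual Knaster--Tarski argument using Coinduction together with \cref{prop:functoriality-rla} (whose proof extends to $\mu\nu$-expressions by adding the evident $\nu$-case). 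Everything else is routine transitivity chaining, entirely parallel to the $\mu$-case already carried out.
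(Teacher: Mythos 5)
Your proposal is correct and matches the paper's intended argument: the paper actually leaves this proof as a placeholder, the evident intent being exactly what you carry out — reuse the proof of \cref{lemma:local-soundness} for the inherited rules, handle $\lr\nu$ directly by the Postfix axiom for $\nu$ plus transitivity, and handle $\rr\nu$ by the derived inequality $f(\nu X f(X)) \leq \nu X f(X)$, the dual of \cref{ex:postfix-mu}, obtained from Coinduction together with functoriality. Your remark that \cref{prop:functoriality-rla} must be extended with the evident $\nu$-case is the only piece of genuine work here, and you have identified it correctly.
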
\todo{give the extra cases in a proof here.}

Preproofs for this system are defined just as for $\LRLAhat$ before, but we need to be more nuanced about the definition of progress.

\begin{definition}
    [Traces]
    Fix a $\nLRLAhat$ preproof $P$ and an infinite branch $B = (\infrule_i)_{i<\omega}$, where each $\infrule_i$ is an inference step.  
    A \defname{$B$-trace} (or just `trace' when unambiguous) is an infinite sequence of formulas $(t_i)_{i<\omega}$, such that:
    \begin{itemize}
        \item Each $t_i$ occurs in the conclusion of $\infrule_i$.
        \item If $t_i$ is principal for $\infrule_i$, then $t_{i+1}$ is one of its auxiliary formulas.
        \item If $\infrule_i   $ is a $\kk a $ step, then $t_i = at_{i+1}$.
        \item Otherwise $t_i = t_{i+1}$.
    \end{itemize}
    An LHS or RHS trace is one that remains in the LHS or RHS respectively.
    We say that a trace $\tau$ is \defname{progressing} if it is infinitely often principal and:
    \begin{itemize}
        \item $\tau$ is LHS with smallest infinitely often principal formula a $\mu$ formula.
        \item $\tau $ is RHS with smallest infinitely often principal formula a $\nu$ formula.
    \end{itemize}
\end{definition}

Note that, by construction, any $B$-trace is indeed a trace in the sense of \cref{def:fl};
thus \cref{prop:fl-props} applies, and the notion of progress above is well-defined.

\begin{definition}
[System $\nCRLA$]
    A $\nLRLAhat$ preproof is progressing if each of its infinite branches has a progressing trace.
    We write $\nCRLA$ for the class of regular progressing $\nu\RLAhat$ coderivations.
\end{definition}

\todo{add some examples here}

The soundness argument now proceeds similarly to that of $\CRLA$, \cref{thm:crla-soundness}, only using the evaluation puzzle in place of NFAs.

\begin{theorem}
[Soundness of $\nCRLA$]
\label{thm:ncrla-soundness}
    If $\nCRLA \proves e \seqar \Gamma$ then $\wlang e \subseteq  \wlang \Gamma$.\todo{can strengthen this to progressing $\nLRLAhat$ preproofs.}
\end{theorem}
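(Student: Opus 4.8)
The plan is to argue by \emph{infinite descent}, following the template of \cref{thm:crla-soundness} but replacing the NFA run by a winning play of the evaluation puzzle and invoking the Evaluation Theorem (\cref{thm:evaluation-theorem}) in both directions. Suppose $\nCRLA \proves e \seqar \Gamma$ via a proof $P$, yet $\wlang e \not\subseteq \wlang \Gamma$, witnessed by some $w \in \wlang e$ with $w \notin \wlang f$ for every $f \in \Gamma$. Since $w \in \wlang e$, \cref{thm:evaluation-theorem} supplies a \emph{winning} play $\pi^L$ of the evaluation puzzle from $(w,e)$, which I will use to steer the left-hand side of a ``bad'' branch.

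First I would build a branch $B = (e_i \seqar \Gamma_i)_i$ of $P$, together with suffixes $w_i$ of $w$, maintaining two invariants: (i) $(w_i,e_i)$ is the current state of $\pi^L$, so $B$ follows $\pi^L$ on the left; and (ii) $w_i \notin \wlang{\Gamma_i}$, a \emph{counterexample} on the right. Invariant (i) determines all choices: at a $\lr +$ step we take the premiss dictated by $\pi^L$, the steps $\lr \mu, \lr \nu, \kk a$ each match the unique corresponding puzzle move (at $\kk a$ the current word starts with $a$ since $w_i \in \wlang{a e_i}$), and right-logical steps and $\wk$ leave $(w_i,e_i)$ unchanged, keeping $\pi^L$ paused. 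Invariant (ii) is preserved automatically by every rule, since each of $\wk, \rr 0, \rr +, \rr \mu, \rr \nu$ only enlarges the right-hand denotation bottom-up, while at a $\kk a$ step non-membership transfers by left-cancelling $a$ from $a w_i \notin \wlang{a\Gamma_i}$. Because a tail of a winning play is again winning, \cref{thm:evaluation-theorem} gives $w_i \in \wlang{e_i}$ at every node; hence $B$ can reach neither an $\id$ leaf $e_i \seqar e_i$ (which would force both $w_i \in \wlang{e_i}$ and $w_i \notin \wlang{e_i}$) nor a $\lr 0$ leaf (as $\wlang 0 = \emptyset$). Thus $B$ is infinite.

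Since $P$ is progressing, $B$ carries a progressing trace $\tau$, and I would split on its side. If $\tau$ is an LHS trace it is infinitely often principal, so $\pi^L$ advances infinitely often and $\tau$ is exactly the infinite play $\pi^L$; its smallest infinitely occurring formula (a fixed point, by \cref{prop:fl-props}) is forced to be a $\mu$ formula by the progress condition, whereas the puzzle winning condition for $\pi^L$ forces it to be a $\nu$ formula --- contradiction. If instead $\tau$ is an RHS trace, then projecting it to the right component yields an infinite puzzle play $\pi^R$ from some $(w_{i_0}, f_{i_0})$ with $f_{i_0} \in \Gamma_{i_0}$; its least infinitely occurring formula is the $\nu$ formula given by progress, so $\pi^R$ is \emph{winning}, and \cref{thm:evaluation-theorem} yields $w_{i_0} \in \wlang{f_{i_0}}$, contradicting invariant (ii). Either way we reach a contradiction, establishing $\wlang e \subseteq \wlang \Gamma$.

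I expect the main obstacle to be the faithful dictionary between proof-theoretic traces and puzzle-plays, and in particular reconciling the two notions of ``smallest recurring fixed point'': the progress condition is phrased via infinitely-often-\emph{principal} formulas whereas the winning condition is phrased via infinitely-\emph{occurring} formulas. I would therefore isolate a small lemma --- immediate from \cref{prop:fl-props} --- that along any non-eventually-stable trace these two minima coincide and are a fixed-point formula. The second delicate point is the inherent asymmetry of the construction: the left play must be \emph{chosen} winning in advance (to defeat a would-be $\mu$-progressing LHS trace), while on the right no choices are needed, since \emph{every} right formula is a non-member and hence any RHS trace already projects to a counterexample-preserving play. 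This asymmetry mirrors that of the Evaluation Theorem itself, where only the $\implies$ direction makes play choices.
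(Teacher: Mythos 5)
Your proof is correct and takes essentially the same approach as the paper's: both use \cref{thm:evaluation-theorem} to extract a winning play from $(w,e)$ that steers a (unique, maximal) branch of the proof, observe that the unique LHS trace of that branch therefore cannot be progressing, and convert the resulting RHS progressing trace into a winning play of the evaluation puzzle witnessing membership on the right. The only differences are cosmetic: the paper argues directly (handling finite branches by the propagation argument at the end of the proof of \cref{thm:crla-soundness}, with no need for your non-membership invariant), whereas you run the same argument by contradiction; and the small lemma you flag --- that for an infinitely-often-principal trace the smallest infinitely occurring and smallest infinitely-often-principal formulas coincide --- is indeed immediate and is used tacitly by the paper as well.
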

\begin{proof}  
Let $P$ be a $\nCRLA$ proof of $e\seqar \Gamma$ and suppose $w\in \wlang e$. We show $w\in\wlang \Gamma$. 
First, since $w\in \wlang e$ there is a winning play $\pi$ from $(w,e)$ by \cref{thm:evaluation-theorem}, which induces a unique (maximal) branch $B_\pi$ of $P$.
If $B_\pi$ is finite, then $w\in \lang \Gamma$ by the same argument as (the end of) the proof of \cref{thm:crla-soundness}.\anupam{maybe make that a lemma?}
Otherwise $w$ is infinite and $B_\pi$ must have a progressing trace.
Now, since $\pi$ is a winning play from $(w,e)$, the (unique) LHS trace of $B_\pi$ cannot be progressing, so it must have a RHS progressing trace $\tau$.
However, by construction of $B_\pi$, $\tau$ has a subsequence that forms (the right components of) a winning play from $(w,\sum \Gamma)$.
Thus indeed $w \in \lang \Gamma$ by \cref{thm:evaluation-theorem}.
\end{proof}

\section{Completeness of the guarded fragment via games}

Unlike the case of finite words, $\nCRLA$ is \emph{not} complete, in general, for $\wlang\cdot$. 

\begin{example}
    [Incompleteness]
    Consider the inequality $\top\leq 1 + \sum\limits_{a \in \Alphabet} a\top$, for $\top \df \nu X X$.
    This clearly holds in $\wlang \cdot$, as any word is either empty or begins with a letter but it is not provable in $\nCRLA$.
    For this note that the sequent $\top \seqar 1,\{a\top : a \in \Alphabet \}$ is a fixed point of bottom-up proof search (with no progressing thread).
    The same inequality is also not derivable in the aforementioned axiomatisation $\nRLA$. 
    For a countermodel simply interpret each letter $a\in \Alphabet$ as an appropriate constant function in, say, the complete lattice $\pow{\Alphabet^{\leq \omega}}$.
\end{example}

One potential remedy might be to expand the rules and syntax of $\nu\CRLA$ in a manner bespoke to the language model.
However, arguably more naturally, we rather show completeness for an adequate fragment of the syntax.

\subsection{The guarded fragment}
\label{sec:guarded-fragment}

We say that a variable occurrence $X$ in an expression $e$ is \defname{guarded} if $X$ occurs free in a subexpression of the form $af$.
We say that $e$ is guarded if all its variable occurrences are guarded.
A sequent is guarded if each of its expressions is.
It is not hard to see that the guarded expressions suffice to compute each regular language.
For this we rely on the well-known fact that regular languages can be factored into their identity and identity-free parts.

An \defname{identity-free regular expression} is generated by,
\[
e,f, \dots \quad \bnf \quad 
0 \quad \mid \quad a \quad \mid \quad e+f \quad \mid \quad ef \quad \mid \quad e^+
\]
and are interpreted as languages as expected, with $e^+$ computing the transitive closure of $e$. 
It is clear that $\epsilon \notin \lang e$ for any identity-free expression $e$.
In fact we have:

\begin{fact}
    For each regular expression $e$ there is $b_e \in \{0,1\}$ and $d_e$ identity-free such that $\lang e = \lang {b_e+d_e}$.
\end{fact}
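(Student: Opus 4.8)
The plan is to proceed by structural induction on the regular expression $e$ (built from $0,1,a,\,e+f,\,ef,\,e^*$), maintaining the slightly stronger bookkeeping that $b_e\in\{0,1\}$ records exactly whether $\epsilon\in\lang e$, so that $\lang{b_e}=\lang e\cap\{\epsilon\}$, while $d_e$ collects precisely the nonempty words, so that $\lang{d_e}=\lang e\setminus\{\epsilon\}$. Since $\lang{b_e+d_e}=\lang{b_e}\cup\lang{d_e}$, establishing these two identities immediately yields the claim, and keeping them separate is what makes the inductive step manageable. Note that once $d_e$ is identity-free its language is automatically $\epsilon$-free, by the remark already recorded in the excerpt, so the substantive content is the second identity. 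The base cases are immediate: for $e=0$ take $b_e=d_e=0$; for $e=1$ take $b_e=1,\,d_e=0$; and for $e=a$ take $b_e=0,\,d_e=a$, each $d_e$ being identity-free.

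For the sum $e=e_1+e_2$, with $b_i,d_i$ given by the inductive hypothesis for each $e_i$, I would set $d_e:=d_1+d_2$ (identity-free as a sum of identity-free expressions) and put $b_e:=1$ iff $b_1=1$ or $b_2=1$. Then $\lang e=\lang{e_1}\cup\lang{e_2}$ splits along $\{\epsilon\}$ exactly as required. The Kleene star $e=e_1^*$ is equally clean: a star always contains $\epsilon$, so $b_e:=1$; and since adjoining $\epsilon$ does not change a Kleene star we have $\lang{e_1}^*=(\lang{b_1}\cup\lang{d_1})^*=\lang{d_1}^*$, whose nonempty words are precisely $\lang{d_1}^+$. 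Hence I take $d_e:=d_1^+$, which is identity-free by the grammar and $\epsilon$-free because $d_1$ is.

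The one case requiring care, and the main obstacle, is the product $e=e_1e_2$. Here $\lang e=\lang{e_1}\lang{e_2}=(\lang{b_1}\cup\lang{d_1})(\lang{b_2}\cup\lang{d_2})$, which distributes into four pieces $\lang{b_1}\lang{b_2}\cup\lang{b_1}\lang{d_2}\cup\lang{d_1}\lang{b_2}\cup\lang{d_1}\lang{d_2}$. I would observe that $\epsilon$ lies in this union exactly when $b_1=b_2=1$: only the first piece can contribute $\epsilon$, since concatenating any nonempty word (as supplied by an $\lang{d_i}$) yields a nonempty word. So I set $b_e:=1$ iff $b_1=b_2=1$. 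The remaining three pieces are $\epsilon$-free, so I define $d_e$ as their sum, reading $\lang{b_i}\lang{d_j}$ as $\lang{d_j}$ when $b_i=1$ and as $\emptyset$ when $b_i=0$; concretely $d_e:=d_1d_2+c_1+c_2$, where $c_1:=d_2$ if $b_1=1$ and $c_1:=0$ otherwise, and symmetrically for $c_2$. Each summand is identity-free, hence so is $d_e$, and by construction $\lang{d_e}=\lang e\setminus\{\epsilon\}$. The only subtlety is the routine verification that each such concatenation of $\epsilon$-free languages remains $\epsilon$-free, which holds because the concatenation of two nonempty words is nonempty.
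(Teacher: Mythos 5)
Your proof is correct and complete: all cases of the structural induction (including the only delicate one, the product, where the cross terms $\lang{b_1}\lang{d_2}$ and $\lang{d_1}\lang{b_2}$ must be folded into $d_e$ by case analysis on $b_1,b_2$) are handled properly, and the strengthened invariant $\lang{b_e}=\lang e\cap\{\epsilon\}$, $\lang{d_e}=\lang e\setminus\{\epsilon\}$ is exactly what makes the induction go through. The paper itself offers no proof of this statement --- it is recorded as a well-known fact --- and your structural induction on regular expressions is the standard argument one would supply for it, so there is nothing to reconcile between the two.
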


In this way each $\omega$-regular language is computed by a sum of expressions of the form $ef^\omega$ and $f^\omega$, where $e$ and $f$ are identity-free.
It is not hard to see that the translation of \eqref{eq:prod-regexp-rlexp-to-rlexp} can be adapted to all such expressions with image in the guarded fragment.
Explicitly, for identity-free $e$ we define the right-linear $\nu$-expression $e\bullet g$ for all right-linear $\nu$-expressions $g$ by:
\[
\begin{array}{r@{\ \df\ }l}
     0 \bullet g & 0 \\
     a \bullet g & ag \\
    (e + f) \bullet g & e\bullet g + f \bullet g \\
    (ef) \bullet g & e \bullet (f \bullet g)  \\
    e^+\bullet g & \mu X (e\bullet  g + e\bullet X)
\end{array}
\]
It is not hard to see that $e \bullet g$ is always guarded, no matter what $g$ is, as long as $e$ is identity-free.
From here, for identity-free $e,f$, we can write $f^\omega$ as $\nu X (f\bullet X)$ and $ef^\omega$ as $e\bullet \nu X(f\bullet X)$.
This gives us:

\begin{proposition}
    If $A\subseteq\Alphabet^\omega$ is $\omega$-regular then there is a guarded $\mu\nu$-expression $e$ s.t.\ $\wlang e = A$.
\end{proposition}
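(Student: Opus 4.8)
The plan is to retrace the argument of \cref{prop:omega-reg-have-munu-exps}, this time keeping careful track of guardedness and using the factorisation Fact above to eliminate all occurrences of $1$. First I would unfold the definition of $\omega$-regularity: given $\omega$-regular $A\subseteq\Alphabet^\omega$, we may write $A = \wlang{\sum_{i<n} e_i f_i^\omega}$ with each $e_i,f_i$ a regular expression and $\epsilon\notin\lang{f_i}$.

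Next I would reduce to identity-free factors. Since $\epsilon\notin\lang{f_i}$, the Fact above yields $\lang{f_i} = \lang{d_{f_i}}$ for an identity-free $d_{f_i}$ (the constant $b_{f_i}$ is forced to be $0$), so we may assume WLOG that each $f_i$ is identity-free. For the prefixes, the Fact gives $\lang{e_i} = \lang{b_{e_i}+d_{e_i}}$; distributing over the sum, each summand $e_i f_i^\omega$ splits as $b_{e_i}f_i^\omega + d_{e_i}f_i^\omega$, where $b_{e_i}f_i^\omega$ is either $f_i^\omega$ (if $b_{e_i}=1$) or empty (if $b_{e_i}=0$). Hence $A$ is the value under $\wlang\cdot$ of a finite sum of terms of the two shapes $f^\omega$ and $ef^\omega$, with $e,f$ identity-free.

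From here I would apply the translation exactly as in \cref{prop:omega-reg-have-munu-exps}: send $f^\omega$ to $\nu X(f\bullet X)$ and $ef^\omega$ to $e\bullet\nu X(f\bullet X)$, and take the sum of the resulting expressions. Semantic correctness, namely that this sum computes $A$ under $\wlang\cdot$, is already supplied by \cref{prop:omega-reg-have-munu-exps}; the genuinely new content here is guardedness. For this I would invoke the observation above that $e\bullet g$ is guarded whenever $e$ is identity-free, \emph{regardless} of $g$. In particular $f\bullet X$ is guarded, so the bound variable $X$ occurs guarded in $\nu X(f\bullet X)$, making that expression guarded; then $e\bullet\nu X(f\bullet X)$ is guarded by the same observation; and a finite sum of guarded expressions is plainly guarded.

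The one point deserving care, and really the only obstacle, is that we must secure $\wlang e\subseteq\Alphabet^\omega$ \emph{exactly}, to match $A\subseteq\Alphabet^\omega$, rather than merely $\wlang e\subseteq\Alphabet^{\leq\omega}$. This is precisely where identity-freeness of $f$ does the work: since $\epsilon\notin\lang f$, every unfolding of $\nu X(f\bullet X)$ consumes at least one letter, so the greatest fixed point $\wlang{\nu X(f\bullet X)}$ contains no finite word and in fact equals $\lang f^\omega\subseteq\Alphabet^\omega$. Concretely, any $w$ in this greatest fixed point admits an infinite factorisation into nonempty $\lang f$-blocks, which forces $w$ to be infinite. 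Prefixing by identity-free $e$ keeps us inside $\Alphabet^\omega$, and finite sums preserve this, completing the argument.
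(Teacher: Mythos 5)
Your proposal is correct and takes essentially the same route as the paper: factor via the Fact into a sum of terms $f^\omega$ and $ef^\omega$ with $e,f$ identity-free, translate these as $\nu X(f\bullet X)$ and $e\bullet\nu X(f\bullet X)$ using the identity-free $\bullet$, and conclude guardedness from the observation that $e\bullet g$ is guarded whenever $e$ is identity-free. The only difference is that you spell out details the paper leaves implicit (notably that $\wlang{\nu X(f\bullet X)}$ contains no finite words when $f$ is identity-free), which is a sound but inessential elaboration of the same argument.
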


From here the main result of this section is:

\begin{theorem}
    [Guarded completeness]
    \label{thm:ncrla-guarded-complete}
    If $e \seqar \Gamma$ is guarded and $\wlang e \subseteq \bigcup\limits_{f\in \Gamma} \wlang f$, then $\nu\CRLA \proves e \seqar \Gamma$.
\end{theorem}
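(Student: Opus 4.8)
The plan is to mimic the fair, validity-preserving bottom-up proof search used for \cref{thm:crla-completeness}, now over $\nLRLAhat$ and the model $\wlang\cdot$, and to locate the difficulty entirely in verifying the (now trace-based) progress condition. First I would check that \cref{prop:invertibility} and \cref{prop:modal} continue to hold for $\wlang\cdot$: the new steps $\lr\nu$ and $\rr\nu$ are invertible because $\wlang{\nu X f(X)} = \wlang{f(\nu X f(X))}$ (a genuine fixed point in the complete lattice $\pow{\Alphabet^{\leq\omega}}$), and the Modal lemma is unchanged since it concerns only the $\kk a$ rule. The search strategy is: (1) apply left rules maximally until the LHS is $1$ or $ae'$; (2) apply right rules fairly and maximally until the RHS is a cedent $\Delta$ of formulas of the form $1$ or $bh$; (3) if the LHS is $ae'$, weaken the non-$a$ members of $\Delta$ (valid by \cref{prop:modal}) and apply $\kk a$, otherwise close at $1\seqar 1$. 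Crucially, \emph{guardedness} ensures the unfolding in steps (1)--(2) terminates between modal steps, and that every infinite branch passes through infinitely many $\kk a$ steps. Since all LHSs lie in $\fl e$ and all RHSs are subsets of a fixed $\FL$-closed set, only finitely many sequents occur, so a memoryless (tie-broken) version of this strategy yields a \emph{regular} preproof, exactly as in \cref{thm:crla-completeness}; and validity is preserved upward throughout, so the strategy never deadlocks.

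It remains to show the resulting preproof is progressing. Fix an infinite branch $B$. By guardedness it contains infinitely many $\kk a$ steps, reading an $\omega$-word $w$, and every sequent along $B$ is valid for $\wlang\cdot$. The unique LHS trace of $B$ traces out a play of the evaluation puzzle from $(w,e)$; since it is not eventually stable (infinitely many $\kk a$ steps), by \cref{prop:fl-props} its smallest infinitely-often-principal formula is a fixed point. If this is a $\mu$ formula the LHS trace is itself progressing and we are done. Otherwise it is a $\nu$ formula, so this LHS play is \emph{winning}, whence $w \in \wlang e$ by \cref{thm:evaluation-theorem}; by validity of the end-sequent $w \in \wlang \Gamma$, so $w \in \wlang f$ for some $f \in \Gamma$, and \cref{thm:evaluation-theorem} furnishes a winning play $\pi_R$ from $(w,f)$.

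The main obstacle is to lift $\pi_R$ to a genuine progressing $B$-trace on the right, despite the set-structure of cedents and the weakenings forced at $\kk a$ steps. Here I would use the key observation that an \emph{infinite} winning play on the \emph{infinite} word $w$ must, at the moment it consumes the $k$-th letter $a_k$, have its current formula of the exact guarded form $a_k g$ --- precisely the formulas that \emph{survive} the Modal-weakening at the corresponding $\kk a$ step. Thus the formulas of $\pi_R$ are never among those discarded by weakening; its sum-choices and fixed-point unfoldings are all realized by the maximal, fair right-rule applications (each RHS formula is eventually principal); and so $\pi_R$ lifts to an RHS $B$-trace that is infinitely often principal. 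Since $\pi_R$ is winning, its smallest infinitely-often-principal formula is a $\nu$ formula, so this trace is progressing. Hence every infinite branch of the preproof carries a progressing trace, and the preproof is a $\nCRLA$ proof of $e \seqar \Gamma$. I expect the alignment in this last paragraph --- reconciling a single winning RHS puzzle play with the forced weakenings and the fairness schedule of proof search --- to be the most delicate part, essentially the $\omega$-analogue of the Weakening Lemma (\cref{lem:weakening-lemma}), now driven by the parity winning condition rather than by mere LHS progress.
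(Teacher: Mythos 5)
Your proposal is correct, but it takes a genuinely different route from the paper's. The paper argues \emph{contrapositively via game determinacy}: it introduces the proof search game, invokes B\"uchi-Landweber finite-memory determinacy to conclude that every sequent is either $\nCRLA$-provable or admits a $\denier$ winning strategy (\cref{cor:reg-prf-or-denier-win-strat}), and then converts any infinite play of a $\denier$ winning strategy --- infinite and fair by guardedness (\cref{prop:inf-plays-from-guarded-seqs-are-fair}) --- into a countermodel word $w_B \in \wlang e \setminus \wlang \Gamma$, applying \cref{thm:evaluation-theorem} on both sides to contradict validity. You instead construct the proof \emph{directly}, extending the validity-preserving bottom-up search of \cref{thm:crla-completeness} to $\wlang\cdot$ (your semantic versions of \cref{prop:invertibility} and \cref{prop:modal} do hold for $\wlang\cdot$), and verify progress branch by branch: either the unique LHS trace progresses, or its smallest infinitely often principal formula is a $\nu$, whence the induced evaluation play is winning, membership transfers to the RHS by validity of the end-sequent, and a winning RHS play lifts to a progressing RHS trace. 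Your key synchronization observation --- that an infinite winning play on the infinite word read by the branch must sit at a formula of the form $a_k g$ precisely when the branch performs its $k$-th modal step, so it is never among the formulas discarded by the Modal weakening (and is never $0$ or $1$, so never deleted by $\rr 0$ or stranded) --- is sound and is exactly what makes the lifting go through; note that the paper performs the same kind of play-to-trace lifting, only in the dual direction (showing every RHS play loses along a $\denier$ play), so this delicacy is not avoided by either route. As for what each approach buys: yours is more elementary, dispensing with determinacy of $\omega$-regular games altogether, and it yields an explicit memoryless (hence regular) search procedure that is visibly the $\omega$-analogue of the $\CRLA$ strategy, with guardedness replacing the repetition-and-fairness bookkeeping of \cref{thm:crla-completeness} by outright termination of each unfolding phase; the paper's route buys the dichotomy of \cref{cor:reg-prf-or-denier-win-strat}, which is of independent interest, and concentrates all use of guardedness and the Evaluation Theorem into a single countermodel extraction rather than threading validity preservation and scheduling through the whole construction.
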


\subsection{The proof search game}
In order to prove the above completeness result, we need to rely on certain game determinacy principles to organise bottom-up proof search appropriately.

\begin{definition}
    [Proof search game]
    The \emph{proof search game} (for $\nu\RLAhat$) is a two-player game played between Prover $(\prover)$, whose positions are inference steps of $\nu\RLAhat$, and Denier $(\denier)$, whose positions are sequents of $\nu\RLAhat$.
A \defname{play} of the game starts from a particular sequent:
at each turn, $\prover$ chooses an inference step with the current sequent as conclusion, and $\denier$ chooses a premiss of that step; the process repeats from this sequent as long as possible.

An infinite play of the game is \defname{won} by $\prover$ (aka \defname{lost} by $\denier$) if the branch constructed has a progressing trace; otherwise it is won by $\denier$ (aka lost by $\prover$). {In the case of deadlock, the player with no valid move loses.}
\end{definition}

Note that, as there are only finitely many possible sequents in a given preproof, the proof search game is finite state.
From here it is not hard to see that the set of progressing branches forms an $\omega$-regular language over the (finite) alphabet of possible sequents: guess a finite prefix and a progressing thread (along with its critical LHS-$\mu$ or RHS-$\nu$ formula), and verify that no smaller formula is unfolded after the prefix.\todo{refer to a result later or earlier.}
Consequently we have:

\begin{proposition}
[Essentially B\"uchi-Landweber]
    The proof search game from any sequent is finite memory determined.
\end{proposition}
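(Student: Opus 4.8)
The plan is to reduce the statement to the classical B\"uchi--Landweber theorem, which guarantees determinacy together with finite-memory winning strategies for any game played on a finite arena with an $\omega$-regular objective. Two things therefore need to be checked: that the proof search game is played on a \emph{finite} arena, and that $\prover$'s winning condition is $\omega$-regular.

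First I would establish the finite-state property. Fix a starting sequent $e \seqar \Gamma$ and set $F \df \fl{\{e\}\cup\Gamma}$, which is finite by \cref{prop:fl-props}. By inspection of the rules of $\nLRLAhat$, every formula in the premiss of a rule is an $\FL$-subformula of a formula in its conclusion: this is immediate for $\lr 0, \lr +, \lr\mu, \lr\nu$, the right logical rules and $\wk$, and holds for $\kk a$ since passing from $ae \seqar a\Delta$ to $e \seqar \Delta$ only strips a leading letter. Hence every sequent reachable from $e \seqar \Gamma$ has its LHS in $F$ and its RHS a subset of $F$, and there are only finitely many such sequents; moreover, given a (reachable) conclusion there are only finitely many inference steps of $\nLRLAhat$ having it as conclusion. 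So the positions of both $\denier$ (sequents) and $\prover$ (inference steps) range over a finite set, i.e.\ the game is finite state.

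Next I would argue that the set of plays won by $\prover$ is $\omega$-regular over the finite alphabet of positions. A play is won by $\prover$ exactly when the branch it builds carries a progressing trace, and this is recognised by a nondeterministic automaton reading the branch, exactly as sketched in the text: it reads deterministically until it guesses a finite prefix, after which it commits to a trace and to its critical formula (an LHS-$\mu$ or RHS-$\nu$ formula), then tracks that trace deterministically (following principal-to-auxiliary formulas, or stripping a letter at each $\kk a$ step, as dictated by the trace definition), accepting iff the critical formula is principal infinitely often while no $\FL$-smaller formula is ever unfolded after the prefix. This is a B\"uchi condition constrained by a safety requirement, so the accepted set is $\omega$-regular.

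With both facts in hand the result follows from B\"uchi--Landweber: the proof search game is a game on a finite arena with an $\omega$-regular objective, hence is determined and the winner has a finite-memory winning strategy (obtained, say, by taking the product of the arena with a deterministic parity automaton for the objective and invoking positional determinacy of parity games). I expect the only real work to be the automaton construction of the third step, where the existentially-quantified trace together with its critical $\mu$/$\nu$ formula must be encoded as nondeterministic guesses and the acceptance condition set up to capture faithfully the progress condition on traces; the finiteness of the arena and the closing appeal to B\"uchi--Landweber are then routine.
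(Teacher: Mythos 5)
Your proposal is correct and follows essentially the same route the paper takes: the paper's justification is precisely the observation that the game is finite state plus the sketch that the set of progressing branches is $\omega$-regular (guessing a prefix, a trace, and its critical LHS-$\mu$/RHS-$\nu$ formula, then checking no smaller formula is unfolded after the prefix), followed by an appeal to B\"uchi--Landweber. Your argument merely fills in the details the paper leaves implicit — the $\FL$-closure bound giving finiteness of the arena, and the explicit nondeterministic automaton for the progress condition — so it is a faithful elaboration rather than a different proof.
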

Here `finite memory', in particular, means that the strategy needs only store a bounded amount of information at any time. 
It is not hard to see that any finite memory $\prover$ strategy is just a regular preproof (where the finite memory corresponds to multiple, yet finite, occurrences of the same sequent).
Moreover if the strategy is winning for $\prover$ then the corresponding preproof is progressing.
A similar analysis applies to $\denier$ strategies, and thus we have:\footnote{The fact that a player having a winning strategy means they have a \emph{regular} winning strategy can also be seen as a consequence of Rabin's basis theorem, as the set of strategies (for either player) forms an $\omega$-regular tree language.}
\begin{corollary}
    \label{cor:reg-prf-or-denier-win-strat}
    Any sequent is either $\nCRLA$-provable or has a $\denier$ winning strategy.
\end{corollary}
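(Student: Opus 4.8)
The plan is to derive this corollary directly from the finite-memory determinacy of the proof search game established in the preceding proposition, via a case analysis on which player wins. First I would invoke that proposition: from the given sequent, the proof search game is finite-memory determined, so exactly one of $\prover$ and $\denier$ has a finite-memory winning strategy. The corollary is then just a repackaging of these two cases, using the dictionary between strategies and proof objects sketched in the paragraph above its statement.

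In the case that $\denier$ has a winning strategy there is nothing further to prove: this is literally the second disjunct of the corollary. The substantive case is when $\prover$ has a finite-memory winning strategy $\sigma$. Here I would unfold $\sigma$ into a $\nLRLAhat$ preproof $P_\sigma$ in the obvious way: at each position, $\prover$'s prescribed move determines the inference step to apply, and the premisses of that step are exactly the positions to which $\denier$ may respond, so they become the children in $P_\sigma$. Since $\sigma$ stores only a bounded amount of information, only finitely many distinct pairs (sequent, memory state) can arise along $P_\sigma$; hence $P_\sigma$ has only finitely many distinct subproofs and is \emph{regular}. Moreover every infinite branch of $P_\sigma$ is, by construction, an infinite play consistent with $\sigma$, hence won by $\prover$, hence carries a progressing trace; so $P_\sigma$ is \emph{progressing}. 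Thus $P_\sigma$ is a regular progressing $\nLRLAhat$-preproof, i.e.\ a $\nCRLA$ proof of the sequent, yielding the first disjunct.

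The main obstacle is making the correspondence ``finite-memory $\prover$ strategy $\longleftrightarrow$ regular progressing preproof'' genuinely precise, rather than merely intuitive. Two points need care. First, regularity: one must check that bounding $\prover$'s memory really does collapse the a priori infinite strategy tree to finitely many distinct subtrees, which requires the state space (sequents occurring in the game) to itself be finite --- this is exactly the observation, noted just before the proposition, that the game is finite-state because a fixed preproof uses only finitely many sequents. Second, and more delicately, one must verify that the game's winning condition for $\prover$ --- ``the constructed branch has a progressing trace'' --- is \emph{definitionally identical} to the progress condition on $\nCRLA$-preproofs, so that a $\prover$-won play transfers verbatim to a progressing branch of $P_\sigma$. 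Both of these are asserted in the surrounding discussion, but spelling them out is where the real content lies; once they are in hand, the corollary follows immediately by determinacy and the case split above.
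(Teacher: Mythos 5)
Your proposal is correct and follows essentially the same route as the paper: the paper likewise derives the corollary from the finite-memory determinacy of the proof search game (the Büchi--Landweber proposition), noting that a finite-memory $\prover$ winning strategy unfolds into a regular preproof whose infinite branches are $\prover$-won plays and hence progressing, i.e.\ a $\nCRLA$ proof. The points you flag as needing care (finiteness of the sequent space for regularity, and the coincidence of the game's winning condition with the progress condition) are exactly the observations the paper makes in the surrounding discussion before stating the corollary.
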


So, for completeness, it suffices to expand each $\denier$ winning strategy from $e\seqar \Gamma$ into a countermodel, i.e.\ a word in $\wlang e \setminus  \wlang \Gamma$.


\subsection{Proof of completeness}
Before giving our main completeness proof, we make an important observation about proof search in the guarded fragment.

Let us call an infinite play of the proof search game (or infinite branch of $\nCRLA$) \defname{fair} if it has no eventually stable traces.
A $\prover$ strategy (or $\nCRLA$ preproof) is fair if each of its infinite plays (branches) is.
It is not hard to see that:

\begin{proposition}
\label{prop:inf-plays-from-guarded-seqs-are-fair}
     Every infinite play from a guarded sequent is fair.
\end{proposition}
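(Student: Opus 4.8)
The plan is to reduce the whole statement to a single combinatorial fact: \textbf{every infinite play from a guarded sequent uses the $\K$-rule infinitely often}. Once we have this, fairness is immediate. Indeed, let $(t_i)_{i<\omega}$ be any $B$-trace along such a play $B = (\infrule_i)_{i<\omega}$. By the definition of traces, whenever $\infrule_i$ is a $\K$-step we have $t_i = a\, t_{i+1}$, so $t_{i+1}$ is a \emph{proper} subformula of $t_i$ and in particular $t_i \neq t_{i+1}$; that is, the trace formula genuinely changes at every $\K$-step. Hence if there are infinitely many $\K$-steps then $(t_i)$ cannot be eventually constant, i.e.\ it is not eventually stable. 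As this holds for every trace, $B$ is fair.

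So it remains to prove the highlighted fact. First I would record that guardedness is preserved bottom-up by every rule on the closed sequents arising in proof search. The only nontrivial cases are the fixed-point unfoldings $\mu X f(X) \rightsquigarrow f(\mu X f(X))$ (and dually for $\nu$): guardedness of the principal formula ensures each occurrence of $X$ in $f$ lies under a letter, so every substituted copy of $\mu X f(X)$ sits under a letter and the result stays guarded; the remaining free variables were already guarded in $f$. For the $\K$-step $ae\seqar a\Gamma \rightsquigarrow e\seqar\Gamma$ closedness is what guarantees that stripping the leading $a$ cannot expose an unguarded free occurrence. Thus every sequent occurring along the play is guarded.

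Second, I would introduce a measure $m(e)$ counting the nodes of the syntax tree of $e$ reachable from the root without descending below any $a\cdot$: set $m(0)=m(1)=m(X)=m(af)=1$, $m(e_0+e_1)=1+m(e_0)+m(e_1)$, and $m(\mu X f)=m(\nu X f)=1+m(f)$; let $M$ of a sequent be the sum of $m$ over all its formulas. The key computation is that unfolding strictly decreases $m$: since all occurrences of $X$ in a guarded $\mu X f(X)$ lie under letters, substituting $\mu X f(X)$ for $X$ only alters the tree \emph{below} letters, whence $m(f(\mu X f(X))) = m(f(X)) < 1 + m(f(X)) = m(\mu X f(X))$, and symmetrically for $\nu$. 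The steps $\rr 0$, $\lr+$, $\rr+$ and $\wk$ also visibly decrease $M$, while $\id$ and $\lr 0$ terminate the branch and the $\K$-step is the only rule that need not decrease $M$. Since $M\in\Nat$, no infinite run of non-$\K$ steps is possible, so between any two consecutive $\K$-steps (and before the first) only finitely many steps occur; an infinite play must therefore apply $\K$ infinitely often, as required.

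I expect the main obstacle to be exactly the measure computation for unfolding in the second step: this is the one place where guardedness is indispensable, and it must be argued carefully that the substitution acts only below letters. Its necessity is witnessed by the unguarded $\nu X X$, for which $m(X[\nu X X/X]) = m(\nu X X)$ gives a non-terminating logical reduction with no $\K$-steps — precisely the phenomenon underlying the earlier incompleteness example.
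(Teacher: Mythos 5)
Your proof is correct and takes essentially the same approach as the paper: your measure $m(e)$, counting syntax-tree nodes reachable without descending below a letter, is exactly the size of the paper's ``largest $\Alphabet$-free context'' $f^\Alphabet$, and both arguments show that every rule except $\K$ strictly decreases this measure bottom-up (the fixed-point unfoldings doing so precisely because guardedness confines the substitution to positions below letters), forcing infinitely many $\K$-steps and hence fairness. Your explicit verification that guardedness is preserved up the play, and that traces must change at each $\K$-step, spells out details the paper leaves implicit, but the route is the same.
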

\begin{proof}
    Associate to each (guarded) expression $f$ the largest $\Alphabet$-free context $f^\Alphabet(\vec X)$ so that $f = f^\Alphabet(a_ig_i)_i$ for some $a_i \in \Alphabet$ and expressions $g_i$.
    For a sequent $S$ write $|S^\Alphabet|$ for the sum of the sizes of all $f^\Alphabet$s, for $f$ in $S$.
    Notice that each inference step except $\K$, bottom-up, reduces the $|\cdot^\Alphabet|$-value of the sequent.
    In particular this is the case for the fixed point rules by guardedness.
    Thus any infinite play must have infinitely many $\K$ steps, and thus have no eventually stable traces.
\end{proof}

We are now ready to prove our main completeness result:
\begin{proof}
[Proof of \cref{thm:ncrla-guarded-complete}]
By \cref{cor:reg-prf-or-denier-win-strat} assume WLoG there is a $\denier$ winning strategy $\delta$ from $e\seqar \Gamma$.
Let $B$ be an infinite play of $\delta$\anupam{why infinite? say more?}
%
%
and write $w_B\df \prod\limits_{\kk a \in B} a$, where the product ranges over all $\K$ steps in $B$, bottom-up.
Note that, by fairness, $w_B$ necessarily has length $\omega$.
We argue that $w_B \in \wlang e$ but $w_B \notin \wlang \Gamma$:
\begin{itemize}
    \item $w_B \in \wlang e$. By fairness the LHSs of $B$ induce an infinite play $\lambda_B$ of the evaluation puzzle from $(w_B,e)$, where the choices at sums are made according to $\delta$. 
    Now, as $\delta$ is $\denier$-winning, $B$ has no progressing traces, so the smallest infinitely often principal formula of its LHS trace must be a $\nu$ formula, which in turn must be the smallest expression occurring infinitely often along $\lambda_B$.
    Thus $\lambda_B$ forms a winning play from $(w_B,e)$, and so indeed $w_B \in \wlang e$ by \cref{thm:evaluation-theorem}.
    \item $w_B \notin \wlang \Gamma$.
    Any play $\pi$ of the evaluation puzzle from $(w_B,  f)$, for some $f\in \Gamma$, induces a RHS trace $\rho$ along $B$ that is infinitely often principal by fairness.
    Now, as $\delta$ is $\denier$-winning, $B$ has no progressing traces, so the smallest infinitely often principal formula of $\rho$ must be a $\mu$ formula, which in turn must the smallest expression occurring infinitely often along $\pi$.
    Thus $\pi$ is not a winning play from $(w_B,f)$.
    Since the choices of expression $f\in \Gamma$ and play $\pi$ from $(w_B,f)$ were arbitrary, we indeed have that $w_B \notin \wlang \Gamma$ by \cref{thm:evaluation-theorem}, as required.
    \qedhere
\end{itemize}
\end{proof}

\begin{remark}
    [`Uniqueness' of proofs]
    Note that \cref{prop:inf-plays-from-guarded-seqs-are-fair} essentially renders (pre)proofs of guarded sequents unique. 
    Since the logical rules are invertible they are also `confluent', and so any maximal (necessarily finite) derivation of logical rules with the same guarded end-sequent will have the same initial sequents, all of the form $ae\seqar a\Gamma$ by \cref{prop:inf-plays-from-guarded-seqs-are-fair}.
    In this sense we could have formulated our proof search game rather as a proof search \emph{puzzle}, from $\denier$'s point of view, essentially played against some/any (fair) $\prover$ strategy. 
    In the proof above this presented in the fact that we took an \emph{arbitrary} play $B$ of the $\denier$ winning strategy $\delta$.
\end{remark}

\section{Further work and discussion}
Before concluding this paper let us take some time to make further comments related to the work herein.


\todo{insert a subsection about complexity: proof checking and proof search.}

\subsection{Theories of greatest fixed points}\anupam{might be worth expanding this into a section in its own right.}
It would be natural to further investigate the extensions of $\RLA$ by greatest fixed points. 

On the one hand it would be natural to try show the completeness of (the guarded fragment) of $\nRLA$ for the model of $\omega$-regular languages $\wlang\cdot$. 
Indeed for this it seems promising to try adapt the techniques of \cref{sec:crla-to-rla} to the setting of greatest fixed points. 
Equational systems should now attribute a \emph{priority} to each variable, with acceptance given by a parity condition, similar to those for context-free $\omega$-grammars~\cite{Linna76,CG77,Niwinski84}, only restricted to right-linear syntax. 
From here the existence of appropriate extremal solutions to such systems requires a similar resolution of systems in $\RLA$, using Beki\'c's Theorem, \cref{lem:bekic}, only that the order of resolution should now respect the priority order.
Of course such development should be carried out in the guarded fragment, cf.~\cref{sec:guarded-fragment}, and in any case this is beyond the scope of this work.

On the other hand, adding the dual of $\mu$ motivates the consideration of the dual of $0$ and $+$ too, a native meet-semilattice structure $(\top,\cap)$. 
We might call the resulting class of algebras \emph{right-linear lattices}, in analogy with such extensions of Kleene algebras, Kleene lattices (see, e.g., \cite{kozen1994action,DasPous18:lka-pt}).
On the proof theoretic side such consideration seems natural, leading to genuinely two-sided sequents $\Gamma \seqar \Delta$ where the LHS cedent $\Gamma$ is interpreted as the \emph{meet} of its elements, similar to sequents of the usual classical sequent calculus.
Again this seems to generate a promising proof theory and, from a game theoretic point of view, meets turn our evaluation puzzles into evaluation \emph{games}, while the non-deterministic automata of \cref{cor:munu-exprs-compute-omega-reg-lang} become \emph{alternating}.
However let us point out that such a system is in fact \emph{unsound} for language models (or even relational models) as $a\top= \top$ becomes a theorem.
Proof theoretically this is due to the possibility of an \emph{empty} LHS; algebraically this corresponds to each letter being interpreted as (bounded) \emph{lattice} homomorphisms.
This is in contrast to the Kleene algebra setting where the extension by meets, Kleene lattices are still sound (but not complete) for language models.
Let us also point out that, in the case of \emph{multiple} formulas on the LHS, further problems arise for completeness too, e.g.\ because of $a\cap b = 0$ in the model $\lang\cdot$. 
For all these reasons we focussed strictly on co-intuitionistic sequents, with singleton LHS, throughout this work.

\subsection{Interpreting right-linear expressions as regular expressions}
Recalling \cref{sec:lhKAs}, each left-handed Kleene algebra can interpret $\mu$-expressions, due to the latter's capacity to resolve right-linear systems of equations (see, e.g., \cite{KozSil12:left-handed-completeness,KozSil20:left-handed-completeness}).
However $\mu$-expressions carry more information than equational systems, in particular an order of variable resolution, cf.~\cref{lem:bekic}.
Indeed this induces a somewhat simpler translation from $\mu$-expressions to regular expressions, equivalent in \emph{any} left-handed Kleene algebra:

   \begin{proposition}
    \label{prop:mu-exprs-to-reg-exprs}
        For $\mu$-expressions $e$ over variables $\vec X$ there are (closed) regular expressions $e^X$, for $X \in \vec X$, and $e^1$ s.t.\ every $\lhKA$ satisfies $ e =  {\sum\limits_{X \in \vec X} e^X  X + e^1}$.
    \end{proposition}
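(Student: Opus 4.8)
The plan is to proceed by induction on the structure of the $\mu$-expression $e$, establishing the claimed identity using only the $\lhKA$ axioms (recall from \cref{prop:lkas-are-rlas} that every $\lhKA$ is an $\RLA$, so that $\mu X e'(X)$ denotes the least prefixed point of $X\mapsto e'(X)$). The structural fact being exploited is that right-linearity makes every expression \emph{affine} in its free variables: a variable may only ever be prefixed by letters, and so never appears nested inside a product on the right. All coefficients we produce will be \emph{closed} regular expressions, a property manifestly preserved by the operations $0,1,a,+,\cdot,{}^*$ used to build them.

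The base and non-fixed-point cases are routine applications of semiring distributivity. For $e=0$ take all coefficients $0$; for $e=1$ take all $e^X=0$ and $e^1=1$; for $e=Y$ take $e^Y=1$, $e^X=0$ for $X\neq Y$ and $e^1=0$, using that $1$ is a left identity for $\cdot$. For $e=e_0+e_1$, the inductive hypothesis gives $e_0=\sum_X c_X X+c_1$ and $e_1=\sum_X d_X X+d_1$, and we set $e^X\df c_X+d_X$, $e^1\df c_1+d_1$, recombining via distributivity of $\cdot$ over $+$. For $e=ae'$, writing $e'=\sum_X d_X X+d_1$ by the inductive hypothesis, associativity and left-distributivity give $ae'=\sum_X(ad_X)X+ad_1$, so we set $e^X\df ad_X$ and $e^1\df ad_1$.

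The crux is the case $e=\mu X e'(X,\vec X)$. Applying the inductive hypothesis to $e'$ over the variables $X,\vec X$ yields closed regular expressions $c$ (coefficient of $X$), $d_Y$ (coefficient of $Y\in\vec X$) and $d_1$ (constant) with $e'=cX+\sum_{Y\in\vec X}d_Y Y + d_1$ in any $\lhKA$. Writing $b\df\sum_{Y\in\vec X}d_Y Y+d_1$, this is an identity under every valuation, so the operator $X\mapsto e'(X)$ coincides with $X\mapsto cX+b$ (with $\vec X$ held fixed) and their least prefixed points agree. It thus suffices to show $\mu X(cX+b)=c^*b$. That $c^*b$ is a fixed point is the computation $c(c^*b)+b=(cc^*+1)b=(1+cc^*)b=c^*b$, using the left-handed unfolding axiom $c^*=1+cc^*$; this gives $\mu X(cX+b)\le c^*b$. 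Conversely, if $cY+b\le Y$ then $b\le Y$ and $cY\le Y$, whence the left-handed star-induction axiom yields $c^*Y\le Y$ and so $c^*b\le c^*Y\le Y$; applied to $Y=\mu X(cX+b)$ this gives $c^*b\le\mu X(cX+b)$. Hence $\mu X(cX+b)=c^*b$, and distributing $c^*$ over $b$ gives $e=\sum_{Y\in\vec X}(c^*d_Y)Y+c^*d_1$, so that $e^Y\df c^*d_Y$ and $e^1\df c^*d_1$ are the desired closed regular expressions (the bound variable $X$ correctly vanishing from the free variables of $e$).

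The main obstacle, such as it is, lies entirely in this fixed-point case: one must verify that only the two genuinely left-handed axioms (unfolding and left-handed star induction) are invoked, never their absent right-handed companions, and that the inductive identity $e'=cX+b$ is a bona fide equation valid for all valuations, so that it may legitimately be substituted at the least prefixed point. The remaining cases are bookkeeping with semiring distributivity, together with the observation that closedness of regular expressions is preserved throughout.
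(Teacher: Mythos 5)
Your proof is correct and follows essentially the same route as the paper: structural induction, with the fixed-point case resolved by collecting $e'$ into the form $cX+b$ and using the identity $\mu X(cX+b)=c^*b$ in any $\lhKA$. The only differences are that you additionally derive that identity from the two left-handed axioms (the paper simply cites it as a known law of $\lhKA$s) and you spell out the $ae'$ case, which the paper leaves implicit.
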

    \begin{proof}
        We proceed by induction on the structure of $e$.
        \begin{itemize}
        \item ($0$ already has the required form.)
            \item ($1$ already has the required form.)
            \item ($X$ already has the required form.)
            \item If $e= e_0 + e_1$ then we have,
            \[
            \begin{array}{r@{\ = \ }ll}
               e+f  & \sum\limits_{X\in \vec X}e_0^X X + e_0^1 
               +
               \sum\limits_{X\in\vec X}e_1^X X + e_1^1 & \text{by inductive hypothesis}
               \\
                 & \sum\limits_{X\in \vec X} (e_0^X + e_1^X)  X + (e_0^1 + e_1^1) & \text{by laws of $\lhKA$}
            \end{array}
            \]
            which has the required form.
            \smallskip
            \item If $e = \mu Y f(Y)$ then we have,
            \[
            \begin{array}{r@{\ = \ }ll}
               f(Y)  & \sum\limits_{X\in \vec X} f^X  X + f^Y  Y + f^1 & \text{by inductive hypothesis} \\
               \mu Y f(Y) & \mu Y\left(\sum\limits_{X\in \vec X} f^X \cdot X + f^Y \cdot Y + f^1\right)  & \text{by monotonicity of $\mu$} \\
               \noalign{\smallskip}
               & (f^Y)^*\left(\sum\limits_{X\in \vec X} f^X \cdot X + f^1\right) & \text{by laws of $\lhKA$} \\
               \noalign{\smallskip}
               & \sum\limits_{X\in \vec X} (f^Y)^*f^X \cdot X + (f^Y)^* f^1 & \text{by laws of $\lhKA$}
            \end{array}
            \]
            which has the required form. \qedhere
        \end{itemize}
    \end{proof}

Note that, for $\omega$-regular languages, we can extend the translation above to $\mu\nu$-expressions too, by exploiting the equation $\nu X (e + fX) = f^*e + f^\omega$ in $\lang \cdot$.
Thus, by reduction to the usual fixed points $*$ and $\omega$, we can deduce an (expected) property of closure ordinals for $\mu\nu$-expressions:\anupam{I think this holds in arbitrary $\nRLA$s too. E.g.\ in language model $f^\omega$ is either $0$ or $\top$ (if $f$ has empty word)}

\begin{corollary}
\label{cor:closure-ordinals-omega}
Each $\mu\nu$-expression $e(X)$ computes an operator $A\mapsto \lang {e(A)}$ whose least and greatest fixed points have closure ordinal $\leq \omega$.
\end{corollary}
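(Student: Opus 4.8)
The plan is to follow the hint preceding the statement: reduce the operator $A\mapsto \lang{e(A)}$ on $\pow{\Alphabet^{\leq\omega}}$ to a \emph{linear} operator and read off its extremal fixed points explicitly in terms of $*$ and $\omega$, whose closure ordinals are already $\leq\omega$. Throughout the argument is semantic, carried out in the concrete model $\lang\cdot$, so I never need the normalising equations to be \emph{provable} in any theory, only to hold in $\lang\cdot$.

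First I would put $e(X)$ into a form that is linear in the fixed-point variable $X$ (any other free variables being treated as fixed parameters). Extending \cref{prop:mu-exprs-to-reg-exprs} to $\mu\nu$-expressions via the equation $\nu X(d+fX)=f^*d+f^\omega$, which holds in $\lang\cdot$, every $\mu\nu$-expression $e(X)$ is equivalent in $\lang\cdot$ to $g\bullet X + h$, where $g$ is a regular expression over finite words (the coefficient of $X$) and $h$ is a closed expression collecting the constant term, all $\omega$-power contributions being absorbed into $h$. The coefficient $g$ necessarily denotes finite words, since only these can be left-concatenated with an arbitrary tail $A$. Semantically this says the operator factors as
\[
\Phi : A \longmapsto \lang{e(A)} = L\cdot A \, \cup\, R, \qquad L\df \lang g\subseteq \Alphabet^*,\quad R\df \lang h \subseteq \Alphabet^{\leq\omega}.
\]

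Next I would compute the iterates of $\Phi$ directly. From $\Phi^0(\emptyset)=\emptyset$ one gets $\Phi^n(\emptyset)=\left(\bigcup_{k<n}L^k\right)\cdot R$, an increasing sequence whose union over $n<\omega$ is $L^*R$; since left-concatenation distributes over unions, $\Phi$ is continuous, so the least fixed point is $L^*R$ and is reached at stage $\omega$. Dually, from $\Phi^0(\top)=\Alphabet^{\leq\omega}$ one gets $\Phi^n(\top)=L^n\Alphabet^{\leq\omega}\,\cup\,\left(\bigcup_{k<n}L^k\right)R$. Using that the $\omega$-power itself closes at $\omega$, i.e.\ $\bigcap_n L^n\Alphabet^{\leq\omega}=L^\omega$ (and $L^\omega=\Alphabet^{\leq\omega}$ when $\epsilon\in L$, cf.\ the marginal remark), I would show $\bigcap_n\Phi^n(\top)=L^*R\cup L^\omega$, which is exactly the greatest fixed point by the equation above, whence the greatest fixed point is also reached at $\omega$.

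The only genuinely delicate point, and the one I expect to be the main obstacle, is this last intersection: a \emph{decreasing} sequence $L^n\Alphabet^{\leq\omega}$ and an \emph{increasing} sequence $\left(\bigcup_{k<n}L^k\right)R$ are interleaved, and $\Phi$ fails to be co-continuous in general, so $\bigcap_n\Phi^n(\top)$ is \emph{not} simply the sum of the two pointwise limits. This is handled by a short case analysis rather than abstract nonsense: given $w\in\bigcap_n\Phi^n(\top)$, either $w\in L^n\Alphabet^{\leq\omega}$ for every $n$, placing $w\in L^\omega$; or $w\notin L^N\Alphabet^{\leq\omega}$ for some least $N$, in which case membership of $w$ in $\Phi^{N+1}(\top)$ forces $w\in\left(\bigcup_{k\leq N}L^k\right)R\subseteq L^*R$. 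The converse inclusions $L^*R\subseteq\bigcap_n\Phi^n(\top)$ and $L^\omega\subseteq\bigcap_n\Phi^n(\top)$ are immediate. Combined with the standard facts that $*$ and $\omega$ each close at $\omega$ for regular $L$, this yields closure ordinal $\leq\omega$ for both extremal fixed points.
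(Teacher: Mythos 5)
Your proposal is correct and takes essentially the same route as the paper: the paper's own proof likewise extends \cref{prop:mu-exprs-to-reg-exprs} to $\mu\nu$-expressions via the equation $\nu X(e+fX) = f^*e + f^\omega$ in $\lang\cdot$ and then reduces the closure-ordinal claim to the corresponding standard facts for $*$ and $\omega$ (your explicit computation of the iterates and the case analysis for the interleaved intersection just fill in details the paper leaves implicit). The one delicate point glossed in both treatments is the degenerate case $\epsilon\in\lang g$, where the quoted $\nu$-equation fails, but there the operator satisfies $\Phi(\top)=\top$, so the greatest fixed point is $\top$ and is reached at stage $0$, and the conclusion holds trivially.
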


\anupam{could say something here about interpretation of $\mu\nu$-expressions over languages of finite words still gives only regular languages.}


\subsection{What about context-free grammars and Chomsky Algebras?}
By extending $\mu$-expressions by arbitrary products, we have a well-known notation for context-free grammars that has also been studied from a logical point of view, in the guise of \emph{Chomsky algebras} and (standard) \emph{Park $\mu$-semirings} \cite{Leiss92:ka-with-rec,GraHenKoz13:inf-ax-cf-langs,EsikLeiss05:alg-comp-semirings,Leiss16:matrices-over-mu-cont-chom-alg}.
Such models can resolve even non-linear systems of equations, mimicking context-free grammars.
As universality of context-free languages is already $\Pi^0_1$-complete, there cannot be any recursive (or even recursively enumerable) axiomatisation for the equational theory of such $\mu$-expressions.
However Grathwohl, Henglein and Kozen give an infinitary axiomatisation in \cite{GraHenKoz13:inf-ax-cf-langs} by adding to the axiomatisation of idempotent semirings a notion of \emph{$\mu$-continuity}: $e \mu Xf(X) g = \sum\limits_{n<\omega} e f^n(0) g$.

Such continuity principles have been studied proof theoretically in the context of Kleene algebras, as well as extensions by meets and residuals, where they correspond to $\omega$-branching infinitary systems~\cite{Palka07}.
These in turn correspond to non-wellfounded proof systems whose proofs are not necessarily regular \cite{DasPous18:lka-pt}.
It would thus be interesting to investigate extensions of the hypersequential system $\HKA$ from \cite{DasPou17:hka} by least fixed points to attain a similar adequacy result.

Furthermore, extending $\HKA$ by least and \emph{greatest} fixed points gives a system for reasoning about $\omega$ context-free grammars.
Like in this work, completeness must may be driven by a game determinacy principle, but non-wellfoundedness means that the proof search game is no longer finite state: in general, \emph{analytic} determinacy is required, a principle beyond ZFC. 
This technique has nonetheless been employed in recent works on non-wellfounded proof theory~\cite{DDS23,DasGieMar23:igl-preprint}.
From here it would be interesting to apply \emph{projection} arguments translating non-wellfounded proofs to well-founded infinitely branching ones cf.~\cite{Studer08,DDS23}.
This would effectively yield an (infinite) axiomatisation of the theory of $\omega$ context-free languages over $\mu\nu$-expressions, extending the results of \cite{GraHenKoz13:inf-ax-cf-langs} to the infinite word setting.

\anupam{I commented further questions below}
\abhishek{I added a few questions there}


    
    
    
    




\section{Conclusion}
In this work we initiated an investigation of the proof theory of regular languages, over a syntax for right-linear grammars instead of regular expressions (cf.~\cite{DasPou17:hka,DasPous18:lka-pt,DasDouPou18:lka-completeness,HazKup22:transfin-HKA}). 
Using $\mu$-expressions to describe grammars, the corresponding algebraic structures, $\RLA$, are similar to (but coarser than) the Park $\mu$-semirings and Chomsky algebras for context-free grammars (cf.~\cite{GraHenKoz13:inf-ax-cf-langs,Leiss16:matrices-over-mu-cont-chom-alg,EsikLeiss05:alg-comp-semirings}).\todo{say something about naturality of the algebraic structure?}
We defined a cyclic proof system $\CRLA$ for our syntax, and showed its soundness and completeness for the model $\lang\cdot$ of regular languages.
Compared to previous approaches, our right-linear syntax significantly simplifies the underlying proof theory, requiring only co-intuitionistic sequents instead of the hypersequents from \cite{DasPou17:hka,DasDouPou18:lka-completeness}.

We showed that $\RLA$ is also complete for $\lang\cdot$, by extracting inductive invariants from $\CRLA$ proofs. 
This required significant bootstrapping of $\RLA$ and its capacity to define least solutions to (right-linear) equational systems.
Here the expressivity of $\mu$-expressions along with the simplicity of co-intuitionistic sequents, due to right-linearity, facilitated the quick obtention of this completeness result, as compared to previous approaches, e.g.~\cite{KozSil12:left-handed-completeness,KozSil20:left-handed-completeness,DasDouPou18:lka-completeness}.

We studied the extension of $\CRLA$ by \emph{greatest} fixed points, $\nCRLA$, and compared its theory to the extension of $\lang\cdot$ to $\omega$-regular languages.
Employing game theoretic techniques, we showed soundness of $\nCRLA$ for $\lang\cdot$, and finally completeness for the guarded fragment.
A distinguishing feature from the world of regular expresions is that $\omega$-languages already form a right-linear algebra, and so our proof theoretic treatment of both regular and $\omega$-regular languages is more uniform than existing approaches, cf.~\cite{CraLauStr15:omega-regular-algebras,HazKup22:transfin-HKA}.

\bibliographystyle{alpha}
\bibliography{biblio}

\end{document}